\newcommand{\Hbf}{{\boldsymbol{H}}}
\newcommand{\Obf}{{\boldsymbol{O}}}
\newcommand{\Kbf}{{\boldsymbol{K}}}
\newcommand{\Abf}{{\boldsymbol{A}}}
\newcommand{\Ibf}{{\boldsymbol{I}}}
\newcommand{\sigmabf}{{\boldsymbol{\sigma}}}
\newcommand{\mubf}{{\boldsymbol{\mu}}}
\newcommand{\lind}{{\mathcal{L}}}
\newcommand{\EV}{{\mathbb{E}}}
\newcommand{\bTr}{{\overline{\Tr}}}
\newcommand{\aloc}{{a_{\mathrm{loc}}}}
\newcommand{\alocPower}[1]{{a_{\mathrm{loc}}^{#1}}}
\newcommand{\aac}{{a_{\mathrm{ac}}}}
\newcommand{\Hloc}{{\norm{\Hbf}_{\mathrm{loc}}}}
\newcommand{\Hglo}{{\norm{\Hbf}_{\mathrm{glo}}}}
\newcommand{\HgloPower}[1]{{\norm{\Hbf}_{\mathrm{glo}}^{#1}}}
\newcommand{\HlocPower}[1]{{\norm{\Hbf}_{\mathrm{loc}}^{#1}}}
\newcommand{\id}{{\boldsymbol{I}}}
\newcommand{\labs}[1]{\left\vert {#1} \right\vert}
\newcommand{\sss}{\scriptscriptstyle}
\newcommand{\adj}[3]{\raisebox{#1}{\scalebox{#2}{{#3}}}}
\renewcommand{\vec}{\bm}
\newcommand{\CA}{\mathcal{A}}
\newcommand{\CB}{\mathcal{B}}
\newcommand{\CC}{\mathcal{C}}
\newcommand{\CD}{\mathcal{D}}
\newcommand{\CE}{\mathcal{E}}
\newcommand{\BE}{\mathbb{E}}
\newcommand{\CL}{\mathcal{L}}
\newcommand{\CM}{\mathcal{M}}
\newcommand{\CO}{\mathcal{O}}
\newcommand{\CR}{\mathcal{R}}
\newcommand{\CT}{\mathcal{T}}
\newcommand{\CW}{\mathcal{W}}
\newcommand{\vA}{\bm{A}}
\newcommand{\vB}{\bm{B}}
\newcommand{\vC}{\bm{C}}
\newcommand{\vH}{\bm{H}}
\newcommand{\vI}{\bm{I}}
\newcommand{\vK}{\bm{K}}
\newcommand{\vO}{\bm{O}}
\newcommand{\vsigma}{\bm{ \sigma}}
\newcommand{\vrho}{\bm{ \rho}}
\renewcommand{\L}{\left}
\newcommand{\R}{\right}
\newcommand{\symg}{{\mathfrak{S}}}
\DeclareMathOperator{\Unif}{Unif}
\newcommand{\vertiii}[1]{{\left\vert\kern-0.25ex\left\vert\kern-0.25ex\left\vert #1 \right\vert\kern-0.25ex\right\vert\kern-0.25ex\right\vert}}
\newcommand{\e}{\mathrm{e}}
\newcommand{\rd}{\mathrm{d}}
\newcommand*{\poly}{\mathrm{Poly}}
\newcommand{\indicator}{\mathbbm{1}}
\newcommand{\Lword}[1]{\text{Lindbladian}}
\newcommand{\undersetbrace}[2]{ \underset{#1}{\underbrace{#2}}}
\newcounter{pathindex}
\newcounter{pathlength}
\def\xspacing{4em}
\def\yspacing{2em}
\def\radius{0.3em}
\newcommand{\drawDiagram}[1]{
\begin{tikzpicture}
\setcounter{pathindex}{1}
\setcounter{pathlength}{0}
\foreach  \p/\t in #1
{
    \setcounter{pathlength}{0}
    \foreach \j in \p
    {
        \draw[fill=white] (\value{pathindex}*\xspacing,\value{pathlength}*\yspacing) circle (\radius);
        \node at (\value{pathindex}*\xspacing-4*\radius,\value{pathlength}*\yspacing) {\j} ;
        \ifthenelse{\value{pathlength}>0}{
            \draw (\value{pathindex}*\xspacing,\value{pathlength}*\yspacing-\radius) -- (\value{pathindex}*\xspacing,\value{pathlength}*\yspacing-\yspacing+\radius);
        }{}
        \stepcounter{pathlength}
    }
    \ifthenelse{\t = 0}
    {
        \draw[->] (\value{pathindex}*\xspacing,\value{pathlength}*\yspacing-\yspacing+\radius) -- (\value{pathindex}*\xspacing,\value{pathlength}*\yspacing-\yspacing/2);
    }{}
    \ifthenelse{\t = \value{pathindex}}
    {
        \draw[fill] (\value{pathindex}*\xspacing,\value{pathlength}*\yspacing-\yspacing) circle (\radius);
    }
    {}
    \ifthenelse{\t > \value{pathindex}}
    {
        \draw[dashed] (\value{pathindex}*\xspacing+\radius,\value{pathlength}*\yspacing-\yspacing) -- (\value{pathindex}*\xspacing+\radius+\xspacing/4,\value{pathlength}*\yspacing-\yspacing) -- (\value{pathindex}*\xspacing+\radius+\xspacing/4,\value{pathindex}*\yspacing/4 - \t*\yspacing/4) -- (\value{pathindex}*\xspacing/2 + \t*\xspacing/2 + \xspacing/8, \value{pathindex}*\yspacing/2 - \t*\yspacing/2) -- (\t*\xspacing, 0);
    }
    {}
    \stepcounter{pathindex}
}
\end{tikzpicture}
}
\newcommand{\drawDiagramIndexed}[1]{
\begin{tikzpicture}
\setcounter{pathindex}{1}
\setcounter{pathlength}{0}
\foreach  \p/\t in #1
{
    \setcounter{pathlength}{0}
    \foreach \j/\l in \p
    {
        \draw[fill=white] (\value{pathindex}*\xspacing,\value{pathlength}*\yspacing) circle (\radius);
        \node at (\value{pathindex}*\xspacing-4*\radius,\value{pathlength}*\yspacing) {\j} ;
        \ifthenelse{\value{pathlength}>0}{
            \draw (\value{pathindex}*\xspacing,\value{pathlength}*\yspacing-\radius) -- (\value{pathindex}*\xspacing,\value{pathlength}*\yspacing-\yspacing+\radius);
            \node at (\value{pathindex}*\xspacing+2*\radius,\value{pathlength}*\yspacing-\yspacing/2) {\l};
        }{}
        \stepcounter{pathlength}
    }
    \ifthenelse{\t = 0}
    {
        \draw[->] (\value{pathindex}*\xspacing,\value{pathlength}*\yspacing-\yspacing+\radius) -- (\value{pathindex}*\xspacing,\value{pathlength}*\yspacing-\yspacing/2);
    }{}
    \ifthenelse{\t = \value{pathindex}}
    {
        \draw[fill] (\value{pathindex}*\xspacing,\value{pathlength}*\yspacing-\yspacing) circle (\radius);
    }
    {}
    \ifthenelse{\t > \value{pathindex}}
    {
        \draw[dashed] (\value{pathindex}*\xspacing+\radius,\value{pathlength}*\yspacing-\yspacing) -- (\value{pathindex}*\xspacing+\radius+\xspacing/4,\value{pathlength}*\yspacing-\yspacing) -- (\value{pathindex}*\xspacing+\radius+\xspacing/4,\value{pathindex}*\yspacing/4 - \t*\yspacing/4) -- (\value{pathindex}*\xspacing/2 + \t*\xspacing/2 + \xspacing/8, \value{pathindex}*\yspacing/2 - \t*\yspacing/2) -- (\t*\xspacing, 0);
    }
    {}
    \stepcounter{pathindex}
}
\end{tikzpicture}
}
\DeclarePairedDelimiter{\bb}{\llbracket}{\rrbracket}
\newcommand{\linethrough}{\mathpalette\@thickbar}
\newcommand{\@thickbar}[2]{{#1\mkern0mu\vbox{
    \sbox\z@{$#1#2\mkern-1.5mu$}%
    \dimen@=\dimexpr\ht\tw@-\ht\z@+2\p@\relax 
    \hrule\@height0.5\p@ 
    \vskip\dimen@
    \box\z@}}
}
\newtheorem{theorem}{Theorem}
\newtheorem{lemma}[theorem]{Lemma}
\newtheorem{definition}[theorem]{Definition}
\newtheorem{condition}[theorem]{Condition}
\newtheorem{proposition}[theorem]{Proposition}
\newenvironment{customthm}[1]
{\innercustomthm}
  {\endinnercustomthm}
\numberwithin{equation}{section}
\numberwithin{theorem}{section}
\title{Optimizing random local Hamiltonians by dissipation}
\author[1,2]{Joao Basso\thanks{joao.basso@berkeley.edu}}
\author[3,4]{Chi-Fang Chen\thanks{achifchen@gmail.com}}
\author[1]{Alexander M. Dalzell\thanks{dalzel@amazon.com}}
\affil[1]{\footnotesize AWS Center for Quantum Computing, Pasadena, CA, USA}
\affil[2]{\footnotesize Department of Mathematics, University of California, Berkeley, CA, USA}
\affil[3]{\footnotesize University of California, Berkeley, CA, USA}
\affil[4]{\footnotesize Massachusetts Institute of Technology,
Cambridge, USA}
\date{\today}
\begin{document}

\maketitle

\begin{abstract}

A central challenge in quantum simulation is to prepare low-energy states of strongly interacting many-body systems. In this work, we study the problem of preparing a quantum state that optimizes a random all-to-all, sparse or dense, spin or fermionic $k$-local Hamiltonian. We prove that a simplified quantum Gibbs sampling algorithm achieves a $\Omega(\frac{1}{k})$-fraction approximation of the optimum, giving an exponential improvement on the $k$-dependence over the prior best (both classical and quantum) algorithmic guarantees. Combined with the circuit lower bound for such states, our results suggest that finding low-energy states for sparsified (quasi)local spin and fermionic models is quantumly easy but classically nontrivial. This further indicates tha 
quantum Gibbs sampling may be a suitable metaheuristic for optimization problems. 
\end{abstract}

\vspace{0in}
\setlength{\columnsep}{25pt}
\begin{multicols}{2}
{\small \tableofcontents}
\end{multicols}
\thispagestyle{empty}

\clearpage

\setcounter{page}{1}
\section{Introduction}
The low-energy properties of many-body quantum systems are central targets of quantum and classical simulation algorithms. While classical simulation algorithms have found extensive practical usage, there has not been a go-to quantum algorithmic principle for preparing ground states or thermal states at large scales. Due to the limited size of experimental or numerical data, it is difficult to obtain an empirical evaluation of the performance of any new quantum algorithmic proposal. Ideally, then, a new proposal should be accompanied by a rigorous mathematical proof of success. However, the QMA-hardness~\cite{kitaev2002classical} of worst-case instances precludes a general theoretical guarantee; to make progress, one must first restrict to a certain class of instances.

Low-energy state preparation can be precisely stated as a quantum optimization problem: given the description of a many-body Hamiltonian, efficiently prepare a quantum state such that the energy is as \textit{high} as possible.\footnote{Note the sign flip; maximizing the energy is equivalent to minimizing the energy of the negated Hamiltonian.} For simplicity, we will focus on models with as little spatial structure as necessary: random, few-body Hamiltonian problems with all-to-all connectivity, where the ensemble is fully specified by the system size $n$, the locality parameter $k$ (i.e., the number of sites on which each Hamiltonian term acts nontrivially), the sparsity $m$ (i.e., the total number of terms), and whether the degree of freedom at each site is spin or fermion (\cref{sec:models}). These models are natural quantum analogues of classical spin glass models~\cite{SK_model_75,derrida1980randomEnergyModel}, the crucial difference being that the random terms in the quantum models do not all commute. While these models lack structure---e.g., spatial locality and symmetry---seen in real-world chemistry or condensed matter systems, we hope that the average-case nature of this problem still covers some generic aspects of strongly interacting quantum systems and, at the very least, serves as a concrete theoretical testbed of new algorithmic ideas that may extend beyond quantum simulation.

So far, very little is known about the low-energy or low-temperature states of these random Hamiltonians, unlike their classical cousins. In physics terms, random spin models exhibit signs of glassy behaviors at low temperatures just like classical glasses, while random fermionic models may not~\cite{baldwin2020quenched,anshuetz2024strongly}. In terms of optimization, for random $\CO(1)$-local spin models, a constant-ratio approximation to the optimal energy is achieved by product states, which can be computed and represented efficiently  on a classical computer~\cite{Harrow2017extremaleigenvalues}. Meanwhile, for random $\CO(1)$-local fermionic models, the only quantum algorithm  with rigorous performance bounds is due to~\cite{hastings2022optimizing} (whose guarantee is improved in~\cite{herasymenko2023optimizing}), which also efficiently prepares a quantum state achieving a constant-ratio approximation. Interestingly, unlike the spin case, there is no known efficient classical ansatz with matching performance for the random fermionic model. 

Nevertheless, in both cases where an algorithmic guarantee on the approximation has been shown, the guarantee decays \textit{exponentially} with the locality $k$, if we regard $k$ as a growing parameter, suggesting that these optimization strategies become rapidly less effective the more the model becomes nonlocal. It is perhaps a surprising fact, then, that in the fully nonlocal, random-matrix regime where $k\sim \Theta(n)$, the sparsified random Pauli (spin) models~\cite{chen2023sparse} can be optimized to arbitrarily good energies\footnote{More precisely, the guaranteed energy improves as the sparsity $m$ (number of Pauli terms) is made larger.  } merely by performing phase estimation on the maximally mixed state, a remarkably simple quantum algorithm. This algorithm works because the spectral density of sparse nonlocal models approximately follows a semicircle distribution, implying that a constant fraction of the energy levels lie within a constant-ratio approximation of the optimum. Yet, the states that achieve the constant-ratio approximation are highly entangled, owing to a circuit lower bound in \cite{chen2023sparse}. Unfortunately, the phase estimation strategy is no longer efficient when the locality $k$ is reduced; the spectral density departs from the semicircle distribution and develops long tails (the critical locality where this transition happens is argued to be $k=\Theta(\sqrt{n})$~\cite{erdHos2014phase}). A  motivation of this work is to understand how the random matrix features and algorithmic efficiency interpolate between the local and nonlocal models and how that contrasts with classical intuition.

In classical optimization, a very successful metaheuristic is \textit{simulated annealing}, an adaptation of the Markov chain Monte Carlo (MCMC) algorithm with an annealing schedule that gradually ``lowers the temperature,'' as in metallurgy. Incidentally, there has been a new line of ``quantum MCMC'' algorithms inspired by the dissipative cooling process in nature~\cite{terhal2000problem,temme2011quantum, yung2012quantum,moussa2019low,chen2023QThermalStatePrep,shtanko2021preparing,chen2021fast,wocjan2023szegedy,chen2023efficient, jiang2024quantum,ding2024single,gilyen2024quantum}. While their performance (the \textit{mixing time} problem) on interesting real-world instances remains open, they are known to prepare nontrivial Gibbs states and ground states in finely tuned Hamiltonians~\cite{chen2024local,bergamaschi2024quantum,rajakumar2024gibbs}. It is tempting to ask,
\begin{align*}
    \textit{Can quantum Gibbs samplers efficiently optimize quantum Hamiltonians?}
\end{align*}
The interest in this question is not only about finding better quantum algorithms for solving quantum Hamiltonians, but also about whether quantum MCMC algorithms are a valid general-purpose metaheuristic for solving strongly interacting quantum systems. Physically, the close connection to Nature's thermalization also hints at how such a strongly interacting system may behave in a low-temperature environment.

In this work, we show that a simplified version of quantum MCMC readily achieves a performance that exceeds prior classical and quantum provable guarantees for both spin and fermionic $k$-local models (and their sparsifications) by substantially improving the $k$-dependence of the achieved energy (\cref{thm:intro_algo_performance}). This refined $k$-dependence points to a particularly favorable parameter regime where finding an approximate solution is quantumly easy yet classically nontrivial: sparsified quasilocal spin and local fermionic models, where the approximate solutions must be very entangled in the sense of quantum circuit lower bounds, yet the quantum dissipative algorithm remains efficient and effective.

\section{The $k$-local models}\label{sec:models}
The main models we wish to optimize are random, all-to-all connected, $k$-local spin and fermionic systems on $n$ sites. We will be interested in both the $n$ and the $k$ dependence. While the models may differ in their physical properties, we highlight in~\cref{sec:general_requirement} the relevant features shared among the above models that are the only requirements for our algorithmic guarantee. 

\paragraph{Random Paulis}---
Consider a sum over all $k$-body Paulis on $n$-qubits~\cite{erdHos2014phase},
\begin{align}
    \Hbf = \sum_{\sigmabf:\labs{S(\sigmabf)}=k}g_{\sigmabf} \sigmabf\label{eq:random_Pauli},
\end{align}
where $g_{\sigmabf}$ are i.i.d.~Gaussians with mean 0 and variance $3^{-k} \binom{n}{k}^{-1}$. Here, $\sigmabf$ is an $n$-qubit Pauli operator, and $S(\sigmabf)$ denotes the qubit indices on which $\sigmabf$ acts nontrivially, so that $|S(\sigmabf)|$ is the weight of the Pauli operator. These models are natural quantum analogues of spin glass models~\cite{SK_model_75, derrida1980randomEnergyModel}, where instead of noncommuting Paulis, we have commuting $\vsigma^z$s. 
In the physics literature, these models exhibit signatures of spin-glass phase transition at low enough temperatures~\cite{Sachdev1993,baldwin2020quenched}; thus, preparing states (or compact classical descriptions of states) that achieve a $(1-o(1))$-ratio  approximation of the ground state energy might be challenging for both classical and quantum computers. Nevertheless, making a more refined complexity theoretical characterization at low-temperature remains open. As of today, there is an efficient classical product state that achieves a $\frac{1}{\e^{\Omega(k)}}$-ratio approximation of the optimal energy, and it was not known whether quantum algorithms could be qualitatively better, in terms of the $k$-dependence. Other models of quantum optimization problems that we do not discuss include, e.g., transverse field Ising models~\cite{leschke2021existence} and quantum Max-Cut~\cite{lee2022optimizing,king2023improved,lee2024improved}.

\paragraph{Random fermions}---
The Sachdev--Ye--Kitaev (SYK) model is a random $k$-body fermionic Hamiltonian~\cite{Sachdev1993,Kitaev2015}: for even $k$, 
\begin{align}
    \Hbf = (i)^{k/2} \sum_{S:\labs{S}=k} g_{S}\chi_S, \quad \text{where}\quad \chi_S:= \prod_{i \in S} \chi_{i} \quad \text{and}\quad \chi_i \chi_j + \chi_j \chi_i = 2 \delta_{ij} \label{eq:random_Fermion}
\end{align}
and $g_S$ are i.i.d.~Gaussians with mean 0 and variance $\binom{n}{k}^{-1}$. Here, $\chi_i$ denotes a Majorana operator on site $i \in [n]$,  and $S$ denotes a subset of $[n]$. Originally, the SYK model (and its cousins) was proposed as a toy model for chaotic many-body systems or quantum black holes that thermalize even at low temperatures. 
More quantitative evidence suggests that the model most likely does not exhibit a spin-glass phase transition at very low temperatures~\cite{anshuetz2024strongly,baldwin2020quenched}. For our purposes, we also regard it as a minimal model of interacting fermions~\cite{hastings2022optimizing,herasymenko2023optimizing}. 
Despite the wealth of physics arguments (e.g.,~\cite{maldacena2016remarks}), the best provable efficient algorithm~\cite{hastings2022optimizing,herasymenko2023optimizing} achieves a $\Omega(\frac{1}{e^{\Omega(k)}})$-fraction\footnote{The $k$ dependence is implicit in the result of~\cite[Lemma 27]{herasymenko2023optimizing}, but there is at least an exponential loss in $k$ in ~\cite[Eq. 152]{herasymenko2023optimizing}.} approximation of the optimum.

\paragraph{Sparisified models}---
Based on the above fully connected models, one may sparsify (or ``re-sample'') them to introduce yet another family of random Hamiltonians; this introduces another parameter, being the number of sampled terms $m$ (which can be much smaller than the total number of $k$-body subsets $\binom{n}{k}\sim n^k$). We sample $m$ uniformly random $k$-body terms with random signs (and with replacement) 

\begin{align}
    \Hbf &= \frac{1}{\sqrt{m}} \sum_{j=1}^m \sigmabf_j \quad \text{where}\quad \sigmabf_j \stackrel{i.i.d.}{\sim} \Unif\left(\{\pm\sigmabf: S(\vsigma)=k\}\right), \label{eq:sampled_Pauli}\\
    \Hbf &= (i)^{k/2}\frac{1}{\sqrt{m}} \sum_{j=1}^m \xi_j \quad \text{where}\quad \xi_j \stackrel{i.i.d.}{\sim} \Unif\left(\{\pm\chi_S: \labs{S}=k\}\right). \label{eq:sampled_Fermion}
\end{align}
The benefit of sparsification is that the model can be efficiently simulated on a quantum computer due to a small number of terms $m$, while at the same time retaining relevant properties of the dense model (known as \textit{universality})~\cite{xu2020sparse,chen2023sparse,anschuetz2024bounds}. For our usage, sparsification allows for a parameter regime where $k$ grows with the system size
\begin{align}
    k \gg 1 \quad\text{while}\quad m \ll n^k,
\end{align}
which seems be to an interesting regime where quantum algorithms can produce entangled states that are qualitatively better than what can be achieved with product states.
In the above models, the normalizations are chosen so that the ``variance'' behaves as
\begin{align}
    \BE \vH^2 = \vI,
\end{align}
which will make the energy nonextensive, differing from the physics convention.

\section{Main results}\label{sec:main_results}
\renewcommand{\arraystretch}{1.5}
\begin{table}
\centering
\begin{adjustbox}{width=1\textwidth}
    \begin{tabular}{c|c|c|c|c|c}
         & model & locality $k$ & energy & number of terms $m$ & algorithm \\
         \hline
       \hline       \cite{hastings2022optimizing, herasymenko2023optimizing} & SYK & $\ge 4$ & $\frac{\sqrt{n}}{f(k)}$ & $\binom{n}{k}$, Guassian coeffs. & short-time dynamics \\ \hline
       \cite{herasymenko2023optimizing} & sparse fermion & 4 & $\Omega(\sqrt{n})$ & $\CO(n)$, Gaussian coeffs. & Gaussian states \\ 
         \hline\cite{Harrow2017extremaleigenvalues}  & applied to random Pauli~\eqref{eq:sampled_Pauli} & any & $\pm\frac{\sqrt{n}}{f(k)}$ & $m \ge cn\log(n)/k$ & product states \\
       \hline
       \cite{chen2023sparse} & random Pauli strings & mostly $\Theta(n)$ & $(1-o(1))\cdot \BE\norm{\vH}$ & $\Omega(\poly(n))$ & phase estimation\\
       \hline 
        This paper & Pauli~\eqref{eq:sampled_Pauli} and fermion~\eqref{eq:sampled_Fermion} & any & $\Omega(\frac{\sqrt{n}}{k})$ & $m \ge cn\log(n)/k$ & (simplified) Gibbs sampling\\
       \hline
    \end{tabular}
\end{adjustbox}
\caption{ Summary of prior work considering slight variants of the random fermion/Pauli models, normalized such that the variance is $\BE\vH^2= \vI$. The algorithm~\cite{Harrow2017extremaleigenvalues} actually works for non-random models (but may produce positive or negative energies), so we applied their bounds to the random sampled Pauli model~\eqref{eq:sampled_Pauli}. The two implicit functions $f(k)$ are functions that grow at least exponentially with $k$.
}
\label{table:different_models}
\end{table}
Our main result is a rigorous state preparation guarantee for achieving a good approximation ratio for the random Hamiltonians in~\cref{sec:models}. The particular dissipative algorithm, which is to simulate evolution by a simple Lindbladian chosen to depend on the random Hamiltonian, is defined in~\cref{sec:algorithm}. Our current guarantee for this crude Lindbladian is only proved to work for a short, prescribed evolution time and for the maximally mixed initial state. For the more refined algorithmic Lindbladians (for example, equipped with quantum detailed balance~\cite{chen2023efficient}), we may keep running the algorithm for longer times hoping for better solutions. 

\begin{theorem}[Optimization by dissipation]\label{thm:intro_algo_performance}
    Suppose that $\Hbf$ is drawn randomly from the Gaussian ensembles (\cref{eq:random_Pauli} or \cref{eq:random_Fermion} with $1<k<n$), or the sparsified ensemble (\cref{eq:sampled_Pauli} or \cref{eq:sampled_Fermion} with $m\ge c n\log(n)/k$ for some constant $c$). 
    Then, simulating dynamics by a certain Lindbladian $\lind$ (defined in \cref{eq:lindbladian_def,eq:Kagamma}) starting with the maximally mixed state $\vec{\mu}$ for time $t = \Theta \L(\frac{1}{k}\R)$ achieves average energy 
    \begin{align}
        \EV \Tr[ \e^{\lind t } [\mubf] \Hbf ] \ge \Omega\L(\frac{\sqrt{n}}{k}\R) \ge \Omega\L(\frac{\BE \norm{\vH} }{k}\R),
    \end{align}
where expectation is over the random choice of Hamiltonian.
\end{theorem}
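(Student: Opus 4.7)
The plan is to Taylor-expand $e^{\lind t}$ in $t$, showing that the linear-in-$t$ term grows the expected energy at rate $\Omega(\sqrt{n})$, while the $O(t^2)$ remainder is controlled at $t=\Theta(1/k)$. Applying Duhamel's formula twice, and using that $\Tr[\mubf\,\Hbf]=0$ since $\Hbf$ is traceless, one obtains
\begin{equation*}
\Tr\bigl[e^{\lind t}[\mubf]\,\Hbf\bigr] \;=\; t\,\alpha \;+\; \int_0^t\!\!\int_0^s \Tr\bigl[e^{\lind u}[\mubf]\,(\lind^\dagger)^2[\Hbf]\bigr]\,du\,ds, \qquad \alpha := \tfrac{1}{d}\Tr\bigl[\lind^\dagger[\Hbf]\bigr].
\end{equation*}
Since $\bigl|\Tr[\rho X]\bigr| \le \norm{X}$ for any state $\rho$, the theorem reduces to showing (a) $\EV\alpha \ge \Omega(\sqrt{n})$ and (b) $\norm{(\lind^\dagger)^2[\Hbf]} \le O(k\sqrt{n})$ with high probability over the random $\Hbf$, after which choosing $t = c/k$ for small enough $c$ yields $\EV\Tr[e^{\lind t}[\mubf]\Hbf] \ge \Omega(\sqrt{n}/k)$.

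For (a), I would decompose each jump operator $\vK_a$ into its Bohr-frequency components $\vK_a(\omega)$ relative to $\Hbf$, satisfying $[\Hbf,\vK_a(\omega)] = -\omega\,\vK_a(\omega)$. Since $\vK_a(\omega)^\dagger \vK_a(\omega)$ is then diagonal in the eigenbasis of $\Hbf$, a short algebraic manipulation gives $\lind^\dagger[\Hbf] = -\sum_a\int d\omega\,\omega\,\gamma(\omega)\,\vK_a(\omega)^\dagger\vK_a(\omega)$ and hence $\alpha = -\sum_a\int d\omega\,\omega\,\gamma(\omega)\,\Phi_a(\omega)$, where $\Phi_a(\omega):=\tfrac{1}{d}\Tr[\vK_a(\omega)^\dagger\vK_a(\omega)]$ is the spectral density of the two-point correlator of $\vK_a$. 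Because $\gamma$ is engineered to be supported on energy-raising transitions ($\omega<0$), the integrand is pointwise nonnegative. I would evaluate $\EV[\Phi_a(\omega)]$ by Gaussian integration for the dense ensembles (or by concentration of the i.i.d.\ signs for the sparsified ones), and then use the fact that the spectral density of $\Hbf$ is approximately semicircular on scale $\sqrt{n}$ for all the random ensembles in \cref{sec:models}; integrating against $\gamma$ then delivers $\EV\alpha = \Omega(\sqrt{n})$.

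The main obstacle is (b): the naive bound $\norm{(\lind^\dagger)^2[\Hbf]} \le \norm{\lind^\dagger}_{\mathrm{op}}^2\,\norm{\Hbf}$ is too weak, since the Lindbladian superoperator norm scales polynomially in $n$. To sharpen it, I would expand $\lind^\dagger[\Hbf]$ term-by-term and exploit that each jump operator $\vK_a$ has nontrivial commutator with only a $\Theta(k/n)$-fraction of the $\sim\binom{n}{k}$ random terms of $\Hbf$; a random-matrix concentration argument (Gaussian moment bounds for the dense ensembles, matrix Bernstein for the sparsified ones) then yields $\norm{\lind^\dagger[\Hbf]} = O(\sqrt{n})$, with an additional factor of $k$ for $\norm{(\lind^\dagger)^2[\Hbf]}$ coming from the combinatorics of nested commutators with the jump operators. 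Combining (a) and (b) and optimizing the balance at $t = \Theta(1/k)$ yields the stated $\Omega(\sqrt{n}/k)$ average energy, which by $\EV\norm{\Hbf} = \Theta(\sqrt{n})$ is an $\Omega(1/k)$-fraction of the optimum.
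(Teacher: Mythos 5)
Your overall skeleton---Taylor expand to second order, compute the linear term, bound the quadratic remainder, and balance at $t=\Theta(1/k)$---is exactly the paper's strategy. However, both of your key steps have problems, and step (b) contains the fatal gap.

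On (a): the Lindbladian here has no frequency filter $\gamma(\omega)$ and is not (even approximately) detailed-balanced; the jump operators are simply $\Kbf^a = \Abf^a + y\comm{\Abf^a}{\Hbf}$ with a real tunable $y$, so there is nothing ``engineered to be supported on energy-raising transitions,'' and your appeal to an approximately semicircular spectral density is both unnecessary and false in the regime $k \ll \sqrt{n}$ (the paper notes the density develops long tails there). The first-order term is instead an exact algebraic identity: averaging over the Rademacher signs kills all but the cross terms $\lind^{a\dag}_\gamma(\Hbf_\gamma)$, which evaluate to $-8 b_{a\gamma} h_\gamma^2 y$ using only $\Hbf_\gamma^2 = h_\gamma^2 \Ibf$ and the (anti)commutation structure, giving $T_1 = -8yt\HgloPower{2}\aac k$. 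Getting $\Omega(t\sqrt{n})$ out of this requires choosing $y = -\Theta(1/(\sqrt{k}\Hloc)) = -\Theta(\sqrt{n}/k)$, a step your proposal never takes.

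On (b): the bound $\norm{(\lind^\dag)^2(\Hbf)} \le O(k\sqrt{n})$ is false, and no concentration argument can rescue it. Already the sign-independent, $y^0$ piece of $\lind_0^\dag$ acts on each $\Hbf_\gamma$ as $\sum_a(\Abf^a\Hbf_\gamma\Abf^a - \Hbf_\gamma) = -2\aac k\,\Hbf_\gamma$, so $(\lind^\dag)^2(\Hbf)$ contains the term $4\aac^2k^2\Hbf$, whose operator norm is $\Theta(k^2\norm{\Hbf}) = \Theta(k^2\sqrt{n})$ with high probability. Plugging the deterministic bound $|T_2| \le \frac{t^2}{2}\norm{(\lind^\dag)^2(\Hbf)}$ into $t=\Theta(1/k)$ then yields $O(\sqrt{n})$, which swamps $T_1 = \Theta(\sqrt{n}/k)$ by a factor of $k$; for the off-diagonal contributions the naive norm bound is even worse, scaling with the number of terms $m \gg n$. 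The paper's resolution is that one must \emph{not} bound $T_2$ by an operator norm of $(\lind^\dag)^2(\Hbf)$: the quantity $\EV\,\bTr[\e^{\lind^\dag t_1}(\lind^\dag)^2(\Hbf)]$ is far smaller than $\norm{(\lind^\dag)^2(\Hbf)}$ because the Rademacher signs appearing as prefactors are correlated with the signs hidden inside $\e^{\lind^\dag t_1}$. Capturing these cancellations requires the integration-by-parts identity for Rademachers combined with a Duhamel recursion and a diagrammatic resummation (Propositions \ref{prop:superoperator_norm} and \ref{prop:sum_over_operators}), which shows each Rademacher prefactor $s_\alpha$ effectively contributes a small factor $|y|h_\alpha$ rather than $1$. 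This mechanism is the technical heart of the result and is entirely absent from your proposal.
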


The dissipative algorithm achieves a $\Omega(\frac{1}{k})$-fraction of the optimal energy, giving an exponential improvement on the $k$-dependence over the previous best classical and quantum algorithm guarantees~\cite{Harrow2017extremaleigenvalues,hastings2022optimizing,herasymenko2023optimizing}; see~\cref{table:different_models}. The optimum is provably bounded by $\BE\norm{\Hbf} \le \CO(\sqrt{n})$, 
but actually, we conjecture that its true scaling is $\Theta(\frac{\sqrt{n}}{k})$, which would imply that our dissipative algorithm achieves an $\Omega(1)$ approximation to the optimum on average. The reason to believe this conjecture is that it matches the heuristic prediction of a transition to semicircular ensembles at $ k = \Theta(\sqrt{n})$, where the optimum should be independent of $n$; the displayed bound on the expected optimum $\BE\norm{\vH}\le \CO(\sqrt{n})$ is a loose matrix Hoeffings' that does not take into account the non-commutativity of the summands (see~\Cref{prop:max_eigenvalues} of~\cref{sec:matrix_hoeffding}).

Our improved $k$-dependence is particularly interesting when combined with existing circuit lower bounds from~\cite[Appendix D]{chen2023sparse},~\cite[Appendix E]{anshuetz2024strongly}, which have a qualitatively different scaling with $k$. 
\begin{lemma}[Circuit size lower bounds on low-energy states~\cite{chen2023sparse,anshuetz2024strongly}]\label{lem:counting_lower_bounds}
Fix a description of quantum states (e.g., a sequence of discrete gates or tensor network), and let $G$ be the length of the description. Then, with probability $0.999$, the following hold: \textit{any} high-energy state $\vrho$ that may depend on the Hamiltonian such that 
\begin{align}
     \Tr[\vrho\vH]
    \ge \epsilon \frac{\sqrt{n}}{k}
\end{align}
must have a description length $G$ lower bounded by
\begin{align}
&n \epsilon^2 \e^{\Omega(k)} &\text{for the random Pauli ensemble~\eqref{eq:random_Pauli}},\\
&\Omega_k\L( \epsilon^2 n^{k/2+1}\R)&\text{for the random fermion ensemble~\eqref{eq:random_Fermion}},\\
&\min\left( \Omega(\epsilon \sqrt{nm}/k), n\epsilon^2 \e^{\Omega(k)}\right)&\text{for the sparsified Pauli ensemble~\eqref{eq:sampled_Pauli}},\\
&\min \left( \Omega(\epsilon \sqrt{nm}/k), \Omega_k(\epsilon^2 n^{k/2+1})\right)&\text{for the sparsified fermion ensemble~\eqref{eq:sampled_Fermion}},
\end{align}
where the notation $\Omega_k$ assumes $k$ to be fixed and $n$ growing.
\end{lemma}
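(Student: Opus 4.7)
I would follow the net-plus-concentration strategy of \cite{chen2023sparse, anshuetz2024strongly}. The set of states admitting a description of length $G$ has cardinality at most $2^G$ (for discrete gate descriptions), or can be covered by an $\epsilon$-net of comparable size for continuous parametrizations. For each fixed candidate state $\rho$ in the net, I would establish an exponentially small tail bound on $\Tr[\rho\vH]$, and union bound to conclude that, with probability at least $0.999$ over $\vH$, no low-description state achieves energy $\epsilon\sqrt{n}/k$.

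For the Gaussian ensembles \eqref{eq:random_Pauli} and \eqref{eq:random_Fermion}, $\Tr[\rho\vH]$ is a centered Gaussian whose variance is governed by the weight-$k$ Pauli (resp.\ Majorana) coefficients of $\rho$:
\begin{align*}
\sigma^2(\rho) \;=\; 3^{-k}\binom{n}{k}^{-1}\sum_{|S(\sigma)|=k}|\Tr[\rho\sigma]|^2 \quad\text{(Pauli)},
\end{align*}
and analogously with $\binom{n}{k}^{-1}\sum_{|S|=k}|\Tr[\rho\chi_S]|^2$ for Majorana. The Gaussian tail $\exp(-t^2/2\sigma^2)$ balanced against the net size $2^G$ yields $G\lesssim t^2/\sigma^2$; substituting $t=\epsilon\sqrt{n}/k$ together with a variance bound $\sigma^2\lesssim 3^{-k}$ in the Pauli case and $\sigma^2\lesssim n^{-k/2}$ in the Majorana case reproduces the claimed $n\epsilon^2\e^{\Omega(k)}$ and $\Omega_k(\epsilon^2 n^{k/2+1})$ respectively. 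For the sparsified ensembles \eqref{eq:sampled_Pauli} and \eqref{eq:sampled_Fermion}, the energy $\Tr[\rho\vH] = m^{-1/2}\sum_{j=1}^m \Tr[\rho\sigma_j]$ is a sum of $m$ bounded i.i.d.\ summands in $[-1,+1]$; Hoeffding's inequality plus a structural second-moment / dimension-counting argument combine to give both terms in the $\min$ expression---with $\epsilon\sqrt{nm}/k$ arising from the combinatorics of the $m$ randomly sampled terms and the other term inherited from the Gaussian argument applied at the level of the sparsified Hamiltonian.

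The principal obstacle is the variance estimate $\sigma^2(\rho)$ itself: the naive universal bound $\sigma^2\le 1$ (immediate from $|\Tr[\rho\sigma]|\le 1$ and the normalization $\BE\vH^2=\vI$) yields only $G\gtrsim\epsilon^2 n/k^2$, exponentially weaker in $k$ than the target. The exponential-in-$k$ improvement relies on exploiting that low-complexity states have Pauli (resp.\ Majorana) spectrum essentially concentrated on low-weight operators, so that $\sum_{|S(\sigma)|=k}|\Tr[\rho\sigma]|^2$ scales like $\binom{n}{k}$ rather than the worst-case $3^k\binom{n}{k}$---a bound which is tight for product states and stabilizer states but is not universal (e.g., Bell-type states violate it). Establishing such a bound uniformly over the net, and analogously tracking the Majorana case to produce the $n^{-k/2}$ factor, is the technical crux of \cite{chen2023sparse,anshuetz2024strongly}; the remaining steps of the proof are then essentially bookkeeping in the union bound and the substitution $t=\epsilon\sqrt n/k$.
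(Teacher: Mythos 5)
First, note that the paper itself does not prove this lemma: it is imported wholesale from \cite[Appendix D]{chen2023sparse} and \cite[Appendix E]{anshuetz2024strongly}, so the comparison below is with the arguments in those references rather than with anything in this manuscript.

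Your skeleton is the right one --- count the at most $2^G$ describable states, prove a per-state tail bound on $\Tr[\vrho\vH]$, and union bound; Gaussian tails for the dense ensembles and a Bernstein-type bound for the sparsified ones, whose two branches produce exactly the two terms inside the $\min$ (Hoeffding alone, using only $|\Tr[\vrho\sigma_j]|\le 1$, gives neither). But you have misidentified the technical crux, and in a way that would break the proof if taken literally. You assert that the improved variance bound $\sum_{|S(\sigma)|=k}|\Tr[\vrho\sigma]|^2\lesssim\binom{n}{k}$ holds only because ``low-complexity states have Pauli spectrum concentrated on low-weight operators,'' is violated by Bell-type states, and must be ``established uniformly over the net.'' The union bound quantifies over \emph{every} state in the net that happens to achieve high energy; one may not assume anything about the Pauli-weight profile of those witnesses beyond their describability, so a tail bound whose variance estimate depends on such structure would make the counting argument circular. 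The actual mechanism is a \emph{state-independent} variance bound: for every state $\vrho$ and every $k$-subset $T$ of qubits,
\begin{align}
\sum_{\sigma:\,S(\sigma)\subseteq T}\labs{\Tr[\vrho\sigma]}^2 = 2^k\Tr[\vrho_T^2]\le 2^k,
\qquad\text{hence}\qquad
\sum_{|S(\sigma)|=k}\labs{\Tr[\vrho\sigma]}^2\le 2^k\binom{n}{k},
\end{align}
so that $\sigma^2(\vrho)\le(2/3)^k$ for \emph{all} states --- already enough for the $n\epsilon^2\e^{\Omega(k)}$ rate (your Bell-state ``counterexample'' only shifts the constant inside $\Omega(k)$). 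Likewise, the fermionic rate $\Omega_k(\epsilon^2 n^{k/2+1})$ comes from the universal replica-operator-norm estimate $\sum_{|S|=k}\labs{\Tr[\vrho\chi_S]}^2\le\lnorm{\sum_{|S|=k}\chi_S\otimes\chi_S}=O_k(n^{k/2})$, valid for every state; this operator-norm bound is the genuinely nontrivial input of \cite{anshuetz2024strongly} and is absent from your sketch, where it is replaced by an unproved (and unnecessary) structural claim about states in the net.
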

For concreteness, let us describe states produced in a fixed circuit architecture of discrete one- and two-qubit gates.\footnote{For continuous gates, an additional $\epsilon$-net argument would be needed to cover all possible states~\cite[Appendix D]{chen2023sparse}.} Then, a product state produced by $n$ single-qubit discrete gates can only achieve a $\frac{\sqrt{n}}{e^{\Omega(k)}}$-ratio approximation of the energy for the spin models; this is rooted in the noncommutativity of the summand and does not have a classical analogue. Furthermore, \textit{any} state that achieves the same approximation ratio as \cref{thm:intro_algo_performance} requires a circuit size growing \textit{(super)exponentially} with $k$, suggesting that the dissipative algorithm has prepared a classically nontrivial state.
While the dense model has a substantial algorithmic cost (the number of terms growing as $\sim n^k/k!$), the sparsified model appears to be in a sweet spot where the quantum algorithm cost remains tractable while retaining sufficient circuit complexity. Indeed, we see that setting, e.g., $k = \log(n)^2, m = \poly(n)$ for the sparse Pauli model allows an arbitrarily large circuit complexity to grow with the number of term $\sqrt{m}$. That is, the states are very entangled, and nevertheless, our algorithm can find them using $\poly(m,n)$ runtime.

This circuit depth lower bound is even stronger for fermions, growing polynomially even at strictly constant values of $k$ (e.g., $k=4$)~\cite{anshuetz2024strongly}; this difference from spins diminishes at logarithmically large values of $k$. Interestingly, this distinction between low-energy states of fermions and spins does not appear in our proof of algorithmic correctness and only appears as obstructions for classical ansatz.

\section{The dissipative algorithm}\label{sec:algorithm}
In this section, we describe our dissipative algorithm, which has a physical origin in the Markovian model of system-bath interactions. We optimize the random Hamiltonian by applying an evolution for a short-time $t\ge 0$ by the \textit{Lindbladian} $\lind$ on the maximally mixed initial state (denoted by $\mubf$), which produces the state
\begin{align}
    \e^{\lind t }(\mubf) \quad \text{where} \quad \mubf := \frac{\Ibf}{\Tr(\Ibf)}.
\end{align}
Recall that a Lindbladian is a quantum continuous-time Markov chain that generates a completely positive and trace-preserving map.
We aim to choose a Lindbladian $\lind$ such that the state obtained has a large energy, given by the Hilbert--Schmidt inner product with the operator $\Hbf$:
\begin{equation}
    \Tr(\Hbf \e^{\lind t }(\mubf) ) = \Tr(\e^{\lind^\dag t }(\Hbf)\mubf ) =\bTr(\e^{\lind^\dag t }(\Hbf))
\end{equation}
where we define the normalized trace
\begin{equation}
    \bTr(\Obf) := \frac{\Tr(\Obf)}{\Tr(\Ibf)}\,.
\end{equation}

The Lindbladian we will choose is inspired by recent proposals for some quantum MCMC algorithms \cite{chen2023QThermalStatePrep,chen2023efficient,ding2024single,gilyen2024quantum, ding2024efficient}, but only keeping minimal features to guarantee a constant-ratio approximation. To define the main Lindbladian, we first choose a set of jump operators $\{\Abf^a\}_{a \in A}$, where $A$ denotes the set of labels for these operators. For the spin models in \cref{eq:random_Pauli,eq:sampled_Pauli}, we choose $\{\Abf^a\}_{a \in A}$ to be the set of all single-site Pauli operators, so $|A| = 3n$. For the fermionic models in \cref{eq:random_Fermion,eq:sampled_Fermion}, we choose $\{\Abf^a\}_{a \in A}$ to be the set of all single-site Majorana operators, so $|A| = n$. Then, we consider the Lindbladian\footnote{We define its adjoint, since we work in the Heisenberg picture, where operators evolve as $\Obf \mapsto \e^{\lind^\dag t}(\Obf)$. }
\begin{equation}\label{eq:lindbladian_def}
    \lind^\dag := \sum_{a\in A} \lind^{a\dag} \quad \text{where}\quad \lind^{a \dag}(\Obf)
    := \left( \Kbf^{a\dag}\Obf\Kbf^a -\frac{1}{2}\{\Kbf^{a\dag}\Kbf^{a},\Obf\}\right),
\end{equation}
where we denote the anticomutator by $\acomm{\vA}{\vB} = \vA\vB+\vB\vA$. 
In our algorithm, we choose the Lindblad operators $\vK^a$ to depend on the Hamiltonian terms, as follows
\begin{equation}
\Kbf^a:= \Abf^a + y [\Abf^a,\Hbf] \quad \text{for tunable}\quad y \in \mathbb{R}. \label{eq:Kagamma}
\end{equation}
One could also let $y$ be complex, but we examine only real values of $y$. 

Unlike the recent quantum MCMC algorithms, this Lindbladian will not satisfy exact or approximate detailed balance, and thus the fixed point of the evolution is not guaranteed to be the Gibbs state. Nevertheless, the Lindblad operators $\Kbf^a$ can be thought of as a truncation of the detailed-balanced Lindblad operators, to first order in $\comm{\Abf^a}{\Hbf}$. The more complex Lindbladians of those works, such as that of \cite{chen2023QThermalStatePrep,chen2023efficient}, typically define a function $\gamma(\omega)$, and have Lindblad operators $\Kbf^{\omega,a}$ labeled by values of $a$ and $\omega$, and defined by
\begin{align}
    \Kbf^{a,\omega} &=  \sqrt{\gamma(\omega)}\int_{-\infty}^{\infty} f(t)\e^{-i\omega t} \e^{i\Hbf t} \vA^a \e^{-i\Hbf t} \rd t
\end{align}
for some suitable function $f(t)$ (which might depend on the inverse temperature $\beta$ or other parameters).
If we expand to leading order in $[\vH,\cdot]$, we obtain
\begin{align}
    \Kbf^{a,\omega} &\propto  \Abf^a - \frac{i \int_{-\infty}^\infty f(t) t\e^{-i\omega t} \rd t}{\int_{-\infty}^\infty f(t)e^{-i\omega t}\rd t}\comm{\Abf^a}{\Hbf} + \ldots. 
\end{align}
By truncating this expansion and identifying the coefficient of $\comm{\Abf^a}{\Hbf}$ with $y$, we recover \cref{eq:Kagamma}. While the $\vA^a$ operator is independent of the Hamiltonian, the local commutator $[\vA^a,\vH]$ contains terms that overlap with $\vA^a$ and already hint at how the energy may change after applying jump $\vA^a.$ Remarkably, this crude approximation is already enough to produce a state that achieves a good approximation ratio to the optimum, giving strong evidence that the more refined algorithms should enjoy a similar performance. Indeed, another construction~\cite{ding2024single} where $\{\vK\} = \int_{-\infty}^{\infty} f(t)\vA^a(t) \rd t$ also gives a similar leading order piece.
\begin{lemma}[Lindbladian simulation cost]
    The dynamics $\e^{\lind t }$ at times $t\ge 0$ can be simulated with $\epsilon$ distance at gate complexity 
    \begin{align}
    \poly\L(n,y,\binom{n}{k},t, \log\L(\frac{1}{\epsilon}\R)\R), \tag{Gaussian models}\\
    \poly\L(n,y,m,t, \log\L(\frac{1}{\epsilon}\R)\R). \tag{Sparsified models}
    \end{align}
\end{lemma}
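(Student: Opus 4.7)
The plan is to assemble three standard pieces: a black-box Lindbladian-simulation theorem, an LCU block-encoding of each jump operator $\Kbf^a$, and an elementary bound on the $\ell_1$ cost of writing $\Hbf$ as a sum of Pauli/Majorana monomials.

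I would first invoke a Lindbladian-simulation black-box (e.g., Cleve--Wang or more recent LCU-based schemes) that, given block-encodings of each $\Kbf^a$ with subnormalization $\alpha_a$, prepares $\e^{\lind t}$ to diamond-norm error $\epsilon$ at cost polynomial in $|A|$, $\max_a\alpha_a$, $t$, and $\log(1/\epsilon)$. Since $|A|\le 3n$ by construction, this reduces the problem to producing efficient block-encodings of the individual $\Kbf^a$.

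I would then expand $\Hbf=\sum_{j=1}^N c_j P_j$ as a sum of $N$ Pauli (or Majorana) monomials, with $N=\binom{n}{k}$ in the Gaussian ensembles and $N=m$ in the sparsified ensembles. For any monomial $P_j$ the commutator $[\Abf^a,P_j]$ is either $0$ or $\pm 2\Abf^a P_j$, i.e., again a single monomial. Hence $\Kbf^a=\Abf^a+y[\Abf^a,\Hbf]$ is a linear combination of at most $N+1$ easy-to-implement unitaries, realizable by standard PREPARE/SELECT circuitry at $\poly(n,N)$ gate cost and with subnormalization $\alpha_a\le 1+2y\sum_j|c_j|$.

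The main---and only mildly nontrivial---step is bounding $\sum_j|c_j|$. In the sparsified case $|c_j|=1/\sqrt{m}$ deterministically, so $\sum_j|c_j|=\sqrt{m}$, and everything is manifestly polynomial in $m$ and $y$. In the Gaussian case the $c_j$ are unbounded random variables, but a Gaussian tail bound together with a union bound gives $\max_j|c_j|\le\poly(n)\cdot\mathrm{std}(c_j)$ with probability $1-1/\poly(n)$, whence $\sum_j|c_j|\le\poly(n,N)$. Alternatively one may simply truncate each $c_j$ at a $\poly(n)$ threshold, perturbing $\Hbf$ in operator norm by only $\e^{-\poly(n)}$ and therefore the simulated channel by a negligible amount in diamond norm. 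Combining the three steps yields the stated complexity.
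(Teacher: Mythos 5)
Your proposal is correct and takes essentially the same route as the paper, whose entire ``proof'' is a one-line appeal to black-box Lindbladian simulation (Cleve--Wang style) applied to LCU block-encodings of the jumps $\Kbf^a$ --- exactly the pipeline you instantiate. The extra care you take with the subnormalization $\sum_j |c_j|$ for the unbounded Gaussian coefficients is a legitimate detail the paper glosses over, and your tail-bound/truncation handling of it is fine.
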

We did not attempt to minimize the algorithmic cost as it does not change the qualitative findings of this work. The polynomial scaling can be obtained by invoking the algorithm for black-box Lindbladian simulation in, e.g.,~\cite{cleve2016efficient,chen2023QThermalStatePrep} with suitable block-encoding for the jumps $\vK^a$~\cite{gilyen2019quantum}.

\section{The general theorem}\label{sec:general_requirement}
In this section, we write the considered models and the main algorithmic guarantee in a more abstract form, which will imply \Cref{thm:intro_algo_performance}.
\begin{condition}[Symmetric Random Hamiltonians]\label{cond:random_Hamiltonians}
Consider a random Hamiltonian ensemble parameterized by
\begin{align}
    \Hbf = \sum_{\gamma \in \Gamma} s_{\gamma} \Hbf_{\gamma} \quad \text{where} \quad s_\gamma \stackrel{i.i.d.}{\sim} \mathrm{Unif}(\{+1,-1\}). \label{eq:Hgamma}
\end{align}
That is, the terms $\Hbf_\gamma$ can be fixed, or chosen from some (potentially correlated) distribution, but the Rademacher signs $s_\gamma$ are symmetrically distributed and independent. Furthermore, suppose that the terms $\Hbf_\gamma$ are distinct and their squares are proportional to the identity (by a factor denoted $h_\gamma^2$), that is
\begin{align}\label{eq:H_square}
    \Hbf_{\gamma}^2 = h_{\gamma}^2 \Ibf.
\end{align}
\end{condition}
The set $\Gamma$ contains the labels for the terms of $\Hbf$. This condition captures the part of randomness that we actually use in the proof---that is we only require random signs on each term. While we can very likely relax the random Hamiltonian families, the above seems to simplify the calculations and readily captures a large family of interesting instances (see also~\cite{chen2023sparse} and the references therein).

We define the following useful quantities associated with the connectivity of the Hamiltonian. For qubit models, consider the Hilbert space with qubits labeled by $i \in [n] = \{1,\cdots,n\}$, and for fermionic models, consider the Majoranas labeled by $i \in [n] = \{1,\cdots,n\}$.\footnote{Note that they give different dimensions, $2^n$ for qubits and $2^{n/2}$ for Majorana fermions.} We define the \textit{support}
\begin{align}
    S(\gamma)\subset [n] \quad \text{for each Hamiltonian term}\quad \gamma \in \Gamma,
\end{align}
as the set of sites on which the term $\Hbf_{\gamma}$ acts nontrivially. For example, 
\begin{align}
    S(\vsigma^x_1\vsigma^y_2\vI_3) &= \{1,2\}\subset [3].\\
    S(\vI_1\chi_2\vI_3) &= \{2\}\subset [3].
\end{align}

For bookkeepping, we impose the following commutativity condition, which indicates whether a Hamiltonian term commutes with the jump operator $\vA^a$ or not, and whether two distinct Hamiltonian terms commute or anticommute. 

\begin{condition}[Commuting or anticommuting]
\label{cond:commute}
Consider a set of Hamiltonian terms $\{\Hbf_\gamma: \gamma \in \Gamma\}$ and a set of jump operators $\{\vA^a: a \in A\}$ such that for each $a \in A$ we have $\Abf^a = \Abf^{a \dag}$ and $(\Abf^{a})^2 = \Ibf$, and furthermore 
for each $ a \in A$ and $\gamma, \gamma' \in \Gamma$, we have
\begin{align}
    [\Abf^a , \Hbf_{\gamma}] = b_{a\gamma}\cdot 2\Abf^a \Hbf_{\gamma} \quad &\text{where} \quad b_{a\gamma} \in \{0, 1\}, \\
    [\Hbf_{\gamma} , \Hbf_{\gamma'}] = b_{\gamma\gamma'}\cdot 2 \Hbf_{\gamma}\Hbf_{\gamma'} \quad &\text{where} \quad b_{\gamma\gamma'} \in \{0, 1\}.
\end{align}
Additionally, if the support of the Hamiltonian terms is disjoint $S(\gamma)\cap S(\gamma')=\emptyset,$ then $b_{\gamma\gamma'}=0$, and similarly if the support of $\Abf^a$ is disjoint from $S(\gamma)$, then $b_{a\gamma} = 0$. 
\end{condition}
If $b_{a\gamma} = 0$, then $\Abf^a$ and $\Hbf_\gamma$ commute; otherwise, they anticommute (and similar for $b_{\gamma\gamma'}$). For any subset $Z \subset [n]$, let $A_Z \subset A$ contain labels $a$ for which the support of operator $\Abf^a$ has nonempty intersection with $Z$. 

\begin{condition}[Locality of jumps and Hamiltonian terms]\label{cond:aloc}
Suppose that each of the jump operators $\{\vA^a: a \in A\}$ is supported on one site. Furthermore, suppose that there is a fixed locality parameter $k$ for which $|S(\gamma)| = k$ for all $\gamma$, and that there are absolute constants $\aloc$ and $\aac$ (with $\aac \leq \aloc$) such that 
\begin{align}\label{eq:b_agamma_def}
    \sum_{a\in A} b_{a \gamma} &= \aac k \quad \text{for each}\quad \gamma. \\
    |A_Z| &= \aloc |Z|\quad \text{for each}\quad Z \subset [n].
\end{align}
\end{condition}
Roughly, the first condition counts the number of jumps that anticommute with a term $\vH_{\gamma}$, which scales with $k$, and the second counts how many single site operators there are.
The following lemma shows that, given the models we consider and our choice of jump operators, all of these conditions are simultaneously satisfied.
\begin{lemma}
    Suppose that either (i) the set of jump operators $\{\Abf^a: a \in A\}$ is the set of $3n$ single-site Pauli operators, and $\Hbf$ is chosen from the random Pauli ensemble (\cref{eq:random_Pauli}) or the sampled Pauli ensemble (\cref{eq:sampled_Pauli}), or (ii) the set of jump operators $\{\Abf^a: a \in A\}$ is the set of $n$ single-site Majorana operators, and $\Hbf$ is chosen from the random fermion ensemble (\cref{eq:random_Fermion}) or the sampled fermion ensemble (\cref{eq:sampled_Fermion}) with even $k$.  Then Conditions \ref{cond:random_Hamiltonians}, \ref{cond:commute}, and \ref{cond:aloc} are satisfied, and
    \begin{align}
        \aloc &= \begin{cases}
            3 & \text{if } \Hbf \text{ chosen from \cref{eq:random_Pauli} or \eqref{eq:sampled_Pauli}} \\
            1 & \text{if } \Hbf \text{ chosen from \cref{eq:random_Fermion} or \eqref{eq:sampled_Fermion}}
        \end{cases}\\
        \aac &= \begin{cases}
            2 & \text{if } \Hbf \text{ chosen from \cref{eq:random_Pauli} or \eqref{eq:sampled_Pauli}} \\
            1 & \text{if } \Hbf \text{ chosen from \cref{eq:random_Fermion} or \eqref{eq:sampled_Fermion}}
        \end{cases}
    \end{align}
\end{lemma}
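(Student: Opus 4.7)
The proof is essentially a book-keeping exercise verifying the three structural conditions on a case-by-case basis, so my plan is to split the argument into three blocks corresponding to the three conditions.

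For \Cref{cond:random_Hamiltonians}, in the sampled ensembles \eqref{eq:sampled_Pauli} and \eqref{eq:sampled_Fermion} the random signs are manifestly i.i.d.\ Rademachers, and $\Hbf_\gamma$ is a $\pm1$ multiple of a Pauli string or $i^{k/2}\chi_S$ respectively. For the Gaussian ensembles \eqref{eq:random_Pauli} and \eqref{eq:random_Fermion}, I would use the symmetry of a centered Gaussian: write $g_\gamma = s_\gamma |g_\gamma|$ where $s_\gamma := \mathrm{sgn}(g_\gamma)$ is an independent Rademacher, absorb $|g_\gamma|$ into $\Hbf_\gamma$, and condition on $\{|g_\gamma|\}_\gamma$. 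The identity $\Hbf_\gamma^2 \propto \Ibf$ follows from $\vsigma^2 = \Ibf$ for every $n$-qubit Pauli (with $h_\gamma^2 = g_\sigma^2$ or $1/m$), and from the computation $\chi_S^2 = (-1)^{k(k-1)/2}\Ibf$ which when combined with the $(i)^{k/2}$ prefactor gives $\Hbf_\gamma^2 = \pm h_\gamma^2 \Ibf$ with a real $h_\gamma^2$; I would note that the sign is immaterial for the condition as stated (one can redefine the sign of $h_\gamma^2$).

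For \Cref{cond:commute}, the key algebraic fact is that any two Pauli strings either commute or anticommute, and likewise any two products of Majoranas of even degree either commute or anticommute. Concretely, if $\Abf^a$ and $\Hbf_\gamma$ anticommute then $[\Abf^a,\Hbf_\gamma] = 2\Abf^a\Hbf_\gamma$ by definition of the anticommutator, verifying the required form with $b_{a\gamma}\in\{0,1\}$. The disjoint-support clause is immediate in the Pauli case, and in the fermionic case I would check it by counting Majorana swaps: if $S(\gamma)\cap S(\gamma') = \emptyset$ then swapping $\chi_{S'}$ past $\chi_S$ produces $(-1)^{|S||S'|} = (-1)^{kk'} = 1$ since both are even. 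The single-site Majorana vs.\ $\chi_S$ case is similar.

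For \Cref{cond:aloc}, the support condition $|S(\gamma)|=k$ is by construction. To compute $\aac$, I would count how many single-site operators anticommute with a given $\Hbf_\gamma$. In the Pauli case, at each of the $k$ sites in $S(\gamma)$ the Hamiltonian term has one of $\{\vsigma^x,\vsigma^y,\vsigma^z\}$, and exactly $2$ of the $3$ single-site Paulis at that site anticommute with it, giving $\sum_a b_{a\gamma} = 2k$, i.e.\ $\aac = 2$. In the fermionic case, $\chi_i$ commutes or anticommutes with $\chi_S$ according to the parity of $|S\setminus\{i\}|$: if $i\notin S$ the number of anticommuting swaps is $k$ (even), so $\chi_i$ commutes with $\chi_S$; if $i\in S$ it is $k-1$ (odd), so they anticommute. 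Hence exactly the $k$ Majoranas with $i\in S(\gamma)$ anticommute, giving $\aac = 1$. The count $|A_Z| = \aloc|Z|$ follows because the single-site operator sets are the disjoint unions over $i\in[n]$ of $\{\vsigma^x_i,\vsigma^y_i,\vsigma^z_i\}$ (giving $\aloc=3$) or $\{\chi_i\}$ (giving $\aloc=1$), so restricting to $Z$ simply multiplies by $|Z|$.

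The only genuine subtlety is the Majorana sign bookkeeping, both in checking $\Hbf_\gamma^2\propto \Ibf$ and in computing $\aac$ for fermions; everything else is either immediate from the Pauli or Majorana algebra or a one-line counting argument, so I expect the proof to be essentially mechanical and to occupy at most a page.
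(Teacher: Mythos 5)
Your proof is correct and follows essentially the same direct-verification route as the paper's own (which is even terser, asserting the sign symmetry ``by construction'' and performing the same anticommutation counts for $\aac$ and the same support counts for $\aloc$). One small imprecision: for even $k$ the sign in $\Hbf_\gamma^2 = (i^{k/2})^2\chi_S^2\, g_S^2\,\Ibf = (-1)^{k^2/2} g_S^2\,\Ibf$ is always $+1$ (equivalently, $i^{k/2}\chi_S$ is Hermitian), so no redefinition of the sign of $h_\gamma^2$ is needed --- nor would one be permissible, since a negative $h_\gamma^2$ would make $\Hloc$ and $\Hglo$ ill-defined.
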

\begin{proof}
    The symmetry of $\Hbf$ required by \Cref{cond:random_Hamiltonians} is true in all cases by construction. For the Pauli models, each $\Hbf_\gamma$ is proportional to a multiqubit Pauli operator; thus, each single-site Pauli operator $\Abf^a$ acting on site $i$ will anticommute with $\Hbf_\gamma$ if and only if the Pauli operator in position $i$ of $\Hbf$ is not the same as that of $\Abf^a$ and not the identity. For each $i \in S(\gamma)$ there are two single-site Pauli operators $\Abf^a$ that meet this criteria, so $\aac =2$. For the fermionic models, each $\Hbf_\gamma$ is proportional to a product of an even number of Majorana operators; thus, the single-site Majorana operator $\Abf^a$ acting on site $i$ will anticommute with $\Hbf_\gamma$ if and only if $i \in S(\gamma)$, so $\aac = 1$. Meanwhile, the total number of Pauli operators that act on a site within a subset $Z$ is precisely $3|Z|$, and the number of Majoranas acting on $Z$ is $|Z|$, confirming the quoted values of $\aloc$.  Thus, in all cases \Cref{cond:commute} and \Cref{cond:aloc} are also met. Note that in the fermionic case, if the supports of two even-weight terms are disjoint, then the commutator vanishes (which is false if both have odd weights), just like the Pauli case. Similarly, any single-site Majorana that acts on a site not in $S(\gamma)$ commutes with $\Hbf_{\gamma}$ when $\Hbf_{\gamma}$ is of even weight.  
\end{proof}

It is also useful to define the local and global energies:
\begin{definition}[local and global energies]\label{def:local_global_energies}
For a random Hamiltonian (\cref{cond:random_Hamiltonians}), define the local and global quantities
\begin{align}\label{eq:Hloc_Hglo_def}
\Hloc: = \max_{1\le i \le n}  \sqrt{\sum_{\gamma: i \in S(\gamma)} h_{\gamma}^2} \quad \text{and} \quad\Hglo: = \sqrt{\sum_{\gamma\in \Gamma} h_{\gamma}^2}.
\end{align}
\end{definition}
The global energy $\Hglo$ is the root-mean-square of the strength of every Hamiltonian term, and the local energy $\Hloc$ is a root-mean-square over terms whose support contains the site $i$ (maximizing over $i$). 

Now, we can give the abstract algorithmic guarantee, given the abstract conditions.
\begin{theorem}[Algorithmic performance]\label{thm:algo_performance}
    Consider a random Hamiltonian $\Hbf$ and choice of jumps $\{\Abf^a\}$ satisfying \cref{cond:random_Hamiltonians}, \cref{cond:commute}, and \cref{cond:aloc}. Then, there exist universal constants\footnote{The constants $c_y$, $c_t$, and $c_H$ have dependence on the constants $\aloc$ and $\aac$ defined in \Cref{cond:commute}, but not on $k$, $m$, or $n$.} $c_y$, $c_t$, and $c_{H}$ such that, setting
    \begin{align}\label{eq:y_t_choice}
    y = -\frac{c_y}{\sqrt{k} \Hloc}, \quad t = \frac{c_t}{k},
\end{align} 
starting with the maximally mixed state $\vec{\mu}$, and evolving by the Lindbladian $\lind$ in \cref{eq:lindbladian_def}, 
produces a state achieving average energy
    \begin{align}
        \EV \Tr[ \mubf \e^{\lind^\dag t }(\Hbf) ] \ge \frac{c_H}{\sqrt{k}} \frac{\HgloPower{2}}{\Hloc}.
    \end{align}
\end{theorem}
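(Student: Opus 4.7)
The plan is to Taylor expand $\BE\,\bTr[e^{\lind^\dag t}(\Hbf)]$ in $t$, compute the coefficient of $t$ exactly, and show that for $t$ and $y$ as prescribed the higher-order corrections cannot cancel more than (say) half of the linear term. Since every $\Hbf_\gamma$ has vanishing normalized trace, so does $\Hbf$, and the constant term in $t$ is zero; hence
\[
\BE\,\bTr[e^{\lind^\dag t}(\Hbf)] = t\,\BE\,\bTr[\lind^\dag(\Hbf)] + \sum_{j\ge 2}\frac{t^j}{j!}\,\BE\,\bTr[(\lind^\dag)^j(\Hbf)],
\]
and the task reduces to isolating the first summand and bounding the tail.

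\textbf{First-order calculation.} Cyclicity of $\bTr$ applied to the GKSL form gives $\bTr[\lind^{a\dag}(\Obf)] = \bTr[[\Kbf^a,\Kbf^{a\dag}]\Obf]$. Using $(\Abf^a)^2 = \Ibf$ and the relation $[\Abf^a,\Hbf_\gamma] = 2b_{a\gamma}\Abf^a\Hbf_\gamma$ from \cref{cond:commute}, a short computation yields
\[
[\Kbf^a,\Kbf^{a\dag}] = -2y[\Abf^a,[\Abf^a,\Hbf]] = -8y\,\widetilde{\Hbf}_a, \qquad \widetilde{\Hbf}_a := \sum_{\gamma\in\Gamma}s_\gamma\, b_{a\gamma}\Hbf_\gamma,
\]
so $\bTr[\lind^{a\dag}(\Hbf)] = -8y\,\bTr[\widetilde{\Hbf}_a\Hbf]$. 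Under the Rademacher expectation only the diagonal $\gamma=\gamma'$ survives, and $\Hbf_\gamma^2 = h_\gamma^2\Ibf$ gives $\BE\,\bTr[\widetilde{\Hbf}_a\Hbf] = \sum_\gamma b_{a\gamma}h_\gamma^2$. Summing over $a$ and using $\sum_a b_{a\gamma} = \aac k$ from \cref{cond:aloc},
\[
\BE\,\bTr[\lind^\dag(\Hbf)] = -8y\,\aac k\,\HgloPower{2}.
\]
With $y = -c_y/(\sqrt{k}\Hloc)$ and $t = c_t/k$, this contributes $+8\aac c_y c_t\,\HgloPower{2}/(\sqrt{k}\Hloc)$, already matching the advertised scaling, so $c_H$ can be read off modulo the tail estimate.

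\textbf{Bounding the tail.} For $j\ge 2$, expand $(\lind^\dag)^j(\Hbf)$ as a polynomial in $y$ whose monomials are words of length $O(j)$ in the operators $\Abf^{a_i}$ and $\Hbf_{\gamma_\ell}$. After $\bTr$ and Rademacher averaging, only those words whose $\gamma$-labels admit a perfect pairing survive; for each such pairing, $\Hbf_\gamma^2 = h_\gamma^2\Ibf$ collapses the pair to a scalar and \cref{cond:commute} controls the sign picked up when reordering through any intervening $\Abf^a$'s. To convert the sums over $a_1,\ldots,a_j$ and the paired $\gamma$-labels into a clean bound I would use $|A_Z|\le \aloc|Z|$ to handle jump-label sums localized on the supports of adjacent $\Hbf_\gamma$'s, and $\sum_{\gamma:\,i\in S(\gamma)}h_\gamma^2 \le \HlocPower{2}$ to peel off one factor of $\HlocPower{2}$ per pair sharing a common site. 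The target shape is
\[
\bigl|\BE\,\bTr[(\lind^\dag)^j(\Hbf)]\bigr| \;\le\; \frac{C\,\HgloPower{2}}{\sqrt{k}\,\Hloc} \cdot (C'\,k)^{j-1}
\]
for constants $C,C'$ depending only on $\aac,\aloc$; multiplying by $t^j/j!$ with $t=c_t/k$ geometrically sums the tail and bounds it by a constant multiple of the linear term once $C' c_t$ is chosen small enough.

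\textbf{Main obstacle.} The delicate step is the tail bound above. The naive estimate $\|\lind^\dag\|_\infty \lesssim |A|\cdot \|\Kbf^a\|^2 = \Theta(n)$ is far too loose, because it ignores both the Rademacher cancellations and the fact that each single-site jump $\Abf^a$ touches only $O(1)$ Hamiltonian terms. One must organize the sum over pairings so that each surviving contribution is charged against $\HlocPower{2}$ via a sum of the form $\sum_{\gamma\ni i} h_\gamma^2$ rather than against $\HgloPower{2}$; this localized, moment-method bookkeeping is precisely what yields $n$-independence and the $1/\sqrt{k}$ rate. Once this combinatorial accounting is in place, the theorem follows by choosing $c_y, c_t$ small enough to absorb the constants from the tail estimate into the positive linear term.
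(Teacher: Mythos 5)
Your high-level strategy is the same as the paper's: isolate the linear-in-$t$ term, compute it exactly, and choose $y,t$ so the remainder cannot cancel more than half of it. Your first-order computation is correct and in fact slicker than the paper's: the identity $\bTr[\lind^{a\dag}(\Obf)]=\bTr[[\Kbf^a,\Kbf^{a\dag}]\Obf]$ together with $[\Kbf^a,\Kbf^{a\dag}]=-2y[\Abf^a,[\Abf^a,\Hbf]]$ reproduces $T_1=-8yt\,\aac k\HgloPower{2}$ exactly as in the paper's Proposition on the first-order term. Where you genuinely diverge is the remainder. The paper stops the expansion at second order via the fundamental theorem of calculus, so the remainder is $\int\e^{\lind^\dag t_1}(\lind^\dag)^2(\Hbf)$ with the full propagator inside; the price is that the Rademachers sitting outside the integral are correlated with the Rademachers hidden in the exponent, and decorrelating them is what forces the Duhamel/integration-by-parts recursion and the diagrammatic calculus of Appendix A. Your full Taylor expansion eliminates the propagator entirely (each $\BE\bTr[(\lind^\dag)^j(\Hbf)]$ is a pure moment), which is a real simplification of that issue — but it shifts the entire burden onto proving uniform-in-$j$ moment bounds, whereas the paper only ever has to do the combinatorics at $j=2$ (its Appendix B) plus norm-contractivity of the propagator.

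The gap is that the all-orders moment bound is the whole theorem and you have only stated a ``target shape'' for it — and the stated shape is quantitatively wrong. Already at $j=2$, the correct bound on $|\BE\bTr[(\lind^\dag)^2(\Hbf)]|$ (the $U=\varnothing$ piece of the paper's Proposition on summing over operators, which takes the paper several pages of support-overlap case analysis) is $\CO(1)|y|\alocPower{2}k^2\HgloPower{2}=\CO(k^2)\cdot\frac{\HgloPower{2}}{\sqrt{k}\Hloc}$: the outer $\lind_0^\dag$ contributes $\aloc k$ admissible jumps and the inner $\lind_\gamma^\dag$ contributes $\aac k$, so the $k^2$ is not an artifact. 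Your claimed $(C'k)^{j-1}$ prefactor is therefore a factor of $k$ too strong at $j=2$, and for general $j$ the accumulated support of $(\lind^\dag)^j(\Hbf)$ grows like $jk$, so one should expect an extra $j!$ as well, i.e.\ something like $\frac{k\HgloPower{2}}{\sqrt{k}\Hloc}(Ck)^{j-1}j!$. That corrected bound would still close the argument, since $\frac{t^j}{j!}\cdot(Ck)^{j-1}j!=t(Ckt)^{j-1}$ sums geometrically once $c_t$ is small — so your strategy is salvageable — but the estimate itself (organizing the pairings so that every $\gamma$-sum except one is charged to $\HlocPower{2}$ via a shared site, handling labels of multiplicity four and higher, and getting constants independent of $m$ and $n$ at every order) is precisely where all the difficulty of the theorem lives, and it is absent from the proposal.
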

Remarkably, our results treat the fermionic and Pauli models quite equally, despite the possibly very different physics (\cite{baldwin2020quenched,anshuetz2024strongly}). We recover the earlier~\cref{thm:intro_algo_performance} by evaluating $\Hglo$ and $\Hloc$ for the four ensembles, using the following proposition (see \cref{sec:compuing_localnorm} for a direct computation). 
\begin{proposition}\label{prop:Hglo_Hloc_bound}
Suppose that $\Hbf$ is drawn randomly from the ensembles in \cref{eq:random_Pauli} or \cref{eq:random_Fermion} with $1<k<n$, or it is drawn randomly from the ensembles in \cref{eq:sampled_Pauli} or \cref{eq:sampled_Fermion} with $m\ge cn\log(n)/k$ for some constant $c$. Then
\begin{align}
    \BE_{\Hbf} \L(\frac{\HgloPower{2}}{\Hloc}\R) =\Omega\left( \sqrt{\frac{n}{k}} \right),
\end{align}
where the expectation value is taken over random choice of $\Hbf$. 
\end{proposition}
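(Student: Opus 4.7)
The plan is to reduce the proposition to two concentration statements on $\Hglo$ and $\Hloc$. For each of the four ensembles I will establish absolute constants $c_1, c_2 > 0$ such that the event
\[
\CE := \{ \HgloPower{2} \ge c_1 \} \cap \{ \HlocPower{2} \le c_2\, k/n \}
\]
satisfies $\BP(\CE) \ge 1/2$. Since $\HgloPower{2}/\Hloc$ is always nonnegative and is at least $(c_1/\sqrt{c_2})\sqrt{n/k}$ on $\CE$, the bound $\BE[\HgloPower{2}/\Hloc] = \Omega(\sqrt{n/k})$ follows immediately.

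The first step is to check expectations. A direct counting argument, consistent with the normalization $\BE[\vH^2] = \vI$, gives $\BE[\sum_\gamma h_\gamma^2] = 1$ and $\BE[\sum_{\gamma : i \in S(\gamma)} h_\gamma^2] = k/n$ for every site $i$, in all four ensembles. In the Gaussian cases this is immediate from $h_\gamma^2 = g_\gamma^2$, the prescribed variances, and the counts of weight-$k$ Paulis (resp.\ $k$-subsets of $[n]$) that contain a fixed site $i$. In the sparse cases, decomposing each trial into a uniform sign times an unsigned weight-$k$ label and expanding the resulting sums, the diagonal ($j = j'$) contribution produces the correct means, while the off-diagonal ($j \neq j'$) contribution has mean zero by independence of the signs.

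The second step is concentration. For the Gaussian ensembles, $\HgloPower{2}$ and each single-site sum $\sum_{\gamma : i \in S(\gamma)} h_\gamma^2$ are rescaled chi-square variables with at least $\binom{n-1}{k-1} \ge n-1$ degrees of freedom whenever $1 < k < n$; Laurent--Massart tail bounds combined with a union bound over the $n$ sites yield $\CE$ with overwhelming probability. For the sparse ensembles, the diagonal contribution to $\sum_{\gamma : i \in S(\gamma)} h_\gamma^2$ is $(1/m) \cdot \mathrm{Binomial}(m, k/n)$ and concentrates around $k/n$ by Chernoff; the union bound over $n$ sites closes precisely when $mk/n \gtrsim \log n$, matching the hypothesis $m \ge cn \log(n)/k$. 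The off-diagonal piece (coincidences between distinct trials) has mean zero and variance suppressed by the collision probability $\lesssim 1/\binom{n}{k}$ per pair, hence is subleading by Chebyshev whenever $k \ge 2$. The analogous calculation, without restriction to site $i$, controls $\HgloPower{2}$ from below.

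The main obstacle I anticipate is the off-diagonal collision term for the sparse ensembles at the precise threshold $m = \Theta(n \log(n)/k)$: the second-moment estimate must be tight enough that, after the union bound over the $n$ sites, the collision contribution remains strictly smaller than the diagonal $k/n$, without losing logarithmic factors that would force a stronger lower bound on $m$. Once this is handled cleanly, together with the routine chi-square tail bounds in the Gaussian case, $\BP(\CE) \ge 1/2$ and the proposition follows.
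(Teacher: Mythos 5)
Your proposal is correct, and its technical core coincides with the paper's: compute the means $\BE\,\HgloPower{2}=1$ and $\BE\, Y_i = k/n$ for $Y_i:=\sum_{\gamma:\,i\in S(\gamma)}h_\gamma^2$, then apply Laurent--Massart chi-square tails in the Gaussian case and a Chernoff/Bernstein bound in the sparse case, with a union bound over the $n$ sites---which is exactly where the threshold $m\gtrsim n\log(n)/k$ enters in the paper as well. Where you differ is in the final assembly. The paper first separates numerator and denominator deterministically via Cauchy--Schwarz, $\BE[\HgloPower{2}/\Hloc]\ge (\BE\Hglo)^2/\sqrt{\BE\HlocPower{2}}$, then lower bounds $(\BE\Hglo)^2=\Omega(1)$ by Paley--Zygmund for the Gaussian ensembles and upper bounds the expectation of the maximum, $\BE\max_i Y_i$, by integrating its tail. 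You instead exhibit a single good event $\CE$ of probability at least $1/2$ on which the ratio is pointwise $\Omega(\sqrt{n/k})$; this requires only a probability bound on $\max_i Y_i$ rather than a bound on its expectation, and replaces Paley--Zygmund by a lower-tail estimate on $\HgloPower{2}$, which is available since the relevant chi-square has $\binom{n-1}{k-1}\ge n-1$ degrees of freedom for $1<k<n$. Both routes are valid, and yours is marginally more elementary. One remark: the collision analysis you flag as the main anticipated obstacle for the sparse ensembles is not needed under the paper's conventions. There the index set $\Gamma$ for the sampled models is $\{1,\dots,m\}$ with $h_\gamma = 1/\sqrt{m}$ for every sampled term (repeated draws are kept as separate terms of the decomposition in \cref{eq:Hgamma}), so $\HgloPower{2}=1$ identically and each $Y_i$ is exactly $\frac{1}{m}\,\mathrm{Binomial}(m,k/n)$ with no off-diagonal piece; your second-moment estimate for coincidences between distinct trials is harmless but superfluous.
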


\section{Proof of the main theorem}\label{sec:algo_performance_proof}
We provide the proof of our main result (\Cref{thm:algo_performance}). The main idea is to expand the expected energy (averaged also over random choice of $\Hbf$) as the sum of a zeroth order term that vanishes under expectation, a linear-in-$t$ term that can be computed exactly, and a $\CO(t^2)$ term for which the magnitude can be upper bounded. At small $t$, the $\CO(t)$ term dominates, and it is possible to choose a value of $t$ and $y$ that yields the desired result. This balancing of the first- and second- order terms is depicted schematically in \cref{fig:balancing}. 

\begin{figure}[h]
\centering
\tikz[scale=1.0,domain=0:14,samples=50]{
    \begin{axis}[
    axis equal image,
    ymin=0,
    ymax=8.5,
    xmin=0,
    xmax=12,
    xtick={6.82}, 
    xticklabel = {$\scriptstyle \Theta(1/k)$},
    ytick=\empty, 
    every tick/.style={
        black,
        semithick,
      },
    axis on top = false,
    axis x line=middle, 
    axis y line=middle, 
    stack plots=y, 
    xlabel = $t$, 
    ylabel = $\BE \bTr\;\e^{\lind^\dag t}(\Hbf)$,
    x label style={at={(axis description cs:1.05,0.03)},anchor=north},
    y label style={at={(axis description cs:0,1.0)},rotate=0,anchor=south},
    extra description/.code={
\node [left] at (axis cs:0,1.022) {feasible $\scriptstyle \Theta(\sqrt{n}/k)$};
\node [left] at (axis cs:0,3) {true optimum};
\node [left] at (axis cs:0,7.5) {upper bound $\scriptstyle \Theta(\sqrt{n})$};
}
    ]
        \filldraw[red,fill=red] (axis cs:6.82,1.022) circle [radius=1.2];
        \addplot[mark=none,draw=black,ultra thick] {7.5};
        \addplot[mark=none,draw=black,thick,dotted] {-7.5+1.022};
        \addplot[mark=none,draw=black,thick] {-1.022+0.3*x};
        \addplot[mark=none, draw = none] {0.022*x^2} ;
        \addplot[mark=none,fill=cyan,draw=white] {-0.044*x^2} \closedcycle;
    \end{axis}
}
\caption{\label{fig:balancing} Depiction of the balancing of first- and second- order terms in our proof. The expected maximum energy $\BE \lambda_{\max}(\Hbf)$ is upper bounded by $\CO(\sqrt{n})$ (thick black line) and conjectured to be $\Theta(\sqrt{n}/k)$. The lower bound on the expected achievable energy is derived by expanding $\bTr[\e^{\lind^\dag t}(\Hbf)]$ as a sum of a linear-in-$t$ term (solid black line) and an $\CO(t^2)$ remainder term. The linear-in-$t$ term is computed exactly and the magnitude of the $\CO(t^2)$ term is upper bounded (indicated by shaded blue region). Since we seek a lower bound on $\bTr[\e^{\lind^\dag t}(\Hbf)]$, we choose $t$ such that the bottom edge of the blue region is maximized, indicated by the red dot, occurring at $t= \Theta(1/k)$ and $\bTr[\e^{\lind^\dag t}(\Hbf)] = \Theta(\sqrt{n}/k)$. If $t$ is chosen larger than this, the bound becomes worse; this contrasts with the expected behavior of detailed-balance Gibbs sampling algorithms, which are expected to improve strictly over time.}
\end{figure}

The magnitude of the $\CO(t^2)$ term is upper bounded in \Cref{prop:second_term_final_bound}; its proof requires a subproposition (\Cref{prop:superoperator_norm}), which is proved with a diagrammatic method in \Cref{sec:diagrammatic_calculus}. We provide intuition for \Cref{prop:superoperator_norm} in \Cref{sec:intuition_second_order_term}.

\subsection{Proof of \Cref{thm:algo_performance}}

In this section, we provide the proof of~\cref{thm:algo_performance}, modulo the assertion of some propositions, whose proofs are deferred. We begin by emphasizing the introduction of the Rademacher variables $s_\gamma \in \{+1,-1\}$ for $\gamma \in \Gamma$ that make the random signs explicit in \cref{cond:random_Hamiltonians}:
\begin{align}
    \Hbf = \sum_\gamma s_\gamma \Hbf_\gamma\quad \text{where}\quad s_\gamma \stackrel{i.i.d.}{\sim} \mathrm{Unif}(\{+1,-1\}).
\end{align}
In fact, in our proofs, we will only make use of the randomness on the signs. Henceforth, all expectation values $\BE$ that appear randomize only over the $s_\gamma$ variables, and we may consider the quantities $\Hbf_\gamma$ as fixed. 
Consequently, we can organize the Lindbladian in \cref{eq:lindbladian_def} as a degree-two polynomial of the Rademachers
\begin{align}\label{eq:lind^a_as_Rademacher_polynomial}
    \lind^{a \dag} = \lind^{a \dag}_0 + \sum_{\gamma \in \Gamma} s_{\gamma}\lind^{a\dag}_{\gamma} + \sum_{\substack{\gamma,\gamma' \in \Gamma \\ \gamma\ne\gamma'}} s_{\gamma}s_{\gamma'}\lind^{a\dag}_{\gamma \gamma'},
\end{align}
where
\begin{align}
    \lind_0^{a \dag} (\Obf) & = \Abf^{a \dag} \Obf \Abf^a + \sum_\gamma y^2 \comm{\Abf^a}{\Hbf_\gamma}^\dag \Obf \comm{\Abf^a}{\Hbf_\gamma} -\frac{1}{2} \acomm{\id + y^2 \sum_\gamma \comm{\Abf^a}{\Hbf_\gamma}^\dag \comm{\Abf^a}{\Hbf_\gamma}}{\Obf}, \label{eq:lind_0^a}\\
    \lind_\gamma^{a \dag} (\Obf) & = y \comm{\Abf^a}{\Hbf_\gamma}^\dag \Obf \Abf^a + y \Abf^{a \dag} \Obf \comm{\Abf^a}{\Hbf_\gamma} -\frac{1}{2} y \acomm{\comm{\Abf^a}{\Hbf_\gamma}^\dag \Abf^a}{\Obf} -\frac{1}{2} y\acomm{\Abf^{a \dag} \comm{\Abf^a}{\Hbf_\gamma}}{\Obf}, \label{eq:lind_gamma^a}\\
    \lind_{\gamma \gamma'}^{a \dag} (\Obf) & = \delta_{\gamma\neq\gamma'} y^2 \left( \comm{\Abf^a}{\Hbf_\gamma}^\dag \Obf \comm{\Abf^a}{\Hbf_{\gamma'}} -\frac{1}{2} \acomm{\comm{\Abf^a}{\Hbf_\gamma}^\dag \comm{\Abf^a}{\Hbf_{\gamma'}}}{\Obf} \right). \label{eq:lind_gammagamma'^a}
\end{align}
Note that $\lind^{a \dag}_0$ contains the $\lind_{\gamma\gamma}^{a\dag}$ terms since $s_{\gamma}^2 =1$. 

The quantity of interest is the expectation value of $\Hbf$ in the evolved state, averaged over the random choice of the Rademachers, that is
\begin{equation}
    \BE \bTr[\e^{\lind^\dag t}(\Hbf)]  = \sum_{\gamma \in \Gamma} \BE s_\gamma \bTr[\e^{\lind^\dag t}(\Hbf_\gamma)]. 
\end{equation}
Note that $\lind^\dag$ has hidden dependence on the Rademachers $s_\gamma$ through its dependence on $\Hbf$. By invoking the fundamental theorem of calculus twice, we may write
\begin{align}
    \e^{\lind^\dag t}(\Hbf) &= \Hbf + \int_{0}^t \rd t_2 \left[\frac{ \rd}{ \rd u} \e^{\lind^\dag u}(\Hbf)\right]_{u=t_2} = \Hbf + \int_{0}^t \rd t_2 \e^{\lind^\dag t_2}(\lind^\dag(\Hbf)) \\
    &= \Hbf + {t\lind^\dag(\Hbf)} +\int_{0}^t \rd t_2 \int_{0}^{t_2} \rd t_1 \e^{\lind^\dag t_1}(\lind^\dag(\lind^\dag(\Hbf))).
\end{align}
Thus, we can expand the average energy into a zeroth-, first-, and second-order term in $t$:
\begin{align}
    \EV \bTr[\e^{\lind^\dag t}(\Hbf)] &= \sum_{\gamma \in \Gamma} \EV \bTr[\e^{\lind^\dag t}(s_\gamma \Hbf_\gamma)] \nonumber  \\
    &= \undersetbrace{=0}{\sum_{\gamma \in \Gamma} \EV s_\gamma \bTr [\Hbf_\gamma]} +\undersetbrace{\text{first-order term } T_1}{ \sum_{\gamma \in \Gamma} \EV s_\gamma \bTr [\lind^\dag( \Hbf_\gamma)] t} + \undersetbrace{\text{second-order term } T_2}{\sum_{\gamma \in \Gamma} \EV s_\gamma \bTr \left[ \int_{t>t_2>t_1>0} \e^{\lind^\dag t_1} (\lind^\dag)^2 (\Hbf_\gamma) 
    \rd t_1 \rd t_2 \right]}. \label{eq:linear_quadratic_terms}
\end{align}
The zeroth order term vanishes under averaging over the Rademachers. The other two terms are handled separately. 

First, we compute the first-order term $T_1$ exactly. 
\begin{proposition}[First order term]\label{prop:first_order_term}
We can evaluate the quantity $T_1$ (see \cref{{eq:linear_quadratic_terms}}) as
    \begin{align}
         T_1 &= -8 yt \HgloPower{2} \aac k.
    \end{align}
\end{proposition}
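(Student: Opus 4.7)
The plan is a direct computation via three moves: isolate the Rademacher-linear part of $\lind^\dag$, simplify the resulting trace via cyclicity and anti-Hermiticity, and then sum using the connectivity identity from \Cref{cond:aloc}. First, expand $\EV s_\gamma \lind^\dag(\Hbf_\gamma)$ using the decomposition \cref{eq:lind^a_as_Rademacher_polynomial}. The Rademacher second moments $\EV s_\gamma = 0$, $\EV s_\gamma s_{\gamma'} = \delta_{\gamma\gamma'}$, and $\EV s_\gamma s_{\gamma'} s_{\gamma''} = 0$ for distinct $\gamma',\gamma''$ annihilate the zeroth- and quadratic-order pieces, leaving
\begin{align*}
T_1 = t\sum_{\gamma \in \Gamma}\sum_{a \in A} \bTr\bigl[\lind^{a\dag}_\gamma(\Hbf_\gamma)\bigr].
\end{align*}

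Second, evaluate the summand. Writing $K := \comm{\Abf^a}{\Hbf_\gamma}$ and using that $\Abf^a$ and $\Hbf_\gamma$ are both Hermitian so that $K^\dag = -K$, the four-term formula \cref{eq:lind_gamma^a} collapses under cyclicity of $\bTr$ to the clean form
\begin{align*}
\bTr\bigl[\lind^{a\dag}_\gamma(\Hbf_\gamma)\bigr] = -2y\,\bTr\bigl[ \comm{\Abf^a}{\comm{\Abf^a}{\Hbf_\gamma}}\, \Hbf_\gamma\bigr].
\end{align*}
Expanding the double commutator with $(\Abf^a)^2 = \Ibf$ yields $\comm{\Abf^a}{\comm{\Abf^a}{\Hbf_\gamma}} = 2(\Hbf_\gamma - \Abf^a \Hbf_\gamma \Abf^a)$. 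By \Cref{cond:commute}, when $b_{a\gamma} = 0$ the operators commute so that $\Abf^a \Hbf_\gamma \Abf^a = \Hbf_\gamma$ and the expression vanishes; when $b_{a\gamma} = 1$ the anticommutation gives $\Abf^a \Hbf_\gamma \Abf^a = -\Hbf_\gamma$ and the double commutator equals $4\Hbf_\gamma$. Using $\Hbf_\gamma^2 = h_\gamma^2 \Ibf$, both cases uniformly give $\bTr[\lind^{a\dag}_\gamma(\Hbf_\gamma)] = -8y\, b_{a\gamma}\, h_\gamma^2$.

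Third, sum over $a$ and $\gamma$. Applying the counting identity $\sum_a b_{a\gamma} = \aac k$ from \Cref{cond:aloc} and recognizing $\sum_\gamma h_\gamma^2 = \HgloPower{2}$ from \Cref{def:local_global_energies} yields $T_1 = -8yt\, \aac k\, \HgloPower{2}$, as asserted. The calculation is essentially bookkeeping; the only step requiring care is collapsing the four pieces of \cref{eq:lind_gamma^a} into a single double commutator via $K^\dag = -K$ and cyclicity of $\bTr$, since individually the four terms look independent. No probabilistic or analytic ingredients beyond the Rademacher second moments are needed.
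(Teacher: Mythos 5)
Your proof is correct and follows essentially the same route as the paper: Rademacher averaging isolates the $\lind^{a\dag}_\gamma(\Hbf_\gamma)$ terms, the summand reduces to $-8y\,b_{a\gamma}h_\gamma^2$, and the sum is evaluated via $\sum_a b_{a\gamma}=\aac k$ and $\sum_\gamma h_\gamma^2=\HgloPower{2}$. Your intermediate collapse of the four terms into $-2y\,\bTr\bigl[\comm{\Abf^a}{\comm{\Abf^a}{\Hbf_\gamma}}\Hbf_\gamma\bigr]$ is a valid (and slightly more explicit) way of carrying out the case analysis the paper leaves implicit.
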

\begin{proof}
We may write
\begin{align}
    T_1 := t \sum_{\gamma \in \Gamma} \EV \bTr [\lind^\dag(s_\gamma \Hbf_\gamma)] & = t\sum_{\gamma \in \Gamma} \EV \bTr \left[ \sum_{a \in A} \left( \lind_0^{a \dag} + \sum_{\beta \in \Gamma} s_\beta \lind_\beta^{a \dag} + \sum_{\substack{\beta,\beta' \in \Gamma \\ \beta \neq \beta'}} s_\beta s_{\beta'} \lind_{\beta \beta'}^{a \dag} \right) (s_\gamma \Hbf_\gamma) \right] \\
    & = t\sum_{\gamma \in \Gamma}  \sum_{a \in A} \bTr \lind_\gamma^{a \dag} (\Hbf_\gamma),
\end{align}
where the averaging over Rademachers killed most terms:
\begin{align}
\EV s_\beta s_\gamma =\delta_{\beta\gamma}\quad \text{and}\quad \EV s_\beta s_{\beta'} s_\gamma = 0\quad \text{for all} \quad \beta, \beta', \gamma.  
\end{align}
Here, $\delta_{\beta\gamma}$ denotes the Kronecker delta symbol. 
Applying \cref{eq:lind_gamma^a}, we find that
\begin{align}
    \bTr [\lind^{a\dag}_\gamma(\Hbf_\gamma)] & = y \comm{\Abf^a}{\Hbf_\gamma}^\dag \Hbf_\gamma \Abf^a + y \Abf^{a \dag} \Hbf_\gamma \comm{\Abf^a}{\Hbf_\gamma} -\frac{1}{2} y \acomm{\comm{\Abf^a}{\Hbf_\gamma}^\dag \Abf^a}{\Hbf_\gamma} -\frac{1}{2} y\acomm{\Abf^{a \dag} \comm{\Abf^a}{\Hbf_\gamma}}{\Hbf_\gamma}
\end{align}
which is easily seen to vanish in the case that $\Abf^a$ and $\Hbf_{\gamma}$ commute (i.e., $b_{a\gamma} = 0$). Therefore, the expression can be evaluated for the case that $\Abf^a$ and $\Hbf_\gamma$ anticommute (i.e.,~$b_{a\gamma}=1$), and expressed as 
\begin{align}
    \bTr [\lind^{a\dag}_\gamma(\Hbf_\gamma)]  &= -8 b_{a\gamma} h_\gamma^2 y \bTr[\Abf^{a \dagger} \Abf^a] = -8 b_{a\gamma} h_\gamma^2 y
\end{align}
where we have used the assumption that $\Hbf_\gamma^2 = h_\gamma^2 \Ibf$ from \Cref{cond:random_Hamiltonians} and $\Abf^{a \dagger} \Abf^a = \Ibf $ from \Cref{cond:commute}.
Finally, summing over $a \in A$ and $\gamma \in \Gamma$, we can exactly evaluate the first-order term as
\begin{align}\label{eq:first_term_radem}
    T_1 = \sum_{\gamma \in \Gamma} \sum_{a \in A} \bTr [ \lind_\gamma^{a \dag} (\Hbf_\gamma) t ]&= -8 y t \sum_{\gamma \in \Gamma} h_\gamma^2 \left(\sum_{a \in A} b_{a\gamma} \right) \\
    &= -8 yt \HgloPower{2} \aac k,
\end{align}
as advertised, where we have utilized the definition of $\aac$ in \cref{cond:aloc}.
\end{proof}

For the second-order term, we state the upper bound and defer its proof to the following section.

\begin{proposition}[Bounds on the second order term]\label{prop:second_term_final_bound}
Suppose that $y$ and $t$ satisfy the relations
\begin{align}
 t &\ge 0, \\
    \aloc k t &< 1, \\
   y^2\HlocPower{2} \aloc k &< 1/8.
\end{align}
Then, there is a universal constant $c_2$, such that, with $T_2$ defined as in \cref{eq:linear_quadratic_terms}, we have
    \begin{equation}
        |T_2| \leq c_2 |y| t^2 \alocPower{2} k^2 \HgloPower{2}.
    \end{equation}
\end{proposition}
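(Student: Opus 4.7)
The plan is to bound $|T_2|$ by decomposing $(\lind^\dag)^2(\Hbf_\gamma)$ and the semigroup $e^{\lind^\dag t_1}$ as polynomials in the Rademacher signs $\{s_\beta\}_{\beta\in\Gamma}$ via the expansion of $\lind^\dag$ given earlier, evaluating the expectation $\EV[s_\gamma \, \cdot \,]$ symbolically, and then bounding the surviving operator-valued monomials using the superoperator-norm proposition (\cref{prop:superoperator_norm}). The fact that the advertised bound is only linear in $|y|$ signals that Rademacher cancellation is essential: the zeroth-order-in-$y$ part of $(\lind^\dag)^2(\Hbf_\gamma)$ equals $4\aac^2 k^2 \Hbf_\gamma$ and carries no $s$-dependence, and similarly $e^{\lind^\dag t_1}|_{y=0}$ is $s$-independent (the $y=0$ Lindbladian uses only the $\Abf^a$), so pairing with $s_\gamma$ under $\EV$ annihilates the entire $y^0$ contribution. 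Only monomials of odd total $s$-degree survive, and the leading such contribution is $O(y)$.

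Concretely, I would group the terms of the Rademacher expansion as $\lind^\dag = P_0 + P_1 + P_2$, where $P_0$ is $s$-independent (containing both the $y^0$ and $y^2$ pieces), $P_1 = O(y)$ is linear in $s$, and $P_2 = O(y^2)$ is quadratic in $s$. Then $(\lind^\dag)^2(\Hbf_\gamma)$ is a polynomial in $s$ of degree at most $4$, and multiplying by $s_\gamma$ and averaging retains only those monomials in which $s_\gamma$ appears to odd order and each other $s_\beta$ to even order. To extract the $s$-dependence of $e^{\lind^\dag t_1}$, I would use a Duhamel expansion around the $s$-free semigroup $e^{\lind^\dag|_{y=0} t_1}$ (which is unital completely positive, hence contractive in operator norm), so that each additional Duhamel correction carries at least one extra factor of $y$; this expansion is well controlled under the assumptions $\aloc k t < 1$ and $y^2 \HlocPower{2} \aloc k < 1/8$.

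The main technical input is \cref{prop:superoperator_norm}, which bounds the operator norm of compositions of the form $\lind^{a\dag}_\alpha \lind^{b\dag}_{\alpha'}(\Hbf_\gamma)$ via the diagrammatic calculus of \cref{sec:diagrammatic_calculus}. The leading $O(y)$ contribution arises from $e^{\lind^\dag|_{y=0} t_1}((P_0 P_1 + P_1 P_0)(\Hbf_\gamma))$ and from the first Duhamel correction of $e^{\lind^\dag t_1}$ acting on $P_0^2(\Hbf_\gamma)$. The sum over jump labels $a,b\in A$ is restricted by \cref{cond:commute,cond:aloc} to pairs that actually couple to the active Hamiltonian terms via anticommutation, producing a factor of $\alocPower{2} k^2$ in place of the naive $(\aloc n)^2$. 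The sum over $\gamma \in \Gamma$ then produces $\sum_\gamma h_\gamma^2 = \HgloPower{2}$ by the definition of the global energy, and the factor $t^2/2$ comes from the double time integral. Higher-order-in-$y$ monomials (contributing $y^3, y^5, \ldots$) are bounded analogously and absorbed into $c_2$ via the perturbativity assumption.

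The main obstacle is maintaining Rademacher cancellation while simultaneously controlling the semigroup's own $s$-dependence. A naive triangle-inequality bound of the form $|\EV s_\gamma \bTr[\cdots]| \le \EV \|(\lind^\dag)^2(\Hbf_\gamma)\|_{\mathrm{op}}$ would retain the $y^0$ piece of size $\Theta(\aac^2 k^2 \HgloPower{2})$ and fail immediately. Avoiding this requires performing the Rademacher average symbolically before passing to any operator-norm bound, invoking \cref{prop:superoperator_norm} only on the surviving monomials; the diagrammatic calculus is precisely what organizes this symbolic averaging at scale while tracking the locality constraints that convert the naive $n^2 |\Gamma| \HlocPower{2}$ count into the much tighter $\alocPower{2} k^2 \HgloPower{2}$ scaling.
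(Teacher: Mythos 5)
Your overall strategy---expand everything as a polynomial in the Rademachers, perform the average symbolically before taking any norms, and only then bound the surviving monomials---is the same as the paper's, and your identification of the leading $\CO(y)$ contributions and of the $\alocPower{2}k^2\HgloPower{2}$ accounting for the finite compositions is essentially right. However, there is a genuine gap in your treatment of the semigroup $\e^{\lind^\dag t_1}$. You propose a Duhamel expansion around the $s$-free semigroup $\e^{\lind^\dag|_{y=0}t_1}$ and claim that each correction ``carries at least one extra factor of $y$'' and is ``well controlled'' under the stated hypotheses. The problem is that each Duhamel correction brings down the full perturbation $\Delta = \sum_\beta s_\beta\lind_\beta^\dag + \sum_{\beta\neq\beta'}s_\beta s_{\beta'}\lind_{\beta\beta'}^\dag + \CO(y^2)$, whose operator norm, bounded term by term, is at least of order $|y|\,k\sum_{\beta\in\Gamma}h_\beta$. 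For the sparsified models this is $\Theta(|y|k\sqrt{m})$, so $t\norm{\Delta} = \Theta(|y|\sqrt{m})\gg 1$: the series does not converge in norm, and even its first correction is unacceptably large unless the descended Rademacher $s_\beta$ is paired---with the downstairs $s_\gamma$, with another descended Rademacher, or with itself---so that one ends up summing $h_\beta^2$ over terms of overlapping support (yielding $\HlocPower{2}$) rather than $h_\beta$ over all of $\Gamma$. This pairing-and-recursion, together with a proof that the unpaired remainder vanishes as the recursion depth grows, is exactly the content of \cref{prop:superoperator_norm} and the diagrammatic calculus of \cref{sec:diagrammatic_calculus}; your hypotheses $\aloc kt<1$ and $y^2\HlocPower{2}\aloc k<1/8$ only tame the series \emph{after} that cancellation has been extracted.

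Relatedly, you have misattributed the roles of the two supporting propositions. \Cref{prop:superoperator_norm} does not bound operator norms of finite compositions $\lind_\alpha^{a\dag}\lind_{\alpha'}^{b\dag}(\Hbf_\gamma)$; it bounds the norm of the linear functional $\vO\mapsto\BE\left[\left(\prod_{\alpha\in U}s_\alpha\right)\bTr\int\e^{\lind^\dag t_1}(\vO)\right]$, i.e., precisely the Rademacher-averaged, time-integrated semigroup whose control your Duhamel sketch is missing. The finite compositions $\Obf_U$ are bounded separately in \cref{prop:sum_over_operators}, and the paper's proof is the factorization $|T_2|\leq\sum_U\norm{\BE[(\prod_{\alpha\in U}s_\alpha)\bTr\int\e^{\lind^\dag t_1}(\cdot)]}\cdot\norm{\Obf_U}$, with one proposition supplying each factor. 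If you invoke \cref{prop:superoperator_norm} for what it actually states, your $y=0$ Duhamel expansion becomes unnecessary and the argument closes; as written, the one tool that would repair the gap is being pointed at the wrong object.
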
 

Now, we explain how $y$ and $t$ can be chosen in such a way to entail the theorem statement. We let $y = -c_y/(\sqrt{k} \Hloc)$, $t = c_t / k$ for universal constants $c_y$ and $c_t$ chosen as
\begin{align}
    c_t &= \min(1/(2\aloc)\,,\; 4\aac /(c_2 \alocPower{2})) \\
    c_y &= 1/(3\sqrt{\aloc})\,.
\end{align}
These choices ensure that the relations above are met, and also that
\begin{equation}
    |T_2| \leq 4|y| t \aac k \HgloPower{2}  = |T_1|/2.
\end{equation}
Since $|T_2| \leq |T_1|/2$, we can conclude that 
\begin{equation}\label{eq:T1_T2_balancing}
    T_1 + T_2 \geq \frac{1}{2} T_1 = -4 y t \HgloPower{2} \aac k = \frac{4c_y c_t \aac}{\sqrt{k}}\frac{\HgloPower{2}}{\Hloc} = \frac{c_H}{\sqrt{k}}\frac{\HgloPower{2}}{\Hloc}\,,
\end{equation}
where we choose $c_H = 4c_y c_t \aac$, 
proving \Cref{thm:algo_performance}.

\subsection{Proof of bound on second-order term (\cref{prop:second_term_final_bound})}\label{sec:proof_second_term_bound}

The goal is to upper bound the magnitude of the second-order term defined in \cref{eq:linear_quadratic_terms}.
This term can be expressed as
\begin{align}\label{eq:all_cases}
&\sum_\gamma\EV \bTr \int_{0}^t \rd t_2 \int_{0}^{t_2} \rd t_1 \e^{\lind^\dag t_1} \Bigg[ \nonumber\\
&\qquad \left( \lind_0^\dag+ \sum_{\gamma_2 \in \Gamma} s_{\gamma_2} \lind_{\gamma_2}^\dag + \frac{1}{2}\sum_{\substack{\gamma_2, \gamma_2' \in \Gamma \\\gamma_2\neq\gamma'_2}} s_{\gamma_2} s_{\gamma'_2} \lind_{\gamma_2\gamma'_2}^\dag \right) 
\left( \lind_0^\dag+ \sum_{\gamma_1} s_{\gamma_1} \lind_{\gamma_1}^\dag + \frac{1}{2}\sum_{\substack{\gamma_1,\gamma_1' \in \Gamma \\\gamma_1\neq\gamma'_1}} s_{\gamma_1} s_{\gamma'_1} \lind_{\gamma_1\gamma'_1}^\dag \right) (s_\gamma\Hbf_\gamma) \Bigg],
\end{align}
where we abbreviated
\begin{align}\label{eq:lind_0/gamma/gammagamma'_def}
    \lind_0^\dag:= \sum_{a \in A} \lind_0^{a \dag}, \qquad \lind_\gamma^\dag := \sum_{a \in A} \lind_\gamma^{a \dag}, \qquad \lind_{\gamma\gamma'}^\dag := \sum_{a \in A} \L(\lind_{\gamma\gamma'}^{a \dag} + \lind_{\gamma'\gamma}^{a \dag}\R).
\end{align}
By picking a term from each parenthesis, we arrive at $9$ terms to evaluate. If we choose the final term in each parenthesis, we end up with a term with 5 Rademacher variables $s_{\gamma_2}s_{\gamma_2'}s_{\gamma_1}s_{\gamma_1'}s_\gamma$ that can be brought outside of the integral---this is the maximum number of Rademachers for any term. 

We may group terms in the expression for $T_2$ from \cref{eq:all_cases} based on which Rademacher variables appear in the term:
\begin{equation}\label{eq:expansion_T2_rademachers}
    T_2 = \sum_{\substack{U \subset \Gamma \\ |U| \leq 5}} \BE\left[\left(\prod_{\alpha \in U} s_\alpha\right) \bTr \int_{t > t_2> t_1> 0} \rd t_1 \rd t_2 \e^{\lind^\dag t_1}\Obf_U \right].
\end{equation}
Each term is labeled by a subset $U \subset \Gamma$ with at most 5 elements, denoting which Rademachers $s_\gamma$ appear (note that the Lindbladian $\lind$ also implicitly contains a dependence on all the Rademachers, but these are in the exponent and cannot be brought outside the integral). The operators $\Obf_U$ depend on Lindbladian terms, but they are independent of the random Rademacher variables. For example for $U = \{\gamma_1,\gamma_2\}$, the associated term looks like
\begin{align}
    \Obf_{\{\gamma_1,\gamma_2\}} &= \lind_{\gamma_1}^\dag\lind_0^\dag(\Hbf_{\gamma_2}) +\lind_{\gamma_2}^\dag\lind_0^\dag(\Hbf_{\gamma_1}) + \lind_0^\dag\lind_{\gamma_1}^\dag (\Hbf_{\gamma_2}) +\lind_{0}^\dag\lind_{\gamma_2}^\dag(\Hbf_{\gamma_1}) \nonumber \\
    &\qquad + \sum_{\gamma_3 \in \Gamma} \left(\lind_{\gamma_1\gamma_2}^\dag\lind_{\gamma_3}^\dag(\Hbf_{\gamma_3}) + \lind_{\gamma_1\gamma_3}^\dag\lind_{\gamma_2}^\dag(\Hbf_{\gamma_3}) + \lind_{\gamma_2\gamma_3}^\dag\lind_{\gamma_1}^\dag(\Hbf_{\gamma_3}) +\lind_{\gamma_1\gamma_3}^\dag\lind_{\gamma_3}^\dag(\Hbf_{\gamma_2}) + \lind_{\gamma_2\gamma_3}^\dag\lind_{\gamma_3}^\dag(\Hbf_{\gamma_1})\right) \nonumber\\
&\qquad + \sum_{\gamma_3 \in \Gamma} \left(\lind_{\gamma_3}^\dag\lind_{\gamma_1\gamma_2}^\dag(\Hbf_{\gamma_3}) + \lind_{\gamma_2}^\dag\lind_{\gamma_1\gamma_3}^\dag(\Hbf_{\gamma_3}) + \lind_{\gamma_1}^\dag\lind_{\gamma_2\gamma_3}^\dag(\Hbf_{\gamma_3}) +\lind_{\gamma_3}^\dag\lind_{\gamma_1\gamma_3}^\dag(\Hbf_{\gamma_2}) + \lind_{\gamma_3}^\dag\lind_{\gamma_2\gamma_3}^\dag(\Hbf_{\gamma_1})\right).
\end{align}
and for $U = \varnothing$, we have
\begin{align}
    \Obf_{\varnothing} = \sum_{\gamma \in \Gamma}\L(\lind^\dag_0\lind^{\dag}_\gamma (\Hbf_{\gamma}) + \lind^\dag_\gamma\lind^{\dag}_0 (\Hbf_{\gamma})\R) + \sum_{\substack{\gamma_1,\gamma_2\in \Gamma \\ \gamma_1 \neq \gamma_2} }\L(\lind^\dag_{\gamma_2}\lind^{\dag}_{\gamma_1 \gamma_2} (\Hbf_{\gamma_1}) + \lind^\dag_{\gamma_1\gamma_2}\lind^{\dag}_{\gamma_2} (\Hbf_{\gamma_1})\R)
\end{align}
Since these $\Obf_U$ do not depend on the random Rademachers, we may upper bound the value of each term by a product of norms:
\begin{equation}\label{eq:abs_T2_two_norms}
    |T_2| \leq \sum_{\substack{U \subset \Gamma \\ |U| \leq 5}}\norm{ \BE\left[\left(\prod_{\alpha \in U} s_\alpha\right) \bTr \int_{t > t_2> t_1> 0} \rd t_1 \rd t_2 \e^{\lind^\dag t_1}(\cdot)\right]}\cdot\norm{\Obf_U} 
\end{equation}
where the first norm denotes a norm for the linear functional
\begin{align}
    \norm{\bTr[\CM^{\dagger}(\cdot)]} :=\sup_{\vO}\frac{\labs{\bTr[\CM^{\dagger}(\vO)]}}{\norm{\vO}} = \sup_{\vO}\frac{\labs{\Tr[\CM(\mubf) \Obf]}}{\norm{\vO}} = \norm{\CM(\vec{\mubf})}_1,
\end{align}
for superoperator $\CM$ being the expected-integral over signed Lindbladian dynamics.\footnote{If there were no signs, this would have norm exactly one. The random signs further decrease the norm, which we need to capture carefully.} 

This expression can be bounded in two steps. First, we are able to upper bound the norm of the trace-expected-integral superoperator, as follows. (Recall the definition of $h_\alpha$ as the strength of Hamiltonian term $\Hbf_\alpha$ in \cref{cond:random_Hamiltonians}.)
\begin{proposition}\label{prop:superoperator_norm}
Fix a subset $U \subset \Gamma$ with $|U| \leq 5$. Suppose that $y$ and $t$ satisfy the relations     
$t\ge0$, $\aloc k t < 1$, and $y^2 \HlocPower{2} \aloc k < 1/8 $. Then 
\begin{align}
    \norm{ \BE\left[\left(\prod_{\alpha \in U} s_\alpha\right) \bTr \int_{t > t_2> t_1> 0} \rd t_1 \rd t_2 \e^{\lind^\dag t_1}(\cdot)\right]} \leq \CO\L(t^2\R) \cdot |y|^{|U|} \cdot \prod_{\alpha \in U} h_\alpha
\end{align}
\end{proposition}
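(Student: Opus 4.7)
The plan is to Dyson-expand $\e^{\lind t_1}$ around the Rademacher-independent part of the Lindbladian, use the Rademacher expectation to kill all but a restricted class of Wick-contracted diagrams, and bound each surviving diagram via the CPTP property of the free part.

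Split $\lind = \CB + \CR$ with $\CB := \lind_0$ Rademacher-independent and $\CR := \sum_\gamma s_\gamma \lind_\gamma + \sum_{\gamma\ne\gamma'} s_\gamma s_{\gamma'} \lind_{\gamma\gamma'}$ carrying all sign dependence. Crucially, each $\lind_0^{a\dag}$ is itself a legitimate Lindblad generator, written explicitly in Lindblad form with jump operators $\vA^a$ and $\{y[\vA^a,\vH_\gamma]\}_{\gamma\in\Gamma}$, so $\CB$ generates a CPTP semigroup and $\|\e^{\CB u}\|_{1\to 1}\le 1$ for $u\ge 0$. Writing the Dyson series
\[
\e^{\lind t_1}(\mubf) = \sum_{N=0}^{\infty} \int_{0<u_1<\cdots<u_N<t_1} \e^{\CB(t_1-u_N)}\, \CR\, \e^{\CB(u_N-u_{N-1})}\, \CR\,\cdots\,\CR\, \e^{\CB u_1}(\mubf)\,\rd\vec{u},
\]
and expanding each $\CR$ as a sum over its $s_\gamma\lind_\gamma$ and $s_\gamma s_{\gamma'}\lind_{\gamma\gamma'}$ pieces, I would label each resulting term by a \emph{diagram}: an ordered sequence of $N$ slots, each tagged with either one Hamiltonian index or an unordered pair of distinct indices. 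Applying $\BE\prod_{\alpha\in U} s_\alpha(\cdot)$ then annihilates every diagram whose total multiset of tags fails to have each $\alpha\in U$ at odd multiplicity and every other $\gamma$ at even multiplicity.

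Each surviving diagram is bounded by the triangle inequality, using $\|\e^{\CB u}\|_{1\to 1}\le 1$ to discard the intermediate propagators and $\|\mubf\|_1=1$, leaving only the product of per-slot operator norms and the simplex volume $t_1^N/N!$. Elementary per-slot estimates---following from $(\vA^a)^2=\vI$, $\vH_\gamma^2=h_\gamma^2\vI$, the identity $[\vA^a,\vH_\gamma]=2 b_{a\gamma}\vA^a\vH_\gamma$, and the counting axioms of \Cref{cond:aloc}---show that each tagged index in a slot contributes exactly one factor of $|y|$ and one factor of $h_\gamma$. Hence every surviving diagram inherits at least the overall prefactor $|y|^{|U|}\prod_{\alpha\in U}h_\alpha$.

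The main obstacle is to control the subleading (higher-$N$) diagrams, whose combinatorial count grows rapidly; this is where the smallness hypotheses enter. The minimal diagrams---$N=|U|$ slots, each hosting a distinct $\lind_\alpha$ with $\alpha\in U$---contribute at most $\CO((\aac k t_1)^{|U|})\cdot|y|^{|U|}\prod h_\alpha = \CO(|y|^{|U|}\prod h_\alpha)$ under $\aac k t_1\le \aloc k t<1$. For subleading diagrams, locality of the $b_{a\gamma}$ forces each extra Wick-contracted pair of Rademachers to share support through a common jump site; summing over compatible indices, each such pair costs at most a factor $\CO(y^2\HlocPower{2}\aloc k)<1/8$, while each additional $\e^{\CB u}\,\CR$ propagator step costs at most $\CO(\aloc k t_1)<1$. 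Thus the sum over $N$ converges geometrically to $\CO(1)$ times the minimal diagram. Combining yields the uniform-in-$t_1$ bound $\CO(|y|^{|U|}\prod_{\alpha\in U} h_\alpha)$ on $\|\BE\prod_{\alpha\in U} s_\alpha\, \e^{\lind t_1}(\mubf)\|_1$; integrating against $\int_{t>t_2>t_1>0}\rd t_1\,\rd t_2=t^2/2$ then produces the announced $\CO(t^2)$ prefactor. The precise Wick-contraction bookkeeping---in particular the claim that summing over index choices along the Hamiltonian connectivity graph only costs the advertised locality factors---is the technical content of the diagrammatic calculus in \Cref{sec:diagrammatic_calculus}.
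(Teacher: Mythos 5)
Your setup has several correct ingredients: $\lind_0$ is indeed a legitimate Lindbladian generator (cf.\ \cref{lem:Lind_subset_is_Lindbladian}), the parity argument correctly identifies which Dyson terms survive the Rademacher average, and your estimate for the minimal diagrams matches the leading behavior. However, there is a genuine gap in the claim that ``the sum over $N$ converges geometrically to $\CO(1)$ times the minimal diagram.'' Fully expanding $\e^{\lind t_1}$ in a Dyson series around $\lind_0$ and then applying the triangle inequality term by term sacrifices the contractivity of the full evolution, and the surviving \emph{vacuum-type} insertions are not small. Concretely, take $U=\varnothing$ (a case that does occur, cf.\ $\Obf_\varnothing$) and the $N=2$ diagram in which both slots carry $s_\gamma \lind_\gamma^\dag$ for the \emph{same} $\gamma$: the parity constraint is satisfied, the index $\gamma$ is not anchored to any other index of the diagram, and the term-by-term bound is $\tfrac{t_1^2}{2}\sum_{\gamma\in\Gamma}\norm{\lind_\gamma^\dag}^2 \le 32\, t_1^2 y^2 \aac^2 k^2 \HgloPower{2} = \CO(y^2 \HgloPower{2})$ after using $\aloc k t<1$. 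The hypotheses only control $y^2\HlocPower{2}\aloc k$, and since $\HgloPower{2}/\HlocPower{2}$ can be as large as $n/k$, this single diagram can contribute $\Theta(n/k^2)$, dwarfing the target $\CO(1)$. Your claimed per-pair cost of $\CO(y^2\HlocPower{2}\aloc k)$ is valid only when the summed index is forced, through a shared jump site, to overlap an index already present in the diagram; self-paired indices produced purely by expanding the exponential have no such anchor and cost $\HgloPower{2}$, not $\HlocPower{2}$.

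The paper's proof avoids this precisely by never expanding those terms. The integration-by-parts identity (\cref{lem:shift_rule,lem:shift_rule_compact_notation}) combined with Duhamel's identity (\cref{lem:duhemal_compact_notation}) pulls down from the exponent exactly one factor at a time, and only in order to cancel the current leftmost downstairs Rademacher; the cancellation of the zeroth-order Duhamel term against the integration-by-parts counterterm is what prevents any uncontrolled branch from appearing. Everything not pulled down remains inside a restricted exponential $\e^{\lind^\dag_{\Gamma\setminus S}\,(t_{j+1}-t_j)}$, which is still a Lindbladian evolution and is bounded by $1$ via contractivity rather than expanded. Consequently every index that is ever summed over lies on a path rooted at an element of $U$ and is constrained by \cref{lem:sum_b_agamma_gammagamma} to overlap its predecessor, which is what produces the $\HlocPower{2}$ (rather than $\HgloPower{2}$) factors and the geometric convergence. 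To repair your argument you would need to expand only those Rademacher-dependent pieces that can contract with $U$, directly or through a chain of overlaps, while keeping everything else resummed into a contractive semigroup---which is essentially the paper's recursion.
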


Note that the Rademacher signs are all we needed to obtain a good estimate, and not the particular operators $\vO_U$. We sketch the proof of this in the following subsection, and give the formal proof in \Cref{sec:proof_superoperator_norm}.
Using this proposition, we can return to \cref{eq:abs_T2_two_norms} and write
\begin{equation}
    |T_2| \leq \CO(t^2) \cdot  \sum_{\substack{U \subset \Gamma \\ |U| \leq 5}}\left(\prod_{\alpha \in U} h_\alpha\right) |y|^{|U|} \norm{\Obf_U}.
\end{equation}

The final step is to evaluate this sum, using the following proposition. Here, there are no Lindbladian exponentials anymore, but merely some finitely many terms arising from $\CL^{\dagger}(\CL^{\dagger}(\vH))$ (with the averaging over Radamachers already handled in~\Cref{prop:superoperator_norm}). We sum over the terms in an almost brute-force fashion, paying special attention to the commutation structures of $\vH$ and $\{\vA^a\}$. 
\begin{proposition}\label{prop:sum_over_operators}
Let $\Obf_U$ be defined as in the expansion of \cref{eq:expansion_T2_rademachers} for each subset $U\subset \Gamma$, with $|U|\leq 5$. Suppose that $y$ satisfies the relation $y^2 \HlocPower{2} \aloc k < 1/8 $. Then we have
\begin{align}
&\sum_{\substack{U \subset \Gamma \\ |U| \leq 5}}\left(\prod_{\alpha \in U} h_\alpha\right) |y|^{|U|} \norm{\Obf_U} \nonumber 
\leq{} \CO(1) \cdot |y|\alocPower{2} k^2 \HgloPower{2}.
\end{align}
\end{proposition}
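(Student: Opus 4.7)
The plan is to expand each $\vO_U$ explicitly as a sum over double Lindbladian applications $\lind^\dag_{i_2} \lind^\dag_{i_1}(\vH_{\gamma_0})$, where $i_1, i_2$ range over the Rademacher-graded pieces $\{0\} \cup \Gamma \cup \{(\beta,\beta') : \beta \neq \beta' \in \Gamma\}$ and $\gamma_0 \in \Gamma$. The subset $U$ is then determined by the multiset XOR of Rademacher labels contributed by the two Lindbladian levels and the base index $\gamma_0$. It will be convenient to further decompose $\lind^\dag_0 = \lind^\dag_{0,y^0} + y^2 \lind^\dag_{0,y^2}$ according to~\cref{eq:lind_0^a}, so that each atomic piece carries a single definite power of $y$. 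The total sum then splits into a finite (though sizable) enumeration of configurations indexed by $(i_1, i_2, \gamma_0)$, each of which will be estimated by direct computation.

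The next step is to isolate the dominant contributions, which come from the configurations with the lowest total power of $|y|$ (ultimately, $|y|$ will be small). A direct computation using \Cref{cond:commute} and \Cref{cond:aloc} yields $\lind^{a\dag}_{0,y^0}(\vH_\gamma) = \vA^a \vH_\gamma \vA^a - \vH_\gamma = -2 b_{a\gamma} \vH_\gamma$, hence $\lind^\dag_{0,y^0}(\vH_\gamma) = -2\aac k \vH_\gamma$, and so $\lind^\dag_{0,y^0}\lind^\dag_{0,y^0}(\vH_{\gamma_0}) = 4\aacPower{2} k^2 \vH_{\gamma_0}$. This sits in $\vO_{\{\gamma_0\}}$ and contributes $\sum_{\gamma_0} |y| h_{\gamma_0} \cdot 4\aacPower{2} k^2 h_{\gamma_0} = 4 |y| \aacPower{2} k^2 \HgloPower{2}$, already matching the claimed bound up to constants. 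An analogous computation shows $\lind^{a\dag}_\gamma(\vH_\gamma) = -8 y b_{a\gamma} h_\gamma^2 \vI$, producing another contribution of the same order through terms like $\lind^\dag_\gamma \lind^\dag_{0,y^0}(\vH_\gamma)$, which (after summing over $\gamma$) lie in $\vO_\varnothing$.

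All remaining configurations are subdominant and are handled by careful triangle-inequality estimates. The key inputs are $\|[\vA^a,\vH_\gamma]\| \leq 2 h_\gamma$ (nonzero only when $b_{a\gamma}=1$), the counting identity $\sum_a b_{a\gamma} = \aac k \leq \aloc k$ from \Cref{cond:aloc}, and the norm estimates $\sum_{\gamma: i \in S(\gamma)} h_\gamma^2 \leq \HlocPower{2}$ and $\sum_\gamma h_\gamma^2 = \HgloPower{2}$ from \Cref{def:local_global_energies}. Each structural modification beyond the dominant configurations---replacing some $\lind^\dag_{0,y^0}$ with $\lind^\dag_{0,y^2}$, $\lind^\dag_\beta$, or $\lind^\dag_{\beta\beta'}$---introduces either a factor of $\CO(y^2 \HlocPower{2} \aloc k) < 1/8$ (by the hypothesis) or alters $U$ in a way whose extra $|y|^{|U|}\prod_\alpha h_\alpha$ weight is absorbed by a correspondingly smaller $\|\vO_U\|$. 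The total subdominant contribution is then bounded by a geometric series in $y^2 \HlocPower{2} \aloc k$, which sums to at most a universal constant times the dominant contribution, giving the claimed $\CO(1) \cdot |y| \aloc^2 k^2 \HgloPower{2}$.

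The principal technical obstacle is the case analysis itself. With the two Lindbladian levels, the $y^0/y^2$ splitting of $\lind^\dag_0$, and the variable base $\gamma_0$, there is a bounded but nontrivial number of configurations to examine. For each one, I must (a) determine the correct subset $U$ via the Rademacher XOR, (b) exploit commutation cancellations---for instance, the fact that $\lind^{a\dag}_\beta(\vH_{\gamma_0})$ vanishes unless $\vA^a$ anticommutes with \emph{both} $\vH_\beta$ and $\vH_{\gamma_0}$, which adds a constraint $b_{a\beta} b_{a\gamma_0}$ in front of every estimate---to avoid losing sharp factors before invoking the norm inequalities, and (c) verify that the weighted norm $|y|^{|U|}\prod_\alpha h_\alpha \cdot \|\vO_U\|$ fits within the geometric budget set up above. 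Keeping this bookkeeping sharp throughout is the main source of difficulty, rather than any single step.
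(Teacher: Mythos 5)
Your plan is correct and follows essentially the same route as the paper's proof: expand $\lind^\dag(\lind^\dag(\Hbf))$ into atomic Rademacher-graded pieces, determine $U$ from the parity (XOR) of the labels, compute the leading contributions exactly, and control everything else by triangle inequality plus the locality constraints $b_{a\gamma}$, the counting bound $\sum_a b_{a\gamma}=\aac k$, and the $\Hloc$/$\Hglo$ norms, with the hypothesis $y^2\HlocPower{2}\aloc k<1/8$ absorbing the higher-order terms. The paper packages your $y^0/y^2$ splitting of $\lind_0^\dag$ and the label bookkeeping into symmetrized bilinear operators $\CR^{a\dag}_{\gamma_1\gamma_2}$ indexed over $\bar{\Gamma}=\Gamma\cup\{0\}$ together with a parity weight $h(\gamma_0,\ldots,\gamma_4)$, and then carries out the sum coordinate by coordinate ($\gamma_4$, then $a'$, $\gamma_3$, $\gamma_2$, $a$, $\gamma_1$, $\gamma_0$), tracking label collisions at each stage; your dominant-term computations ($\lind^\dag_{0,y^0}(\Hbf_\gamma)=-2\aac k\,\Hbf_\gamma$ and $\lind^{a\dag}_\gamma(\Hbf_\gamma)=-8yb_{a\gamma}h_\gamma^2\Ibf$) match the paper's. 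Be aware, though, that the deferred case analysis is not a formality --- it constitutes essentially the entire body of the paper's proof (\cref{lem:R_applied_to_O}, \cref{lem:CC_set}, and the seven-stage summation), and your assertion that every subdominant configuration ``fits the geometric budget'' is exactly the claim that must be verified configuration by configuration, so the proposal should be read as a correct plan rather than a complete argument.
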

This proposition is proved in \Cref{sec:proof_sum_over_operators}. 
Utilizing the proposition, we obtain the desired result:
\begin{equation}
    |T_2| \leq \CO(t^2) \cdot |y| \alocPower{2} k^2 \HgloPower{2}\,.
\end{equation}

\subsection{Expansion of Rademacher expressions and proof intuition}\label{sec:intuition_second_order_term}

Here we sketch the idea behind the calculation of the second-order term $T_2$, captured formally in \Cref{prop:superoperator_norm} and \Cref{prop:sum_over_operators}. To illustrate, we first examine a representative contribution to $T_2$ from \cref{eq:expansion_T2_rademachers}, taking $U = \{\gamma_1,\gamma_2\}$ and considering only one of the terms of $\Obf_{U}$: 
\begin{align}
    \EV \left[ s_{\gamma_1}s_{\gamma_2} \bTr \int_{t>t_2>t_1>0} \rd t_1 \rd t_2 \e^{\lind^\dag t_1} \lind^\dag_{\gamma_2}(\lind_0^\dag (\Hbf_{\gamma_1}))\right].
\end{align}
A term like this will appear for all choices of $\gamma_1$ and $\gamma_2$, so we would like to be able to compute quantities where these are summed over, for example
    \begin{align}\label{eq:CT}
    \CT := \sum_{\substack{\gamma_1,\gamma_2 \in \Gamma \\ \gamma_1 \neq \gamma_2}}\EV \left[ s_{\gamma_1}s_{\gamma_2} \bTr \int_{t>t_2>t_1>0} \rd t_1 \rd t_2 \e^{\lind^\dag t_1} \lind^\dag_{\gamma_2}(\lind_0^\dag (\Hbf_{\gamma_1}))\right].
\end{align}
This expression has complicated Rademacher dependence in the exponent of the $\e^{\lind^\dag t}$ term, and it is unclear how it can be bounded. The following lemma will play a part in that.

\begin{lemma}[Lindbladian evolution is norm contractive]
\label{lem:L_contract}
A Lindbladian evolution in the Heisenberg picture contracts the operator norm:
\begin{align}
\norm{\e^{\CL^\dag t}(\vO)} \le \norm{\vO} \quad \text{for any}\quad t \ge 0\quad \text{and operator}\quad \vO.
\end{align}
\end{lemma}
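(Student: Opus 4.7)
The plan is to reduce the statement to the well-known fact that a completely positive, unital map on $\CB(\CH)$ is contractive in the operator norm. Concretely, I would establish that the Heisenberg-picture semigroup $\Phi_t := \e^{\CL^\dagger t}$ is (i) completely positive, and (ii) unital, and then invoke Kadison's inequality.

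First, I would verify unitality: $\CL^\dagger(\vI) = \sum_a \bigl(\vK^{a\dagger}\vI\vK^a - \tfrac12\{\vK^{a\dagger}\vK^a,\vI\}\bigr) = \sum_a \bigl(\vK^{a\dagger}\vK^a - \vK^{a\dagger}\vK^a\bigr) = 0$, so $\Phi_t(\vI) = \vI$ for all $t$. Complete positivity of $\Phi_t$ is immediate from the fact that $\CL$ is in standard Lindblad form, so its Schrödinger-picture semigroup $\e^{\CL t}$ is CPTP, and the adjoint of a CPTP map is CP and unital.

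Next I would apply Kadison's inequality, which holds for any positive unital linear map: $\Phi_t(\vO)^{\dagger}\Phi_t(\vO) \le \Phi_t(\vO^\dagger \vO)$. Given $\vO$ with $\norm{\vO}\le 1$, we have $\vI - \vO^\dagger \vO \ge 0$, hence by positivity and unitality $\vI - \Phi_t(\vO^\dagger \vO) \ge 0$; chaining with Kadison gives $\Phi_t(\vO)^\dagger\Phi_t(\vO) \le \vI$, i.e., $\norm{\Phi_t(\vO)} \le 1$. Rescaling yields $\norm{\Phi_t(\vO)} \le \norm{\vO}$ for arbitrary $\vO$.

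A dual route, which avoids citing Kadison, is to use that $\e^{\CL t}$ is trace-norm contractive on trace-class operators (since it is positive and trace-preserving; one decomposes a general operator into its Hermitian and anti-Hermitian parts and uses positivity on each eigendecomposition), and conclude via the duality
\begin{equation*}
\norm{\e^{\CL^\dagger t}(\vO)} = \sup_{\norm{\rho}_1 \le 1} \labs{\Tr(\rho\,\e^{\CL^\dagger t}(\vO))} = \sup_{\norm{\rho}_1 \le 1} \labs{\Tr(\e^{\CL t}(\rho)\,\vO)} \le \norm{\vO}.
\end{equation*}
Either path reduces to standard facts, and there is no real obstacle; the only subtlety worth writing carefully is the invocation of Kadison's inequality (or the trace-norm contractivity on non-Hermitian inputs), which is a textbook result and can be cited.
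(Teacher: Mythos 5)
Your proposal is correct and matches the paper, which simply cites the standard fact that quantum channels are trace-norm contractive in the Schr\"odinger picture (equivalently, operator-norm contractive in the Heisenberg picture); your duality argument is exactly the cited reasoning spelled out. The only nitpick is that the inequality $\Phi_t(\vO)^\dagger\Phi_t(\vO)\le\Phi_t(\vO^\dagger\vO)$ for \emph{non-self-adjoint} $\vO$ is the Kadison--Schwarz inequality requiring 2-positivity rather than mere positivity, but this is harmless here since $\Phi_t$ is completely positive.
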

This holds since quantum channels are trace norm contractive in the Schr\"odinger picture, or operator norm contractive in the Heisenberg picture (see, e.g., ~\cite{perez2006contractivity}.)

\textbf{A naive attempt.} Using the triangle inequality and the fact that the normalized trace satisfies $\bTr(\Obf) \leq \norm{\Obf}$, we could try bounding \cref{eq:CT} as
\begin{align}\label{eq:T_bad_bound}
    \labs{\CT} &\le \sum_{\substack{\gamma_1,\gamma_2 \in \Gamma \\ \gamma_1 \neq \gamma_2}}\EV \left[ \abs{s_{\gamma_1}} \abs{s_{\gamma_2}} \int_{t>t_2>t_1>0} \rd t_1 \rd t_2 \norm{\e^{\lind^\dag t_1}} \norm{\lind^\dag_{\gamma_2}(\lind_0^\dag (\Hbf_{\gamma_1}))} \right] \\
    &\leq \frac{t^2}{2} \sum_{\substack{\gamma_1,\gamma_2 \in \Gamma \\ \gamma_1 \neq \gamma_2}} \norm{\lind^\dag_{\gamma_2}(\lind_0^\dag (\Hbf_{\gamma_1}))},
\end{align}
where on the second line we used \cref{lem:L_contract} and the facts that $|s_{\gamma_1}| = |s_{\gamma_2}| = 1$ and $\int_{t>t_2>t_1>0}\rd t_1 \rd t_2 = t^2/2$.
We now give a rough sense of the scaling of the quantity $\norm{\lind^\dag_{\gamma_2}(\lind_0^\dag (\Hbf_{\gamma_1}))}$. The Lindbladian terms $\lind_0^\dag$ and $\lind_{\gamma_2}^\dag$ are defined in \cref{eq:lind_0/gamma/gammagamma'_def} together with \cref{eq:lind_0^a,eq:lind_gamma^a}. Observe that each occurrence of a Hamiltonian term $\Hbf_{\gamma}$ brings a factor of $h_{\gamma}$ into the scaling, and that for a given $\Hbf_\gamma$, only $\CO(k)$ of the values of $a$ will have $[\Abf^a, \Hbf_\gamma] \neq 0$. Furthermore, $\lind^\dag_{\gamma}$ is linear in $y$, and $\lind^{\dag}_0$ has both $y^0$ and $y^2$ components---the $y^2$ components can be shown to be subleading.  Thus, we expect that $\norm{\lind_0^\dag (\Hbf_{\gamma_1})}$ scales roughly as $\CO(h_{\gamma_1} k)$ and then $\norm{\lind^\dag_{\gamma_2}(\lind_0^\dag (\Hbf_{\gamma_1}))}$ scales roughly as $\CO(|y|h_{\gamma_2} h_{\gamma_1}k^2)$.
Assuming $\Hbf$ has $m$ terms and $h_{\gamma} = 1/\sqrt{m}$ for all $\gamma$ (as in \cref{eq:sampled_Pauli} and \cref{eq:sampled_Fermion}), we end up finding 
\begin{align}
    \labs{\CT} \leq t^2 |y| m  k^2 \cdot \CO(1) \,\tag{Loose bound}.
\end{align}
This is problematic due to the dependence on $m$, the number of Hamiltonian terms, which can be much larger than the system size $n$. If the terms composing $T_2$ were this large, then they would dominate the first-order term, which is bounded as $|T_1| \le |y| t k^2 \cdot \CO(1)$. We would not be able to achieve the balancing between $T_1$ and $T_2$ as in \cref{eq:T1_T2_balancing}. This naive bound has not utilized the randomness of the Rademachers, which induces significant cancellations between positive and negative contributions to $\CT$. 

\textbf{The more careful approach.} We pursue a more delicate computation which successfully captures some of these cancellations by expanding the time-ordered integrals---such as that which appears in the expression for $\CT$---via a recursion relation. We will utilize the following lemmas.

\begin{lemma}[Integration by parts for functions of a Rademacher]\label{lem:shift_rule}
For any function of a Rademacher variable $s \sim \Unif(\{\pm 1\})$, it holds that
\begin{align}
    \BE_s[ f(s) s ] = \BE_s\L[ D_sf(s) s \R] \quad \text{where} \quad D_s f(s) :=  f(s) - f(0).
\end{align}
\end{lemma}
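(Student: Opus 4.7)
The plan is to reduce the identity to the elementary fact that a Rademacher variable has mean zero, $\BE_s[s] = 0$, combined with linearity of expectation. The ``integration by parts'' framing is by analogy with the Gaussian Stein identity $\BE[g(X)X] = \BE[g'(X)]$: here the discrete difference operator $D_s$ plays the role of a derivative, while the mean-zero property of $s$ replaces the actual integration by parts.

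Concretely, I would write
\[
    f(s) \;=\; f(0) \;+\; \bigl(f(s) - f(0)\bigr) \;=\; f(0) + D_s f(s),
\]
where $f(0)$ is the $s$-independent (constant) part of $f$. Multiplying both sides by $s$ and taking the expectation gives
\[
    \BE_s[f(s)\, s] \;=\; f(0)\,\BE_s[s] \;+\; \BE_s[D_s f(s)\cdot s] \;=\; \BE_s[D_s f(s)\cdot s],
\]
since the first term vanishes by $\BE_s[s]=0$.

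The only conceptual point worth flagging is what $f(0)$ means when $s$ is sampled only from $\{\pm 1\}$. In the applications inside this paper, $f$ will always arise as a polynomial (or formal power series) in the Rademacher variables $\{s_\gamma\}$ obtained from expanding Lindbladian exponentials --- recall that $\lind^{a\dag}$ itself is degree two in these variables by \cref{eq:lind^a_as_Rademacher_polynomial} --- so $f(0)$ is unambiguously defined by setting $s=0$ in that polynomial expansion. With this convention the proof is a one-liner and there is no real obstacle; the content of the lemma is entirely in how one later iterates it to exploit cancellations from multiple independent Rademacher variables.
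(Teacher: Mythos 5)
Your proof is correct and is essentially identical to the paper's: both decompose $f(s)$ into its constant part $f(0)$ plus $D_s f(s)$, multiply by $s$, and use $\BE_s[s]=0$ to kill the constant term. The remark on what $f(0)$ means for the polynomial expansions used later is a reasonable clarification but not needed for the lemma itself.
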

\begin{proof}
Note that
    \begin{align}
        \EV_s[D_sf(s)s] = \EV_s[f(s)s - f(0)s] = \EV_s[f(s)s] - \EV_s[f(0)s],
    \end{align}
    and the last term averages to zero.
\end{proof}

\begin{lemma}[Duhamel's identity]\label{lem:duhamel}
For any square matrices $\vA$ and $\vB$, and $t\geq 0$, we have
\begin{align}
    \e^{(\vA+\vB)t} = \e^{\vA t} + \int_0^t \e^{(\vA+\vB) (t-t_1)} \vB \e^{\vA t_1} \rd t_1.
\end{align}
\end{lemma}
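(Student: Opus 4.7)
The plan is to prove the identity by the standard trick of differentiating an interpolating family and applying the fundamental theorem of calculus. Define
\begin{align}
G(s) := \e^{(\vA+\vB)(t-s)}\,\e^{\vA s}, \qquad s \in [0,t],
\end{align}
so that $G(0) = \e^{(\vA+\vB)t}$ and $G(t) = \e^{\vA t}$. The two expressions we wish to relate are thus the endpoint values of $G$, and the integral term in the statement should emerge from $\int_0^t G'(s)\,\rd s$.

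First I would compute $G'(s)$. Because a matrix commutes with its own exponential, we have $\frac{\rd}{\rd s}\e^{(\vA+\vB)(t-s)} = -(\vA+\vB)\e^{(\vA+\vB)(t-s)} = -\e^{(\vA+\vB)(t-s)}(\vA+\vB)$ and $\frac{\rd}{\rd s}\e^{\vA s} = \vA\,\e^{\vA s}$. Applying the product rule and cancelling the $\vA$-pieces yields
\begin{align}
G'(s) = \e^{(\vA+\vB)(t-s)}\bigl(-(\vA+\vB) + \vA\bigr)\e^{\vA s} = -\,\e^{(\vA+\vB)(t-s)}\,\vB\,\e^{\vA s}.
\end{align}

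Next I would apply the fundamental theorem of calculus: $G(t)-G(0) = \int_0^t G'(s)\,\rd s$. Substituting the expressions above and rearranging gives
\begin{align}
\e^{(\vA+\vB)t} - \e^{\vA t} = -\bigl(G(t) - G(0)\bigr) = \int_0^t \e^{(\vA+\vB)(t-s)}\,\vB\,\e^{\vA s}\,\rd s,
\end{align}
which is the claimed identity after relabeling $s \to t_1$.

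There is no real obstacle here; the only thing to be careful about is the order of the factors when differentiating the matrix exponentials (since $\vA$ and $\vB$ need not commute), which is handled by commuting each exponential only with its own generator. No assumptions beyond finite dimensionality of $\vA$ and $\vB$ are needed, and no appeal to any earlier result in the paper is required.
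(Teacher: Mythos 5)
Your proof is correct. It is, however, a genuinely different route from the paper's: the paper simply observes that both sides of the identity, viewed as functions of $t$, satisfy the linear ODE $\rd\vC(t)/\rd t = (\vA+\vB)\vC(t)$ with initial condition $\vC(0)=\vI$, and appeals to uniqueness of solutions. You instead fix $t$, build the interpolating family $G(s)=\e^{(\vA+\vB)(t-s)}\e^{\vA s}$, and integrate $G'(s)$ over $[0,t]$ via the fundamental theorem of calculus. The ODE-uniqueness argument is shorter to state but quietly requires differentiating the integral term (both in its upper limit and inside the integrand) to check that the right-hand side really solves the ODE; your construction avoids that by only ever differentiating products of exponentials, so it is arguably the more self-contained derivation. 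Your handling of the noncommutativity --- commuting each exponential only with its own generator --- is exactly the point that needs care, and you got it right.
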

\begin{proof}
    Both sides satisfy the differential equation $d\vC(t)/dt = (\vA + \vB)\vC$ with the initial condition $\vC(0) = \vI$.
\end{proof}

We decompose $\lind^\dag = \CA^\dag + \CB^\dag$, where 
\begin{align}
    \CB^\dag &= s_{\gamma_1}\lind^\dag_{\gamma_1} + \sum_{\gamma': \gamma' \neq \gamma_1} s_{\gamma_1}s_{\gamma'}\lind^\dag_{\gamma_1\gamma'} \\
    \CA^\dag &= \lind^\dag - \CB^\dag.
\end{align}
Note that $\CB^\dag$ has linear dependence on $s_{\gamma_1}$, $\CA^\dag$ has no dependence on $s_{\gamma_1}$, and $\CA$ remains a Lindbladian (and thus norm contractive). Applying \cref{lem:shift_rule} to $\CT$ with $s = s_{\gamma_1}$ yields 
\begin{align}
    \CT = \sum_{\substack{\gamma_1,\gamma_2 \in \Gamma \\ \gamma_1 \neq \gamma_2}}\Bigg(&\EV \left[ s_{\gamma_1}s_{\gamma_2} \bTr \int_{t>t_2>t_1>0} \rd t_1 \rd t_2 \;  \e^{\CA^\dag t_1 + \CB^\dag t_1} \lind^\dag_{\gamma_2}(\lind_0^\dag (\Hbf_{\gamma_1}))\right] \nonumber \\
    -&\EV \left[ s_{\gamma_1}s_{\gamma_2} \bTr \int_{t>t_2>t_1>0} \rd t_1 \rd t_2 \; \e^{\CA^\dag t_1} \lind^\dag_{\gamma_2}(\lind_0^\dag (\Hbf_{\gamma_1}))\right]\Bigg).
\end{align}
Next, after applying \cref{lem:duhamel} to the exponential that appears in the first term, we get
\begin{align}
    \CT = \sum_{\substack{\gamma_1,\gamma_2 \in \Gamma \\ \gamma_1 \neq \gamma_2}}\Bigg(&\EV \left[ s_{\gamma_1}s_{\gamma_2} \bTr \int_{t>t_2>t_1>0} \rd t_1 \rd t_2 \; \e^{\CA^\dag t_1} \lind^\dag_{\gamma_2}(\lind_0^\dag (\Hbf_{\gamma_1}))\right] \nonumber \\
    +&\EV \left[ s_{\gamma_1}s_{\gamma_2} \bTr \int_{t>t_2>t_1>\theta>0} \rd t_1 \rd t_2 \rd \theta \; \e^{(\CA^\dag +\CB^\dag)(t_1 -\theta)} \CB^\dag \e^{\CA^\dag \theta} \lind^\dag_{\gamma_2}(\lind_0^\dag (\Hbf_{\gamma_1}))\right] \nonumber\\
    -&\EV \left[ s_{\gamma_1}s_{\gamma_2} \bTr \int_{t>t_2>t_1} \rd t_1 \rd t_2 \; \e^{\CA^\dag t_1} \lind^\dag_{\gamma_2}(\lind_0^\dag (\Hbf_{\gamma_1}))\right]\Bigg). 
\end{align}
The first and the third terms cancel out, yielding only the second term, which is a time-ordered integral of order 3. Furthermore, the appearance of $\CB^\dag$ in the second term brings a linear factor of $s_{\gamma_1}$, which can be brought outside the integral to cancel out the other appearance of $s_{\gamma_1}$. Substituting $\CA^\dag$ and $\CB^\dag$, relabeling $(\theta,t_1,t_2)$ to $(t_1,t_2,t_3)$, and noting that $s_{\gamma_1}^2 = s_{\gamma_2}^2 = 1$, we have
\begin{align}\label{eq:2_Rademachers_first_recursion}
    \CT = \sum_{\substack{\gamma_1,\gamma_2 \in \Gamma \\ \gamma_1 \neq \gamma_2}}\Bigg(&
    \EV \left[ \bTr \int_{t>t_3>t_2>t_1>0} \rd t_1 \rd t_2 \rd t_3 \; \e^{\lind^\dag(t_2 -t_1)} \lind_{\gamma_1\gamma_2}^\dag \e^{(\lind^\dag - \CB^\dag) t_1} \lind^\dag_{\gamma_2}(\lind_0^\dag (\Hbf_{\gamma_1}))\right] \nonumber \\
    + & \EV \left[s_{\gamma_2} \bTr \int_{t>t_3>t_2>t_1>0} \rd t_1 \rd t_2 \rd t_3 \; \e^{\lind^\dag(t_2 -t_1)} \lind_{\gamma_1}^\dag \e^{(\lind^\dag - \CB^\dag) t_1} \lind^\dag_{\gamma_2}(\lind_0^\dag (\Hbf_{\gamma_1}))\right] \nonumber \\
    + & \sum_{\substack{\gamma' \in \Gamma \\ \gamma' \not\in \{\gamma_1,\gamma_2\}}}\EV \left[s_{\gamma'}s_{\gamma_2} \bTr \int_{t>t_3>t_2>t_1>0} \rd t_1 \rd t_2 \rd t_3  \; \e^{\lind^\dag(t_2 -t_1)} \lind_{\gamma_1\gamma'}^\dag \e^{(\lind^\dag - \CB^\dag) t_1} \lind^\dag_{\gamma_2}(\lind_0^\dag (\Hbf_{\gamma_1}))\right]\Bigg).
\end{align}
Let us pause to explain what has happened. We began in \cref{eq:CT} with an expression for $\CT$ that was an order-2 time-ordered integral, with 2 Rademacher factors outside the integral and 1 Lindbladian evolution $\e^{\lind^\dag t_2}$ inside the integral. After applying the recursion relation once, we now have an expression with multiple \textit{order-3} time-ordered integrals, each of which has 0, 1, or 2 Rademacher factors, and 2 applications of Lindbladian evolution within the integral. 

The most interesting term above is the first term, which has 0 Rademachers outside the integral. Both Rademachers were canceled out by the descent of the $s_{\gamma_1}s_{\gamma_2}\lind^\dag_{\gamma_1 \gamma_2}$ term from the exponent in the application of \cref{lem:duhamel}! The pairing up of Rademacher variables in this fashion is key for fully capturing the correlations between the random Rademachers outside the integral and those in the exponents of $\e^{\lind^\dag t_2}$ inside the integral, and the associated cancellations between positive and negative contributions. There is nothing more to do with this term, except to bound it by invoking the triangle inequality and the contractivity of the $ \e^{\lind^\dag t}$ map. That is, we proceed in a similar fashion as we did for \cref{eq:T_bad_bound}:
\begin{align}\label{eq:T_contribution_good_bound}
    &\labs{\sum_{\substack{\gamma_1,\gamma_2 \in \Gamma \\ \gamma_1 \neq \gamma_2}}\EV \left[ \bTr \int_{t>t_3>t_2>t_1>0} \rd t_1 \rd t_2 \rd t_3 \; \e^{\lind^\dag(t_2 -t_1)} \lind_{\gamma_1\gamma_2}^\dag \e^{(\lind^\dag - \CB^\dag) t_1} \lind^\dag_{\gamma_2}(\lind_0^\dag (\Hbf_{\gamma_1}))\right]} \nonumber \\
    &\leq \frac{t^3}{6}\sum_{\substack{\gamma_1,\gamma_2 \in \Gamma \\ \gamma_1 \neq \gamma_2}} \norm{\lind^\dag_{\gamma_1 \gamma_2}} \norm{\lind^\dag_{\gamma_2}(\lind_0^\dag (\Hbf_{\gamma_1}))}.
\end{align}
We now attempt to get a sense of the scaling of the resulting expression. From \cref{eq:lind_gammagamma'^a}, we can see that the operator $\lind^{a \dag}_{\gamma_1\gamma_2}$ vanishes unless both $[\Abf^a, \Hbf_{\gamma_1}]\neq 0$ and $[\Abf^a, \Hbf_{\gamma_2}]\neq 0$. Since $\Abf^a$ is a single-site operator, this can only occur if both $\Hbf_{\gamma_1}$ and $\Hbf_{\gamma_2}$ have support on that site. Hence, thinking of $\Hbf_{\gamma_1}$ as fixed, the only nonvanishing choices of $\gamma_2$ and $a$ occur when both $\Hbf_{\gamma_2}$ and $\Abf^a$ have support that overlaps with the $k$ sites in $S(\gamma_1)$, the support of  $\Hbf_{\gamma_1}$. This logic reduces the sum, as follows.
\begin{align}
\frac{t^3}{6}\sum_{\substack{\gamma_1,\gamma_2 \in \Gamma \\ \gamma_1 \neq \gamma_2}} \norm{\lind^\dag_{\gamma_1 \gamma_2}} \norm{\lind^\dag_{\gamma_2}(\lind_0^\dag (\Hbf_{\gamma_1}))} 
=
\frac{t^3}{6}\sum_{\substack{\gamma_1\in \Gamma}}\sum_{i \in S(\gamma_1)}\sum_{\substack{\gamma_2: i \in S(\gamma_2)\\ \gamma_2 \neq \gamma_1}} \norm{\lind^\dag_{\gamma_1 \gamma_2}} \norm{\lind^\dag_{\gamma_2}(\lind_0^\dag (\Hbf_{\gamma_1}))}
\end{align}
The quantity $\norm{\lind^\dag_{\gamma_1 \gamma_2}}$ scales roughly as $\CO(|y|^2 h_{\gamma_1}h_{\gamma_2}k)$ for choices of $\gamma_1, \gamma_2$ where it is nonvanishing (the $k$ comes from the at most $\CO(k)$ choices of $a$ that could cause $\Abf^a$ to anticommute with both $\Hbf_{\gamma_1}$ and $\Hbf_{\gamma_2}$). Combined with the previous observation that $\norm{\lind^\dag_{\gamma_2}(\lind_0^\dag (\Hbf_{\gamma_1}))}$ scales as $\CO(|y| h_{\gamma_1} h_{\gamma_2}k^2)$, we find that the bound on $\CT$ scales as 
\begin{align}
    \CT &\leq \CO(t^3 |y|^3 k^3)\sum_{\substack{\gamma_1\in \Gamma}} \CO(h_{\gamma_1}^2) \sum_{i \in S(\gamma_1)}\sum_{\substack{\gamma_2: i \in S(\gamma_2)\\ \gamma_2 \neq \gamma_1}} \CO( h_{\gamma_2}^2) \\
    &\leq \CO(t^3 |y|^3 k^3) \HgloPower{2} \HlocPower{2}
\end{align}
where $\Hglo$ and $\Hloc$ are defined in \cref{eq:Hloc_Hglo_def}. Crucially, both $\Hglo$ and $\Hloc$ are independent of the number of terms $m$; for each appearance of a sum over $m$ choices of $\gamma \in \Gamma$, there is an accompanying factor of $h_{\gamma}^2 \sim 1/m$. The residual $m$ dependence was the main issue with the naive attempt above, which did not utilize the randomness of the Hamiltonian.  An additional important feature is that this calculation has incorporated the fact that $\Hbf_{\gamma_2}$ must have overlapping support with $\Hbf_{\gamma_1}$ in order to greatly reduce the number of terms in the sum, resulting in a dependence $\HgloPower{2}\HlocPower{2}$, rather than $\HgloPower{4}$. Ultimately, the values of $t$ and $y$ are chosen such that $t^2 |y|^2 k^2 \HlocPower{2} <1$, allowing the above to contribute as $\CO(t |y| k \HgloPower{2})$, that is, at the same order as the first-order term (\cref{prop:first_order_term}.)

Yet, while we have gained control over the first term in the expression for $\CT$ in \cref{eq:2_Rademachers_first_recursion}, there are two additional terms that were produced by the recursion step, which have 1 or 2 Rademachers outside the integral. To handle these new order-3 terms, we apply the same strategy and express each of them as a sum over several new time-ordered integral terms of order 4. For any of the order-4 terms with Rademachers outside the integral, we again apply the same strategy, yielding order-5 terms. After we have recursed $p-2$ times, what remains are order-$p$ integrals without any Rademachers prefactors, for $p=3,4,5,\ldots$, and a remainder term of high order that includes all the terms that still have Rademachers outside the integral. We show how to sum all the terms without Rademacher prefactors, and we also show that the remainder term approaches $0$ as one performs more recursion levels (see \cref{lem:V_hat_bound_FGPhi}.) 

To account for all the terms, we utilize a diagrammatic method, which allows us to see in what way the Rademachers outside the integral are paired up. Specifically, we equate each of the time-ordered integrals encountered in the expansion with a unique diagram, and then rewrite equations in terms of diagram expansions. For example, the time-ordered integral equality in \cref{eq:2_Rademachers_first_recursion} is rewritten in our diagram language as 
\begin{align}
    \L(\adj{-3pt}{0.7}{\text{\drawDiagramIndexed{{{$\gamma_{1}$/0}/0,{$\gamma_{2}$/0}/0}}}} \R) 
    ={}& 
    \L(\adj{-12pt}{0.7}{\text{\drawDiagramIndexed{{{$\gamma_{1}$/0, $\gamma_{2}$/0}/2,{$\gamma_{2}$/0}/0}}}} \R)
    +\L(\adj{-8pt}{0.7}{\text{\drawDiagramIndexed{{{$\gamma_{1}$/0, $\gamma_{1}$/0}/1,{$\gamma_{2}$/0}/0}}}} \R)
    + \sum_{\gamma' \in \Gamma \setminus\{\gamma_1,\gamma_2\}}  \L(\adj{-12pt}{0.7}{\text{\drawDiagramIndexed{{{$\gamma_{1}$/0, $\gamma'$/0}/0,{$\gamma_{2}$/0}/0}}}} \R)
\end{align}
See \cref{sec:proof_superoperator_norm} for details. Nodes with arrows represent Rademachers outside the integral. The dashed line represents the pairing up of two Rademachers in one step, as occurred in the first term of \cref{eq:2_Rademachers_first_recursion}, reducing the number of Rademachers by 2. Filled nodes represent the pairing up of one of these Rademachers with itself, reducing the number of Rademachers by 1 compared to the left-hand side, as occurred in the second term of \cref{eq:2_Rademachers_first_recursion}. 

By using this diagram language, we can exhaustively count all the terms and thus compute their contribution to the upper bound on $|T_2|$. Fortunately, we need only count diagrams in which all the arrows have been removed, either by pairing up with other arrows, or pairing up with themselves (via a filled node), since we can show that the contribution from the remainder diagrams vanishes with the number of recursion steps. This restricts the combinatorics and makes the calculation tractable. As we can have as many as $|U|=5$ Rademachers, we can end up with diagrams as complex as, for example,
\begin{align}
    \L(\adj{-47pt}{0.7}{\text{\drawDiagramIndexed{{{$\gamma_{1}$/0, $\alpha_{1}$/0, $\alpha_{2}$/1, $\alpha_3$/0, $\alpha_4$/3, $\gamma_4$/3}/4,{$\gamma_{2}$/0, $\alpha_5$/5, $\gamma_5$/1}/5,{$\gamma_{3}$/0, $\alpha_6$/0, $\alpha_7$/4, $\alpha_8$/2, $\alpha_8$/9}/3,{$\gamma_4$/0}/1,{$\gamma_5$/0}/1}}}} \R)
\end{align}
When applying the recursion relation, we choose to extend the leftmost path with an arrow. So the above diagram represents a term where the first Rademacher paired up with the fourth Rademacher on the fifth recursion step, then the second Rademacher paired up with the fifth Rademacher on the seventh recursion step, and finally the third Rademacher paired up with itself on the eleventh recursion step, yielding a diagram without any arrows. 

Ultimately, the contribution of a diagram to the sum has a factor that decays exponentially with the length (i.e.,~number of solid edges) in the diagram. Thus, the scaling of the full sum is captured by the lower-order terms, which have favorable bounds, as was found for the example of the order-3 term in \cref{eq:T_contribution_good_bound}.  

\section{Conclusion}
In this work, we consider the low-energy state preparation problem in the setting of quantum optimization problems and studied  random, all-to-all connected models of strongly interacting many-body quantum systems---quantum analogues of classical spin glasses. Our main finding is an average-case quantum algorithmic guarantee stating that a simplified version of quantum Gibbs samplers can prepare quantum states achieving an $\Omega(\frac{1}{k})$-fraction of the optimum energy---this is proved in a general approach that handles both the spin and fermionic cases, and both the sparse and dense cases all at once. This guarantee is  an exponential improvement over prior classical and quantum algorithms (an $\frac{1}{e^{\Omega(k)}}$-fraction) for the same family of models. Combined with existing circuit lower bounds, our refined guarantee suggests that the low-energy states of sparsified local fermionic and quasilocal spin models are classically nontrivial and quantumly easy.

In fact, we conjecture that our $\Omega(\frac{\sqrt{n}}{k})$-scaling for the achieved energy actually coincides with the true optimum, up to an absolute ($k$-independent) constant. Such scaling is consistent with the physics heuristics $\norm{\vH} \approx \frac{\sqrt{2n}}{k}$ for the even-$k$ SYK model (see, e.g.,~\cite{hastings2022optimizing}). However, a rigorous understanding of the spectrum, physical properties, and complexity at low energies remains very limited, and further progress here may illuminate the aspect of quantum advantage in these systems. 

Meanwhile, our algorithmic guarantees set a quantitative challenge for quantum and classical algorithm designers, and these random models are natural testbeds. Currently, this simplified version of the quantum Gibbs sampler is the only systematic algorithm that achieves a $\Omega(\frac{1}{k})$-approximation. We suspect that the generic classical approach cannot exceed a $\e^{-\Omega(k)}$-fraction, and we are hopeful that $\Omega(\frac{1}{k})$-fractions should be accessible by other quantum algorithmic principles.

This work has mostly been restricted to the context of optimization problems, but thermodynamic simulation often involves the quantum Gibbs state. Of course, quantum MCMC algorithms are most natural for this task, but so far, there have been very limited tools to study mixing times and spectral gaps of quantum Lindbladians. Interestingly, it was recently proved~\cite{anshuetz2024strongly} that the annealed approximation for the free energy in these spin and fermionic models continue to hold at very low temperatures ($\beta = e^{\Omega(k)}$ for random Paulis, $\beta \sim \Omega_k(n^{k/4})$ for fermions); this hints that the model may be glassy and remain rapidly-mixing for a wide range of temperatures.

\section*{Acknowledgements}
We thank Thiago Bergamaschi, Aram Harrow, Jonas Helsen, Yaroslav Herasymenko, Lin Lin, Sam McArdle, Alexander Schmidhuber, Nikhil Srivastava, Leo Zhou, and Alexander Zlokapa for insightful discussions. We thank Eric Anshuetz, Bobak Kiani, and Robbie King for discussions on related work~\cite{anshuetz2024strongly}. CFC is supported by a Simons-CIQC postdoctoral fellowship through NSF QLCI Grant No. 2016245. AD is grateful to the AWS Center for Quantum Computing for its support. 

\bibliographystyle{halphalab}
\bibliography{main}

\appendix

\section{A diagrammatic calculus for upper bounding the norm of the time-ordered integrals}\label{sec:diagrammatic_calculus}
In this section, we work toward establishing \cref{prop:superoperator_norm}. To do so, we require some setup and definitions over several subsections. This framework is used to finally prove the proposition in \Cref{sec:proof_superoperator_norm}. 

\subsection{Compact notation for time-ordered integrals (TOIs)}

To prove the proposition, we will repeatedly expand time-ordered integrals (TOIs) into TOIs of higher order and more complicated integrands. To organize the expansion, we must address individual terms in the global Lindbladian $\CL$, which are labeled as follows.

\begin{definition}[Labels]\label{def:label}
    We let  $\CM = \Gamma \cup (\Gamma \times \Gamma) \cup \{0\}$ denote the set of {\bf labels} for Lindbladian terms, as defined in \cref{eq:lind_0^a,eq:lind_gamma^a,eq:lind_gammagamma'^a}. For example, if $l = (\gamma_1,\gamma_2) \in \Gamma \times \Gamma \subset \CM$ is a label, then $\lind_{l}^\dag := \lind_{\gamma_1 \gamma_2}^\dag$.
\end{definition}

\begin{definition}
    For a subset of indices $S \subseteq \Gamma$, we write
    \begin{align}\label{eq:lind_subset_notation}
        \lind_S^\dag = \lind_0^\dag + \sum_{\gamma \in S} s_\gamma \lind_\gamma^\dag + \frac{1}{2} \sum_{\substack{ \gamma,\gamma' \in S \\ \gamma\neq\gamma'}} s_\gamma s_{\gamma'} \lind_{\gamma\gamma'}^\dag,
    \end{align}
    where $\lind_0^\dag$, $\lind_\gamma^\dag$, and $\lind_{\gamma\gamma'}^\dag$ include the sum over $a \in A$, and $\lind_{\gamma\gamma'}^\dag$ is symmetrized over its suscripts, as defined in \cref{eq:lind_0/gamma/gammagamma'_def}. 
    That is, $\lind_S^\dag$ includes Lindbladian terms only involving indices from $S$ (compare with \cref{eq:lind^a_as_Rademacher_polynomial}.)
\end{definition}

The following lemma guarantees that this restriction preserves the Lindbladian structure.
\begin{lemma}\label{lem:Lind_subset_is_Lindbladian}
    For all $S \subseteq \Gamma$, $\CL_{S} $ is also a Lindbladian. 
\end{lemma}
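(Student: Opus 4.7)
The plan is to exhibit $\lind_S^\dag$ as the adjoint of an explicit GKLS/Lindblad generator by producing a concrete family of jump operators. Since any finite family $\{\vJ_b\}$ of operators induces a valid Heisenberg-picture Lindbladian
\begin{align*}
\vO \;\mapsto\; \sum_b \L(\vJ_b^\dag \vO \vJ_b - \tfrac{1}{2}\acomm{\vJ_b^\dag \vJ_b}{\vO}\R),
\end{align*}
it suffices to exhibit such a family whose associated dissipator equals $\lind_S^\dag$.

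First I would introduce the partial Hamiltonian $\vH_S := \sum_{\gamma \in S} s_\gamma \vH_\gamma$ and the restricted jumps $\vK^a_S := \vA^a + y\comm{\vA^a}{\vH_S}$ for $a \in A$. The proposed jump family is
\begin{align*}
\{\vK^a_S\}_{a \in A} \;\cup\; \{\,y\comm{\vA^a}{\vH_\gamma}\,\}_{a \in A,\ \gamma \in \Gamma \setminus S},
\end{align*}
defining some Lindbladian $\widehat{\lind}^{\dag}$. The remaining task is then to verify $\widehat{\lind}^{\dag} = \lind_S^\dag$ by expanding in $y$, grouping by Rademacher monomials, and comparing term-by-term against \cref{eq:lind_0^a,eq:lind_gamma^a,eq:lind_gammagamma'^a,eq:lind_subset_notation}.

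The matching then proceeds by collecting $\widehat{\lind}^{\dag}$ according to powers of $y$. The $y^0$ contribution together with the identity piece of the anticommutator reproduces the $\vA^{a\dag}\vO\vA^a - \tfrac{1}{2}\acomm{\vI}{\vO}$ part of $\lind_0^{a\dag}$. The $y^1$ pieces from $\vK^a_S$, grouped by which $s_\gamma$ appears, sum to $\sum_{\gamma \in S} s_\gamma \lind_\gamma^{a\dag}$. The $y^2$ contributions split into a diagonal ($\gamma = \gamma'$) and an off-diagonal ($\gamma \neq \gamma'$) part; the off-diagonal pieces live entirely inside $S$ and match $\tfrac{1}{2}\sum_{\gamma \neq \gamma' \in S} s_\gamma s_{\gamma'}\lind_{\gamma\gamma'}^{a\dag}$, with the $\tfrac{1}{2}$ absorbing the symmetrization built into $\lind_{\gamma\gamma'}^\dag$ in \cref{eq:lind_0/gamma/gammagamma'_def}. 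The diagonal $y^2$ pieces from $\vK^a_S$ cover $\gamma \in S$, while the extra jumps $y\comm{\vA^a}{\vH_\gamma}$ for $\gamma \notin S$ cover exactly the complement, so together they reproduce the full diagonal sum $\sum_{\gamma \in \Gamma} y^2\, \comm{\vA^a}{\vH_\gamma}^\dag(\cdot)\comm{\vA^a}{\vH_\gamma} - \tfrac{1}{2}\acomm{\cdots}{\cdot}$ hidden inside $\lind_0^{a\dag}$.

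The hard part is purely bookkeeping: keeping straight that $\lind_0^{a\dag}$ in \cref{eq:lind_0^a} contains a $y^2$-diagonal sum over \emph{all} of $\Gamma$ (not just $S$), so the extra jumps for $\gamma \notin S$ are essential for matching; and handling the $\tfrac{1}{2}$ symmetrization consistently without double-counting. There is no analytic content beyond the GKLS theorem: the lemma is essentially the statement that if one discards the coherent off-diagonal pieces indexed by $\gamma \notin S$ but retains their purely dissipative diagonal companions, the result is still a valid open-system generator, so $\lind_S$ is indeed a Lindbladian.
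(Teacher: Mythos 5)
Your proof is correct, but it takes a genuinely different route from the paper's. The paper's argument is a two-line observation: $\lind_S = \BE_{\Gamma\setminus S}\,\lind$, i.e., $\lind_S$ is obtained by averaging the full Lindbladian over the Rademachers $s_\gamma$ with $\gamma \in \Gamma\setminus S$ (linear and bilinear terms containing any such $s_\gamma$ vanish in expectation, leaving exactly \cref{eq:lind_subset_notation}), and a convex combination of Lindbladians is a Lindbladian. Your approach instead exhibits $\lind_S^\dag$ directly in GKLS form with the explicit jump family $\{\vK_S^a\}_{a}\cup\{y\comm{\vA^a}{\vH_\gamma}\}_{a,\,\gamma\notin S}$; I checked the matching and it goes through, including the one subtle point you correctly flag --- that the diagonal $y^2$ sum inside $\lind_0^{a\dag}$ in \cref{eq:lind_0^a} runs over all of $\Gamma$, so the auxiliary jumps for $\gamma\notin S$ are exactly what is needed to supply the missing diagonal dissipative pieces (and their sign $s_\gamma$ is immaterial there since $s_\gamma^2=1$). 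The trade-off: the paper's averaging argument is essentially free and requires no term-by-term verification, while your construction is more informative --- it identifies concrete Lindblad operators generating $\lind_S$, making transparent \emph{why} dropping the off-diagonal cross terms for $\gamma\notin S$ while keeping their diagonal companions preserves complete positivity. Either proof is acceptable; if you write yours up, the $y^1$ and $y^2$ bookkeeping you sketch should be spelled out, since that expansion is the entire content of the argument.
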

\begin{proof}
Observe that $\CL_{S}= \BE_{\Gamma\setminus S}\CL$ where the expectation is taken over all $s_{\gamma}:\gamma \in \Gamma \setminus S$. Since a convex combination of Lindbladians is still a Lindbladian, the lemma follows.
\end{proof}

Now we define the general form of a time-ordered integral, where the intermediate Lindbladians include particular choices of indices.
\begin{definition}[Compact notation for time-ordered integrals (TOIs)]\label{def:TOIs}
Given a non-negative integer $\ell>0$, sets $S_0,\ldots,S_\ell \subseteq \Gamma$, and labels $l_1,\ldots,l_\ell \in \CM$, we define the object $\bb*{S_\ell,l_{\ell},\ldots,S_1,l_1,S_0}$ to be a superoperator (mapping operators to real numbers), defined by its action on an arbitrary operator $\Obf$:
\begin{align}
    &\bb*{S_\ell,l_{\ell},S_{\ell-1},l_{\ell-1}, \ldots,S_1,l_1,S_0}(\Obf) \nonumber\\
    :={}& \bTr\L( \int_{t>t_{\ell+2}>\ldots>t_1>0} \rd t_{\ell+2} \ldots \rd t_1 \; 
    \e^{(t_{\ell+1}-t_\ell) \lind_{\Gamma \setminus S_\ell}^\dag} \lind_{l_{\ell}}^\dag 
    \ldots
    \e^{(t_{3} - t_{2})\lind_{\Gamma\setminus S_{2}}^\dag}\lind_{l_2}^{\dag}
    \e^{(t_{2} - t_{1})\lind_{\Gamma\setminus S_{1}}^\dag}\lind_{l_1}^\dag \e^{t_1\lind_{\Gamma \setminus S_0}^\dag} (\Obf) \R). \label{eq:TOI_notation}
\end{align}
When necessary, we identify $t_{\ell+3} = t$ and $t_0 = 0$. We say that $\ell$ is the \emph{length} of the TOI. 
\end{definition}

The TOI objects depend on the Rademachers through the appearance of the $s_\gamma\lind_\gamma^\dag$ and $s_\gamma s_{\gamma'}\lind_{\gamma \gamma'}^\dag$ terms ``upstairs'' in the exponent. We also need to consider expressions with TOIs weighted by Rademachers ``downstairs'' and then average over the Rademachers. 

\begin{definition}[Rademacher-averaged TOI]\label{def:Rademacher-averaged_TOI}
Instantiante the definitions in \cref{def:TOIs}. Let $h\ge 1$ be an integer and define the Rademacher random variables $s_{\gamma_1},\ldots,s_{\gamma_h} \sim \Unif(\{\pm 1\})$ which are indexed by elements of $\Gamma$.
A Rademacher-averaged TOI is a superoperator of the form
\begin{equation}
\EV s_{\gamma_1}\ldots s_{\gamma_h}\bb*{S_\ell,l_\ell,\ldots,S_1,l_1,S_0}.
\end{equation}
These are labeled by elements from the set
\begin{align}
    \CW := \{0\} \cup \bigcup_{\ell=0}^\infty\bigcup_{h=0}^{\infty} \CW_{h,\ell}
\end{align}
with
\begin{align}
    \CW_{h,\ell} :=\L\{(  (\gamma_1, \cdots, \gamma_{h}), (S_\ell,l_\ell,\ldots S_1,l_1,S_0)) ) : \gamma_i \in \Gamma, S_k \subseteq \Gamma, l_k \in \CM \R\}.
\end{align} 
When $w \in \CW_{h,\ell}$, we say that $h$ is the number of ``downstairs'' Rademachers, and $\ell$ is the length of the Rademacher-averaged TOI. We also define the value $V(w,\Obf)$ of a Rademacher-averaged TOI $w \in \CW$ acting on operator $\Obf$ by $V(0,\Obf) = 0$ for all $\Obf$ and 
\begin{equation}\label{eq:V_wO_def}
    V\L( ((\gamma_1,\ldots,\gamma_h),(S_\ell,l_\ell,\ldots S_1,l_1,S_0)) , \Obf\R) = \EV s_{\gamma_1}\ldots s_{\gamma_h}\bb*{S_\ell,l_\ell,\ldots,S_1,l_1,S_0}(\Obf) \in \mathbb{R}.
\end{equation} 
\end{definition}
Note that $V$ is a linear function in its second argument. We have included the element $0$ in the set $\CW$ to represent the superoperator that maps all inputs to 0. This is important later to map the invalid choices of ordering to 0, and have this not leave the set.

\begin{lemma}[Bound on Rademacher-averaged TOI]\label{lem:bouding_rad_avg_toi}
    Let $h\ge 0$ be an integer and $\EV s_{\gamma_1}\ldots s_{\gamma_h}\bb*{S_\ell,l_\ell,\ldots,S_1,l_1,S_0}$ a Rademacher-averaged TOI as in \cref{def:Rademacher-averaged_TOI} applied to an operator $\Obf$. Then we can bound its absolute value as
    \begin{align}
        \labs{\BE s_{\gamma_1}\ldots s_{\gamma_h}\bb*{S_\ell, l_\ell, S_{\ell-1},\ldots, S_1,l_1,S_0}(\Obf)} \leq \frac{t^{\ell+2}}{(\ell+2)!}\norm{\lind^\dag_{l_\ell}}\norm{\lind^\dag_{l_{\ell-1}}}\ldots \norm{\lind^\dag_{l_2}}\norm{\lind^\dag_{l_1}}\norm{\Obf}\,.
    \end{align}
\end{lemma}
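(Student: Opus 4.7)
The plan is to push the absolute value all the way through and then apply contractivity of Lindbladian evolution together with the formula for the volume of a simplex. First, I would use $|\EV[\,\cdot\,]| \le \EV|\,\cdot\,|$ together with the pointwise identity $|s_{\gamma_1}\cdots s_{\gamma_h}| = 1$ to reduce the task to bounding $\EV\,\labs{\bb*{S_\ell, l_\ell, \ldots, S_0}(\Obf)}$. Next, I would push the absolute value past the time-ordered integral (via the triangle inequality) and past the normalized trace, using the elementary bound $|\bTr(\vX)| = |\Tr(\mubf\vX)| \le \norm{\vX}$, which follows from $\norm{\mubf}_1 = 1$. This reduces the task to uniformly bounding the operator norm of
\[
    \e^{(t_{\ell+1}-t_\ell)\lind^\dag_{\Gamma\setminus S_\ell}}\,\lind^\dag_{l_\ell}\,\cdots\,\lind^\dag_{l_1}\,\e^{t_1\lind^\dag_{\Gamma\setminus S_0}}(\Obf)
\]
over the simplex $\{0 < t_1 < \cdots < t_{\ell+2} < t\}$.

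By submultiplicativity of the operator norm I would then peel off the factors one by one. For each $\lind^\dag_{l_i}$ I pay a factor of $\norm{\lind^\dag_{l_i}}$. For each exponential $\e^{(t_{i+1}-t_i)\lind^\dag_{\Gamma\setminus S_i}}$, I would invoke \cref{lem:Lind_subset_is_Lindbladian}: for any fixed realization of the Rademachers indexed by $\Gamma\setminus S_i$, the operator $\lind^\dag_{\Gamma\setminus S_i}$ is the adjoint of a genuine Lindbladian, so \cref{lem:L_contract} implies that its exponential is operator-norm contractive in the Heisenberg picture. The integrand is therefore bounded pointwise by $\bigl(\prod_{i=1}^{\ell} \norm{\lind^\dag_{l_i}}\bigr)\norm{\Obf}$, a quantity independent of both the $t_i$'s and the Rademachers.

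Finally, I would integrate the resulting constant over the $(\ell+2)$-dimensional simplex, contributing a factor of $t^{\ell+2}/(\ell+2)!$; the outer expectation over the Rademachers then acts trivially on this deterministic quantity, yielding the claimed bound. I do not expect any serious obstacle; the only subtlety worth flagging is that the Lindbladian exponents $\lind^\dag_{\Gamma\setminus S_i}$ still depend on the Rademachers in $\Gamma\setminus S_i$, and it is crucial that \cref{lem:Lind_subset_is_Lindbladian} furnishes the Lindbladian property for \emph{every} realization (not merely in expectation), so that the contraction estimate can be applied pointwise inside the expectation rather than needing to commute the expectation with a norm.
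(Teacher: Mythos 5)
Your proposal is correct and follows essentially the same route as the paper's proof: bound the normalized trace by the operator norm, peel off the $\lind^\dag_{l_i}$ factors by submultiplicativity, use \cref{lem:Lind_subset_is_Lindbladian} together with \cref{lem:L_contract} to discard the intermediate exponentials pointwise in the Rademachers, and integrate over the simplex to obtain $t^{\ell+2}/(\ell+2)!$. The subtlety you flag about needing contractivity for every Rademacher realization is exactly how the paper handles the expectation (as an average over the $2^h$ sign settings, each bounded identically), so there is nothing to add.
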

\begin{proof}
    First, when bounding a TOI, we can use the fact that the normalized trace satisfies $\labs{\bTr(\Obf)} \leq \norm{\Obf}$, as well as triangle inequality and submultiplicativity of the operator norm to get
    \begin{align}
        &\labs{\bTr\L( \int_{t>t_{\ell+2}>\ldots>t_1>0} \rd t_{\ell+2} \ldots \rd t_1 \; 
    \e^{(t_{\ell+1}-t_\ell) \lind_{\Gamma \setminus S_\ell}^\dag} \lind_{l_{\ell}}^\dag 
    \ldots
    \e^{(t_{3} - t_{2})\lind_{\Gamma\setminus S_{2}}^\dag}\lind_{l_2}^{\dag}
    \e^{(t_{2} - t_{1})\lind_{\Gamma\setminus S_{1}}^\dag}\lind_{l_1}^\dag \e^{t_1\lind_{\Gamma \setminus S_0}^\dag} (\Obf) \R)}\\
    \leq{}& \norm{\int_{t>t_{\ell+2}>\ldots>t_1>0} \rd t_{\ell+2} \ldots \rd t_1 \; 
     \e^{(t_{\ell+1}-t_\ell) \lind_{\Gamma \setminus S_\ell}^\dag} \lind_{l_{\ell}}^\dag 
    \ldots
    \e^{(t_{3} - t_{2})\lind_{\Gamma\setminus S_{2}}^\dag}\lind_{l_2}^{\dag}
    \e^{(t_{2} - t_{1})\lind_{\Gamma\setminus S_{1}}^\dag}\lind_{l_1}^\dag \e^{t_1\lind_{\Gamma \setminus S_0}^\dag} (\Obf)} \\
    \leq{}& \int_{t>t_{\ell+2}>\ldots>t_1>0} \rd t_{\ell+2} \ldots \rd t_1 \; 
     \norm{\e^{(t_{\ell+1}-t_\ell) \lind_{\Gamma \setminus S_\ell}^\dag} \lind_{l_{\ell}}^\dag 
    \ldots
    \e^{(t_{3} - t_{2})\lind_{\Gamma\setminus S_{2}}^\dag}\lind_{l_2}^{\dag}
    \e^{(t_{2} - t_{1})\lind_{\Gamma\setminus S_{1}}^\dag}\lind_{l_1}^\dag \e^{t_1\lind_{\Gamma \setminus S_0}^\dag} (\Obf)} \\
    \leq{}& \int_{t>t_{\ell+2}>\ldots>t_1>0} \rd t_{\ell+2} \ldots \rd t_1 \; 
    \norm{\lind^\dag_{l_\ell}}\ldots\norm{\lind^\dag_{l_2}}\norm{\lind^\dag_{l_1}}\norm{\Obf} \\
    ={}& \frac{t^{\ell+2}\norm{\lind^\dag_{l_\ell}}\norm{\lind^\dag_{l_{\ell-1}}}\ldots \norm{\lind^\dag_{l_2}}\norm{\lind^\dag_{l_1}}\norm{\Obf}}{(\ell+2)!},
    \end{align}
    where we used the formula for the volume of the rescaled simplex:
    \begin{align}
    \int_{t>t_{\ell+2}>\ldots>t_1>0} \rd t_{\ell+2} \ldots \rd t_1 = \frac{t^{\ell+2}}{(\ell+2)!}.
    \end{align}
    We have also used the fact that $\norm{\e^{(t_{j+1}-t_j} \lind^{\dag}_{\Gamma \setminus S_j}} \leq 1$, owing to the fact that $\lind^{\dag}_{\Gamma \setminus S_j}$ is a Lindbladian (\cref{lem:Lind_subset_is_Lindbladian}), and that Lindbladian evolution of operators is operator-norm contractive (\cref{lem:L_contract}).
    
    The Rademacher-averaged TOI is an average over the $2^h$ settings of the Rademachers, each weighted by $+1$ or $-1$, but each setting gives rise to a TOI that can be upper bounded as above. Thus we have
    \begin{equation}
        \labs{\BE s_{\gamma_1}\ldots s_{\gamma_h}\bb*{S_\ell, l_\ell, S_{\ell-1},\ldots, S_1,l_1,S_0}(\Obf)} \leq \frac{t^{\ell+2}}{(\ell+2)!}\norm{\lind^\dag_{l_\ell}}\norm{\lind^\dag_{l_{\ell-1}}}\ldots \norm{\lind^\dag_{l_2}}\norm{\lind^\dag_{l_1}}\norm{\Obf}\,.
    \end{equation}
\end{proof}

\subsection{Expansion rules for TOIs}
The main technique that enables the calculation is a recursion relation that expands one TOI of length $\ell$ as a sum over various TOIs of length $\ell+1$. We establish the following lemmas to capture this relation.
\begin{lemma}[Integration by parts for any TOI]\label{lem:shift_rule_compact_notation}
Let $(\gamma_1,\ldots, \gamma_h)$ be any tuple of distinct elements of  $\Gamma$. Then
    \begin{align}
        &\EV s_{\gamma_1}\ldots s_{\gamma_h} \bb*{S_\ell,l_\ell,\ldots,S_1,l_1,S_0} \\
        ={}& \EV s_{\gamma_1}\ldots s_{\gamma_h} \bb*{S_\ell,l_\ell,\ldots,S_1,l_1,S_0} - \EV s_{\gamma_1}\ldots s_{\gamma_h}\bb*{S_\ell\cup\{\gamma_1\},l_\ell,\ldots,S_1\cup\{\gamma_1\},l_1,S_0\cup\{\gamma_1\}}.
    \end{align}
\end{lemma}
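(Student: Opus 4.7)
The plan is to reduce this to a single application of the scalar integration-by-parts rule for Rademachers (Lemma~\ref{lem:shift_rule}), treating all Rademachers other than $s_{\gamma_1}$ as being conditioned on. The key observation needed is a bookkeeping one: setting $s_{\gamma_1}=0$ inside the TOI has exactly the effect of enlarging each set $S_i$ by the element $\gamma_1$.

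First I would unpack where $s_{\gamma_1}$ actually appears inside the expression $\bb*{S_\ell,l_\ell,\ldots,S_1,l_1,S_0}$. The labels $l_i$ and the operators $\lind_{l_i}^\dag$ are, by Definition~\ref{def:label} and \cref{eq:lind_0/gamma/gammagamma'_def}, fixed objects with no Rademacher factors attached. Hence the only Rademacher dependence lives in the exponentials $\e^{(t_{i+1}-t_i)\lind_{\Gamma\setminus S_i}^\dag}$, and more specifically inside each $\lind_{\Gamma\setminus S_i}^\dag$ as written in \cref{eq:lind_subset_notation}. If $\gamma_1\in S_i$, then $s_{\gamma_1}$ simply does not appear there. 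If $\gamma_1\notin S_i$, then $s_{\gamma_1}$ appears linearly in the $s_{\gamma_1}\lind_{\gamma_1}^\dag$ summand and (for each $\gamma'\in\Gamma\setminus S_i$ with $\gamma'\neq\gamma_1$) in the $s_{\gamma_1}s_{\gamma'}\lind_{\gamma_1\gamma'}^\dag$ summand. In either case, evaluating at $s_{\gamma_1}=0$ converts $\lind_{\Gamma\setminus S_i}^\dag$ into $\lind_{\Gamma\setminus(S_i\cup\{\gamma_1\})}^\dag$.

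Next I would view the TOI, on the event that the other $s_\beta$ ($\beta\neq\gamma_1$) are fixed, as a function $f(s_{\gamma_1})$ of the single Rademacher $s_{\gamma_1}$. By the previous paragraph,
\begin{equation*}
    f(0) \;=\; \bb*{S_\ell\cup\{\gamma_1\},l_\ell,\ldots,S_1\cup\{\gamma_1\},l_1,S_0\cup\{\gamma_1\}},
\end{equation*}
with the other Rademachers still treated as held fixed inside the remaining exponentials. Since the $\gamma_i$ are distinct, the prefactor $s_{\gamma_2}\cdots s_{\gamma_h}$ commutes through the $s_{\gamma_1}$-conditional expectation. Applying the shift rule $\BE_{s_{\gamma_1}}[s_{\gamma_1}f(s_{\gamma_1})]=\BE_{s_{\gamma_1}}[s_{\gamma_1}(f(s_{\gamma_1})-f(0))]$, multiplying back by $s_{\gamma_2}\cdots s_{\gamma_h}$, and taking total expectation yields the claimed identity by linearity of expectation.

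I do not anticipate any real obstacle here; the result is essentially the shift rule repackaged in the compact TOI notation. The only mildly delicate point is verifying that ``setting $s_{\gamma_1}=0$'' is well-defined and agrees with the set-theoretic operation $S_i\mapsto S_i\cup\{\gamma_1\}$ uniformly in $i$, which is handled by the case split in the first step above (including the trivial case $\gamma_1\in S_i$, where both sides of that local replacement agree). One should also note that the $\lind_{l_i}^\dag$ factors, being Rademacher-free, pass through the shift-rule manipulation unchanged, so no extra care is needed for the middle of the TOI.
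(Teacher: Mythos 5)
Your proof is correct and is essentially the paper's argument: the whole lemma reduces to the observation that $\bb*{S_\ell\cup\{\gamma_1\},l_\ell,\ldots,S_0\cup\{\gamma_1\}}$ has no dependence on $s_{\gamma_1}$ (equivalently, it is the original TOI evaluated at $s_{\gamma_1}=0$), so the subtracted term vanishes under $\EV$ since $\EV s_{\gamma_1}=0$ and $\gamma_1$ is distinct from $\gamma_2,\ldots,\gamma_h$. Your extra bookkeeping verifying that setting $s_{\gamma_1}=0$ in each $\lind_{\Gamma\setminus S_i}^\dag$ yields $\lind_{\Gamma\setminus(S_i\cup\{\gamma_1\})}^\dag$ is a correct and slightly more explicit rendering of what the paper states in one line.
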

\begin{proof}
    Referring to the notational definitions in \cref{eq:TOI_notation,eq:lind_subset_notation}, we can see that the expression
    $\bb*{S_\ell\cup\{\gamma_1\},l_\ell,\ldots,S_1\cup\{\gamma_1\},l_1,S_0\cup\{\gamma_1\}}$ has no dependence on the Rademacher $s_{\gamma_1}$. Thus, we have that $\BE s_{\gamma_1}\ldots s_{\gamma_h}\bb*{S_\ell\cup\{\gamma_1\},l_\ell,\ldots,S_1\cup\{\gamma_1\},l_1,S_0\cup\{\gamma_1\}} = 0$, since $s_{\gamma_1}$ is uniformly chosen from $\{+1,-1\}$. This proves the lemma.
\end{proof}
Note that this is analogous to \cref{lem:shift_rule}.

\begin{lemma}[Duhemal's identity applied to one exponential in a TOI]\label{lem:duhemal_compact_notation}
    For any $\gamma \in \Gamma$ and $k \in \{0,1,2,\ldots,\ell\}$ for which $\gamma \not\in S_k$, it holds that
    \begin{align}
        \bb*{S_\ell,l_\ell,\ldots,S_1,l_1,S_0} &= \bb*{S_\ell,l_\ell,\ldots,S_{k+1},l_{k+1},S_{k}\cup\{\gamma\},l_k,S_{k-1},l_{k-1},\ldots,S_1,l_1,S_{0}} \nonumber \\
        &+ s_\gamma \bb*{S_\ell,l_\ell,\ldots,S_{k},\gamma,S_{k}\cup\{\gamma\},l_k,S_{k-1},l_{k-1},\ldots,S_1,l_1,S_{0}} \nonumber \\
        &+ s_\gamma \sum_{\substack{\gamma' \in \Gamma \\ \gamma' \not\in S_k \cup \{\gamma\}}} s_{\gamma'}\bb*{S_\ell,l_\ell,\ldots,S_{k},\gamma\gamma',S_{k}\cup\{\gamma\},l_k,S_{k-1},l_{k-1},\ldots,S_1,l_1,S_{0}}.
    \end{align}
\end{lemma}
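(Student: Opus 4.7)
The plan is to apply Duhamel's identity (\cref{lem:duhamel}) to the single factor $\e^{(t_{k+1}-t_k)\lind^\dag_{\Gamma\setminus S_k}}$ sitting inside the TOI of \cref{eq:TOI_notation}, after splitting its generator into the part that does not involve $\gamma$ and the part that does. First, starting from the expansion of $\lind^\dag_S$ in \cref{eq:lind_subset_notation} with $S=\Gamma\setminus S_k$, using the hypothesis $\gamma\in\Gamma\setminus S_k$ together with the symmetry $\lind^\dag_{\gamma\gamma'}=\lind^\dag_{\gamma'\gamma}$ built into \cref{eq:lind_0/gamma/gammagamma'_def}, I decompose
\begin{align*}
    \lind^\dag_{\Gamma\setminus S_k} = \CA^\dag+\CB^\dag, \quad \CA^\dag := \lind^\dag_{\Gamma\setminus(S_k\cup\{\gamma\})}, \quad \CB^\dag := s_\gamma\lind^\dag_\gamma + s_\gamma\sum_{\gamma'\in\Gamma\setminus(S_k\cup\{\gamma\})} s_{\gamma'}\lind^\dag_{\gamma\gamma'}.
\end{align*}
The two half-sums from the quadratic piece combine into one full sum over $\gamma'$ by the symmetry of $\lind^\dag_{\gamma\gamma'}$, cancelling the factor of $1/2$ that appears in \cref{eq:lind_subset_notation}.

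Second, I apply \cref{lem:duhamel} to these $\CA^\dag,\CB^\dag$ over the interval of length $t_{k+1}-t_k$ and change integration variable to $\tau=t_k+u$, which gives
\begin{align*}
    \e^{(t_{k+1}-t_k)\lind^\dag_{\Gamma\setminus S_k}} = \e^{(t_{k+1}-t_k)\CA^\dag} + \int_{t_k}^{t_{k+1}} \e^{(t_{k+1}-\tau)\lind^\dag_{\Gamma\setminus S_k}}\,\CB^\dag\,\e^{(\tau-t_k)\CA^\dag}\,\rd\tau.
\end{align*}
Substituting this into the TOI in place of the $k$-th exponential, the first term reuses the same integrand structure with $S_k$ replaced by $S_k\cup\{\gamma\}$, so it yields precisely the first RHS term of the lemma (still of length $\ell$). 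For the second term, the new integration variable $\tau$ fits into the time ordering between $t_k$ and $t_{k+1}$, so after relabeling it produces a length-$(\ell+1)$ TOI whose pair of subsets straddling the new jump are $S_k$ on the later-times side and $S_k\cup\{\gamma\}$ on the earlier-times side. Expanding $\CB^\dag$ into its two pieces $s_\gamma\lind^\dag_\gamma$ and $s_\gamma\sum_{\gamma'}s_{\gamma'}\lind^\dag_{\gamma\gamma'}$ then identifies the resulting two integrals with the second and third RHS terms, whose labels at the newly inserted position are $\gamma$ and $\gamma\gamma'$ respectively, and whose prefactors $s_\gamma$ and $s_\gamma s_{\gamma'}$ are exactly what appears in the statement.

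The only delicate step is the algebraic decomposition of $\lind^\dag_{\Gamma\setminus S_k}$: one must correctly collect the quadratic terms involving $\gamma$ and use the symmetry of $\lind^\dag_{\gamma\gamma'}$ to avoid a spurious $1/2$, and one must track carefully which subset sits on which side of the inserted jump after invoking Duhamel. Once this is settled, the proof reduces to substitution, a change of integration variables, and matching the resulting expression against \cref{def:TOIs}; no analytic estimate is required, so the main care lies in bookkeeping of indices rather than in any real obstacle.
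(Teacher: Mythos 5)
Your proposal is correct and follows essentially the same route as the paper's proof: the identical decomposition $\lind^\dag_{\Gamma\setminus S_k}=\CA^\dag+\CB^\dag$ with the symmetrization of $\lind^\dag_{\gamma\gamma'}$ absorbing the factor of $1/2$, followed by Duhamel applied to the $k$-th exponential and a change of integration variables to recast the correction term as a length-$(\ell+1)$ TOI with $S_k$ on the later-time side and $S_k\cup\{\gamma\}$ on the earlier-time side. No substantive differences.
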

\begin{proof}
To further condense notation, denote the set of coordinates $\tau_\ell = (t_1,\ldots,t_\ell)$ for which $0 = t_0 < t_1 < \ldots < t_\ell < t$ by $\Delta_{\ell}$, and let the differential $\rd \tau_{\ell} = \rd t_{\ell}\ldots \rd t_{1}$.   We now apply \cref{lem:duhamel} to the exponential operator $\e^{(t_{k+1}-t_k)\lind_{\Gamma\setminus S_k}^\dag}$ from \cref{eq:TOI_notation}.
Specifically, referring to \cref{eq:lind_subset_notation}, we may decompose 
\begin{align}
    \lind_{\Gamma\setminus S_k}^\dag &= \lind_{\Gamma\setminus (S_k \cup \{\gamma\})}^\dag + s_\gamma \lind_\gamma^\dag + \frac{1}{2} \sum_{\gamma' \in \Gamma \setminus (S_k \cup \{\gamma\})}\L(s_\gamma s_{\gamma'} \lind_{\gamma \gamma'}^\dag + s_{\gamma'} s_{\gamma} \lind_{\gamma' \gamma}^\dag \R) \\
    &= \underbrace{\lind_{\Gamma\setminus (S_k \cup \{\gamma\})}^\dag}_{\vA} + \underbrace{s_\gamma \lind_\gamma^\dag +  \sum_{\gamma' \in \Gamma \setminus (S_k \cup \{\gamma\})}s_\gamma s_{\gamma'} \lind_{\gamma \gamma'}^\dag}_{\vB}
\end{align}
where in the last line we have used the symmetry of $\lind_{\gamma\gamma'}^\dag$ with respect to its subscripts. \cref{lem:duhamel} then implies that 
\begin{align}
    & \e^{(t_{k+1}-t_k)\lind_{\Gamma\setminus S_k}^\dag} = \e^{(t_{k+1}-t_k)(\vA + \vB)} = \e^{(t_{k+1}-t_k)\vA} + \int_0^{t_{k+1}-t_k} \rd \theta \; \e^{(t_{k+1}-t_k-\theta)(\vA + \vB)} \vB \e^{\theta \vA} \\
    ={}&\e^{(t_{k+1}-t_k)\lind_{\Gamma\setminus (S_k \cup \{\gamma\})}^\dag} + \int_0^{t_{k+1}-t_k} \rd \theta \; \e^{(t_{k+1}-t_k-\theta)\lind^\dag_{\Gamma \setminus S_k}} \L(s_\gamma \lind_\gamma^\dag +  \sum_{\gamma' \in \Gamma \setminus (S_k \cup \{\gamma\})}s_\gamma s_{\gamma'} \lind_{\gamma \gamma'}^\dag\R) \e^{\theta \lind_{\Gamma\setminus (S_k \cup \{\gamma\})}^\dag}
\end{align}
Plugging this into \cref{eq:TOI_notation} yields
\begin{align}
    &\bTr \int_{\Delta_{\ell+2}} \rd \tau_{\ell+2} \, \e^{(t_{\ell+1}-t_{\ell}) \lind_{\Gamma \setminus S_{\ell}}^\dag} \lind_{l_\ell}^\dag \ldots \Bigg[ \e^{(t_{k+1} - t_k)\lind_{\Gamma\setminus (S_k\cup\{\gamma\})}^\dag} \nonumber \\
    &\qquad  + \int_{0}^{t_{k+1}-t_k} \rd \theta \, \e^{(t_{k+1}-t_k-\theta)\lind^\dag_{\Gamma\setminus S_k}} s_\gamma\left(\lind_\gamma^\dag + \sum_{\gamma' \in \Gamma \setminus (S_k \cup\{\gamma\})} s_{\gamma'} \lind_{\gamma\gamma'}^\dag \right) \e^{\theta\lind_{\Gamma\setminus (S_k\cup\{\gamma\})}^\dag} \Bigg] \ldots \lind_{l_1}^\dag \e^{(t_1-t_0)\lind_{\Gamma \setminus S_0}^\dag} (\Obf) \\
    ={}&\bTr \int_{\Delta_{\ell+2}} \rd \tau_{\ell+2} \, \e^{(t_{\ell+1}-t_{\ell}) \lind^\dag_{\Gamma \setminus S_{\ell}}} \lind^\dag_{l_\ell} \ldots \e^{(t_{k+1}-t_k)\lind^\dag_{\Gamma\setminus (S_k \cup \{\gamma\}) }} \ldots \lind^\dag_{l_1} \e^{(t_1-t_0)\lind^\dag_{\Gamma \setminus S_0}} (\Obf) \\
    & +s_\gamma \bTr \int_{\Delta_{\ell+3}} \rd \tau_{\ell+3} \, \e^{(t_{\ell+2}-t_{\ell+1}) \lind^\dag_{\Gamma \setminus S_{\ell}}} \lind^\dag_{l_\ell} \ldots \e^{(t_{k+2} - t_{k+1})\lind^\dag_{\Gamma\setminus S_k}} \lind_\gamma^\dag \e^{(t_{k+1}-t_k)\lind^\dag_{\Gamma\setminus (S_k\cup\{\gamma\})}} \ldots \lind^\dag_{l_1} \e^{(t_1-t_0)\lind^\dag_{\Gamma \setminus S_0}} (\Obf) \nonumber \\
    &+s_\gamma \sum_{\gamma' \in \Gamma \setminus (S_k \cup\{\gamma\})} s_{\gamma'} \bTr \int_{\Delta_{\ell+3}} \rd \tau_{\ell+3} \nonumber \Bigg[\\
    &\qquad \qquad \qquad \e^{(t_{\ell+2}-t_{\ell+1}) \lind^\dag_{\Gamma \setminus S_{\ell}}} \lind^\dag_{l_\ell} \ldots \e^{(t_{k+2} - t_{k+1})\lind^\dag_{\Gamma\setminus (S_k)}} \lind_{\gamma\gamma'}^\dag \e^{(t_{k+1}-t_k)\lind^\dag_{\Gamma\setminus (S_k\cup\{\gamma\})}} \ldots \lind^\dag_{l_1} \e^{(t_1-t_0)\lind^\dag_{\Gamma \setminus S_0}} (\Obf)\Bigg]
\end{align}
where we performed the following change of variables: $t_{j+1} \leftarrow t_j$ for $k+1\le j\le \ell+2$ and then reassign $t_{k+1} \leftarrow \theta + t_k$. Changing back to the notation in \cref{eq:TOI_notation} proves the lemma. 
\end{proof}

\begin{lemma}[Master recursion rule for Rademacher-averaged TOIs]\label{lem:master_recursion_rule}
    Let $(\gamma_1,\ldots, \gamma_h)$ be any tuple of distinct elements of  $\Gamma$, with $h \geq 1$. Then    \begin{align}\label{eq:one_variable_recursion_unsimplified}
        &\qquad \EV s_{\gamma_1}s_{\gamma_2}\ldots s_{\gamma_h} \bb*{S_\ell,l_\ell,\ldots S_1,l_1,S_0} \nonumber \\
        ={}& \sum_{\substack{k=0 \\ \gamma_1 \not\in S_k}}^\ell \EV s_{\gamma_2}\ldots s_{\gamma_h} \bb*{S_\ell \cup \{\gamma_1\},l_{\ell},\ldots,S_{k+1}\cup\{\gamma_1\},l_{k+1},S_k,\gamma_1,S_{k}\cup\{\gamma_1\},l_k,S_{k-1},l_{k-1},\ldots, S_1,l_1,S_0 } \nonumber \\
        +& \sum_{\substack{k=0 \\ \gamma_1 \not\in S_k}}^\ell \sum_{\substack{\gamma' \in \Gamma \\ \gamma'\not\in S_k \cup \{\gamma_1\}}} \EV s_{\gamma'}s_{\gamma_2}\ldots s_{\gamma_h} \bb*{S_\ell \cup \{\gamma_1\},l_{\ell},\ldots,S_{k+1}\cup\{\gamma_1\},l_{k+1},S_k,\gamma\gamma',S_{k}\cup\{\gamma_1\},l_k,S_{k-1},l_{k-1},\ldots, S_1,l_1,S_0 } .
    \end{align}
    Note that for terms where $\gamma' = \gamma_i$ for some $i \in\{2,\ldots,h\}$, then $s_{\gamma'}s_{\gamma_i} = 1$ and one pair of Rademachers will cancel in the final term. 
\end{lemma}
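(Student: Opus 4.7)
The plan is to derive the recursion by iteratively applying the single-exponential Duhamel expansion (\cref{lem:duhemal_compact_notation}) at each position $k = \ell, \ell-1, \ldots, 0$ of the TOI, collecting the insertion terms as we go, and eliminating the residual fully-shifted term at the end by the Rademacher shift identity (\cref{lem:shift_rule_compact_notation}). The key observation that makes the bookkeeping collapse is that $s_{\gamma_1}^2 = 1$: every time Duhamel introduces an explicit factor $s_{\gamma_1}$, it combines with the prefactor $s_{\gamma_1}$ outside the expectation to leave a Rademacher-averaged TOI with one fewer downstairs Rademacher, matching the statement.

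Concretely, first I would apply \cref{lem:duhemal_compact_notation} with $\gamma = \gamma_1$ at position $k = \ell$, provided $\gamma_1 \notin S_\ell$ (otherwise the step is skipped, since the generator $\lind_{\Gamma \setminus S_\ell}^\dag$ already has no $s_{\gamma_1}$-dependence and nothing needs to be split off). This produces three terms: a \emph{replacement} term in which $S_\ell$ has been enlarged to $S_\ell \cup \{\gamma_1\}$ but the TOI length is unchanged; a \emph{$\gamma_1$-insertion} term of length $\ell+1$ carrying a factor $s_{\gamma_1}$, with $\lind_{\gamma_1}^\dag$ wedged between the new sets $S_\ell$ and $S_\ell \cup \{\gamma_1\}$; and, for each admissible $\gamma'$, a \emph{$\gamma_1 \gamma'$-insertion} term of length $\ell+1$ carrying a factor $s_{\gamma_1} s_{\gamma'}$. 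I then iterate: at step $j$, apply Duhamel \emph{only} to the current replacement term at position $k = \ell-j$ (skipping whenever $\gamma_1 \in S_k$), depositing the two insertion terms into an accumulating sum and continuing with the new replacement term. After processing all $k$ down to $0$, the sole surviving replacement TOI has $\gamma_1$ adjoined to every $S_k$, so its integrand no longer depends on $s_{\gamma_1}$; multiplying by $s_{\gamma_1} \cdots s_{\gamma_h}$ and averaging therefore kills it by \cref{lem:shift_rule_compact_notation}. The accumulated insertion terms are exactly the summands on the right-hand side: $s_{\gamma_1}^2 = 1$ eliminates $s_{\gamma_1}$ from the first sum, and the surviving new Rademacher $s_{\gamma'}$ gives the second sum.

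There is no deep obstacle here; the difficulty is purely notational bookkeeping. One must verify that at each intermediate step the current replacement term has exactly the shape $\bb*{S_\ell \cup \{\gamma_1\}, l_\ell, \ldots, S_{k+1} \cup \{\gamma_1\}, l_{k+1}, S_k, l_k, \ldots, S_1, l_1, S_0}$, so that the next invocation of \cref{lem:duhemal_compact_notation} at position $k$ produces precisely the sets advertised in the statement (namely $S_k$ on the left of the new label and $S_k \cup \{\gamma_1\}$ on the right). The restriction $\gamma_1 \notin S_k$ on the outer summation is inherited directly from the hypothesis of \cref{lem:duhemal_compact_notation}, and the restriction $\gamma' \notin S_k \cup \{\gamma_1\}$ in the third-term summation is likewise inherited. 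No further analytic or probabilistic input is needed beyond these two elementary lemmas and the identity $s_{\gamma_1}^2 = 1$.
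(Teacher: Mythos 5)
Your proposal is correct and is essentially the paper's own argument: repeated application of \cref{lem:duhemal_compact_notation} with $\gamma=\gamma_1$ at positions $k=\ell,\ell-1,\ldots,0$ (skipping $k$ with $\gamma_1\in S_k$), using $s_{\gamma_1}^2=1$ to cancel the downstairs Rademacher in the insertion terms. The only (immaterial) difference is bookkeeping order: the paper invokes \cref{lem:shift_rule_compact_notation} up front to subtract the fully-augmented TOI and cancels it against the surviving replacement term at the end, whereas you discard that same term at the end by observing it has no $s_{\gamma_1}$-dependence --- the identical fact underlying the shift rule.
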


\begin{proof}
    For simplicity, first suppose that $\gamma_1 \not\in S_1 \cup S_{2} \cup \cdots \cup {S_\ell}$. We start by applying \cref{lem:shift_rule_compact_notation}, followed by applying \cref{lem:duhemal_compact_notation} (Duhamel's identity) with $\gamma = \gamma_1$ and $k=\ell$, arriving at
    \begin{align}
        &\EV s_{\gamma_1}s_{\gamma_2}\ldots s_{\gamma_h} \bb*{S_{\ell},l_\ell,\ldots S_1,l_1,S_0} \\
        ={}& \EV s_{\gamma_1}s_{\gamma_2}\ldots s_{\gamma_h} \Bigg[ \bb*{S_\ell\cup\{\gamma_1\},l_\ell,S_{\ell-1},l_{\ell-1}\ldots,l_1,S_{0}} 
        +s_{\gamma_1} \bb*{S_{\ell},\gamma_1, S_{\ell} \cup \{\gamma_1\},l_{\ell},S_{\ell-1},l_{\ell-1}\ldots,l_1,S_{0}} \nonumber \\
        & + s_{\gamma_1} \sum_{ S_\ell \not\ni \gamma'\neq\gamma_1} s_{\gamma'} \bb*{S_\ell,\gamma \gamma', S_\ell \cup \{\gamma\},l_\ell,S_{\ell-1},l_{\ell-1}\ldots,l_1,S_{0}} \nonumber - \bb*{S_\ell \cup \{\gamma_1\},l_\ell,\ldots,l_1,S_{0} \cup \{\gamma_1\}} \Bigg].\label{eq:One_Recursion_Step_Bracket_Notation}
    \end{align}
    We now apply \cref{lem:duhemal_compact_notation} to the second exponential of the first term i.e., choosing $\gamma = \gamma_1$ and $k=\ell-1$:
        \begin{align}
        \EV s_{\gamma_1}s_{\gamma_2}\ldots s_{\gamma_h} & \bb*{S_{\ell},l_{\ell},\ldots S_1,l_1,S_0} = \EV s_{\gamma_1}s_{\gamma_2}\ldots s_{\gamma_h} \Bigg[ \bb*{S_{\ell}\cup\{\gamma_1\},l_{\ell},S_{\ell-1}\cup\{\gamma_1\},l_{\ell-1},S_{\ell-2},l_{\ell-1},\ldots,l_1,S_{0}} \nonumber \\
        &+ s_{\gamma_1} \bb*{S_{\ell} \cup \{\gamma_1\},\gamma_1,S_{\ell-1},l_{\ell},S_{\ell-1}\cup\{\gamma_1\},l_{\ell-1},S_{\ell-2}, \ldots,l_1,S_{0}} \nonumber \\
        &\quad\quad + s_{\gamma_1} \sum_{ S_{\ell} \not\ni \gamma'\neq\gamma_1} s_{\gamma'} \bb*{S_{\ell} \cup \{\gamma_1\},\gamma_1\gamma',S_{\ell-1},l_{\ell},S_{\ell-1}\cup\{\gamma_1\},l_{\ell-1},S_{\ell-2}, \ldots,l_1,S_{0}} \nonumber \\
        &+s_{\gamma_1} \bb*{S_{\ell},\gamma_1, S_{\ell} \cup \{\gamma_1\},l_{\ell},S_{\ell-1},l_{\ell-1}\ldots,l_1,S_{0} } \nonumber \\
        &\quad\quad + s_{\gamma_1} \sum_{ S_{\ell} \not\ni \gamma'\neq\gamma_1} s_{\gamma'} \bb*{S_{\ell},\gamma_1 \gamma', S_{\ell} \cup \{\gamma_1\},l_{\ell},S_{\ell-1},l_{\ell-1}\ldots,l_1,S_{0} } \nonumber \\
        &- \bb*{S_{\ell} \cup \{\gamma_1\},l_{\ell},\ldots,l_1,S_{0} \cup \{\gamma_1\} } \Bigg].
    \end{align}
    We proceed in this fashion, repeatedly applying \cref{lem:duhemal_compact_notation} to the first term with $\gamma = \gamma_1$ and with $k$ descending to $0$. At the end, the first term will cancel with the last term, and the leftover terms created by choosing $k = \ell, \ell-1,\ldots,0$ are precisely the right-hand side of \cref{eq:one_variable_recursion_unsimplified}.
    
    If $\gamma_1 \in S_i$ for one or more values of $i$, we simply skip that value of $k$ in the sequence; the first and last terms will still cancel, and the theorem still goes through, since those values are omitted in \cref{eq:One_Recursion_Step_Bracket_Notation}. 
\end{proof}

The recursion rule in \cref{lem:master_recursion_rule} relates one Rademacher-averaged TOI as the sum of over many Rademacher-averaged TOIs of longer length. We will call the longer Rademacher-averaged TOIs that compose this sum the ``children'' of the shorter one, indexed by which value of $k$ and $\gamma'$ the correspond to on the right-hand side of the recursion rule. The value of a Rademacher-averaged TOI is equal to the sum over its children. This is captured formally in the following definition and lemma. 

\begin{definition}[Children of Rademacher-averaged TOI]\label{def:child_Rademacher_averaged_TOI}
Let $w= ((\gamma_1,\ldots,\gamma_h), (S_\ell, l_\ell, \ldots,l_1,S_0)) \in \CW$ represent a Rademacher-averaged TOI object, and let $\gamma' \in \Gamma$ and $i \in \{0,1,\ldots,\ell\}$. If $h=0$, define $\CE_{\gamma',i}(w) = 0$. Otherwise, define
\begin{equation}\label{eq:child_of_w}
    \CE_{\gamma',i}(w) = \begin{cases}
        0 & \text{if } \gamma_1 \in S_{i} \text{ or } \gamma' \in S_{i}\\
        (\bar{\gamma}, (S_{\ell} \cup \{\gamma_1\},l_{\ell},\ldots, S_{i+1} \cup \{\gamma_1\}, l_{i + 1}, S_{i}, \bar{\mu}, S_{i}\cup \{\gamma_1\}, l_{i}, S_{i-1}, \ldots,S_1,l_1,S_0))& \text{otherwise}
    \end{cases}
\end{equation}
where
\begin{align}\label{eq:bar_gamma}
     \bar{\gamma} = \begin{cases}
        (\gamma_2,\ldots, \gamma_{h})& \text{if } \gamma' = \gamma_1\\
        (\gamma_2,\ldots,\gamma_{k-1},\gamma_{k+1},\ldots,\gamma_{h})& \text{if } \gamma' = \gamma_k \text{ for } k > 1\\
        (\alpha,\gamma_2,\gamma_3,\ldots,\gamma_{h}) & \text{otherwise}
    \end{cases} \qquad 
\end{align}
and
\begin{align}\label{eq:bar_mu}
    \bar{\mu} = \begin{cases}
        \gamma_1 & \text{if } \gamma' = \gamma_1\\
        \gamma_1 \gamma' & \text{otherwise } 
    \end{cases} \qquad \in \CM.
\end{align}
We say that $\CE_{\gamma',i}(w)$ is the $(\gamma',i)$th \textit{child} of $w$.

We further define the set of \emph{children} of $w$: $\CE(w) = \{\CE_{\gamma',i}(w): i \in \{0,1,\ldots,\ell\}, \gamma' \in \Gamma\}$. 
\end{definition}

\begin{lemma}\label{lem:sum_of_children}
    Let $\Obf$ be an operator, and let $w \in \CW$ be a Rademacher-averaged TOI object. Suppose that $w$ has at least one downstairs Rademacher, i.e., it can be written as $w= ((\gamma_1,\ldots,\gamma_h), (S_\ell, l_\ell, \ldots,l_1,S_0))$ with $h \geq 1$. Then, the action of $w$ on $\Obf$ is given by the sum over all of its children, acting on $\Obf$:
    \begin{align}
        V(w,\Obf) = \sum_{w' \in \CE(w)} V(w',\Obf) = \sum_{\alpha \in \Gamma} \sum_{i=0}^{\ell} V(\CE_{\alpha,i}(w),\Obf)\,,
    \end{align}
    where $V$ was defined in \cref{eq:V_wO_def}.
\end{lemma}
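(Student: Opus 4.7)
The proof is essentially a bookkeeping exercise: the claim is that the master recursion rule (Lemma \ref{lem:master_recursion_rule}) can be re-expressed as a sum over the set of children $\CE(w)$ as enumerated in \cref{def:child_Rademacher_averaged_TOI}. My plan is to start from
\[
V(w,\Obf) = \EV s_{\gamma_1}\ldots s_{\gamma_h}\, \bb*{S_\ell,l_\ell,\ldots,S_1,l_1,S_0}(\Obf),
\]
and directly apply Lemma \ref{lem:master_recursion_rule}, which (since $h \ge 1$) rewrites the right-hand side as a sum of two contributions indexed by $k\in\{0,\ldots,\ell\}$ with $\gamma_1\notin S_k$: one ``diagonal'' contribution carrying the label $\gamma_1$, and one ``off-diagonal'' contribution carrying the label $\gamma_1\gamma'$ with $\gamma'\ne\gamma_1$ and $\gamma'\notin S_k$. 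The goal is then to match each summand term-by-term to a child $\CE_{\gamma',i}(w)$.

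The matching is the obvious one: identify the index $k$ in the recursion rule with $i$ in \cref{eq:child_of_w}, and identify the recursion's $\gamma'$ (or $\gamma_1$ in the diagonal sum) with the $\gamma'$ labelling the child. Concretely, I would argue in three cases. (i) When $\gamma' = \gamma_1$, Lemma \ref{lem:master_recursion_rule} produces the term with inserted label $\bar\mu = \gamma_1$ and one Rademacher $s_{\gamma_1}$ removed from the prefactor, which is exactly \cref{eq:bar_mu} and \cref{eq:bar_gamma}'s first case. (ii) When $\gamma' = \gamma_k$ for some $k \ge 2$, the product $s_{\gamma'}s_{\gamma_k}=1$ cancels a pair of Rademachers, again matching the second case of \cref{eq:bar_gamma} with inserted label $\gamma_1\gamma'$. (iii) When $\gamma'\notin\{\gamma_1,\ldots,\gamma_h\}$, one new Rademacher $s_{\gamma'}$ joins the prefactor, matching the third case of \cref{eq:bar_gamma}.

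I would then handle the range of summation. \cref{def:child_Rademacher_averaged_TOI} lets $(\gamma',i)$ range over all of $\Gamma\times\{0,\ldots,\ell\}$, but sets $\CE_{\gamma',i}(w)$ to zero whenever $\gamma_1\in S_i$ or $\gamma'\in S_i$. By linearity of $V$ in its second argument (and the definition $V(0,\Obf)=0$), these vanishing children contribute nothing to $\sum_{w'\in\CE(w)}V(w',\Obf)$, so I can freely extend the sum to forbidden indices without changing the total. This matches precisely the restricted sums ($\gamma_1\notin S_k$ in both sums, and $\gamma'\notin S_k\cup\{\gamma_1\}$ in the off-diagonal sum) on the right-hand side of the master recursion rule.

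The main obstacle — and it is more pedantic than mathematical — is keeping the three casework branches straight when $\gamma'$ coincides with an existing Rademacher index, since the length of the tuple $\bar\gamma$ changes and one must verify that the inserted $S$-sets $S_i$ and $S_i\cup\{\gamma_1\}$ in \cref{eq:child_of_w} line up with the sets $S_k$ and $S_k\cup\{\gamma_1\}$ produced by Lemma \ref{lem:master_recursion_rule}. Once the dictionary is pinned down, the lemma follows from Lemma \ref{lem:master_recursion_rule} immediately, with no further estimates required.
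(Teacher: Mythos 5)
Your proposal is correct and follows essentially the same route as the paper: both proofs identify $V(w,\Obf)$ with the left-hand side of the master recursion rule (\cref{lem:master_recursion_rule}), match each summand on its right-hand side to the child $\CE_{\gamma',i}(w)$ via $\alpha=\gamma'$, $i=k$, and observe that the index pairs excluded from the recursion's sums are exactly those for which $\CE_{\gamma',i}(w)=0$, so extending the sum over all of $\Gamma\times\{0,\ldots,\ell\}$ costs nothing. Your additional casework on \cref{eq:bar_gamma} is just a more explicit spelling-out of the same dictionary.
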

\begin{proof}
    This equation directly follows from \cref{lem:master_recursion_rule}. In particular, if $w= ((\gamma_1,\ldots,\gamma_h), (S_\ell, l_\ell, \ldots,l_1,S_0))$, then by the definition in \cref{eq:V_wO_def}, the quantity $V(w,\Obf)$ is equal to the left-hand side of \cref{eq:one_variable_recursion_unsimplified}. Meanwhile, each term on the right-hand side of \cref{eq:one_variable_recursion_unsimplified} is a Rademacher-averaged TOI associated with the object $\CE_{\alpha,i}(w)$: one identifies $\alpha = \gamma'$ and $i = k$ to get the corresponding terms. Moreover, each setting $(\alpha,i)$ that does not correspond to any term on the right-hand side of \cref{eq:one_variable_recursion_unsimplified} also satisfies $\CE_{\alpha,i}(w) = 0$. Thus, the sum over Rademacher-averaged TOIs, acting on $\Obf$, is equal to the sum over $V(w',\Obf)$. 
\end{proof}

\begin{lemma}[Length and number of Rademachers]\label{lem:children_length_number_downstairs_Rademachers}
    Let $w = ((\gamma_1,\ldots,\gamma_h), (S_\ell,l_{\ell},\ldots,S_1,l_1,S_0)) \in \CW$ be a Rademacher-averaged TOI with length $\ell$ and $h \geq 1$ downstairs Rademachers, and suppose $w' \neq 0$ is a child of $w$. Then the length of $w'$ is $\ell+1$ and the number of downstairs Rademachers of $w'$ lies in the set $\{h, h-1, h-2\}$. 

\end{lemma}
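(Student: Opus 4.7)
The plan is a direct case analysis from the explicit formula for a nonzero child in \cref{def:child_Rademacher_averaged_TOI}. Since $w' \neq 0$, it has the form given in the second branch of \cref{eq:child_of_w}, namely
\begin{align}
    w' = \bigl( \bar\gamma,\; (S_{\ell} \cup \{\gamma_1\},\,l_{\ell},\,\ldots,\,S_{i+1} \cup \{\gamma_1\},\,l_{i + 1},\,S_{i},\,\bar{\mu},\,S_{i}\cup \{\gamma_1\},\,l_{i},\,S_{i-1},\,\ldots,\,S_1,\,l_1,\,S_0) \bigr),
\end{align}
for some $\gamma' \in \Gamma$ and $i \in \{0,1,\ldots,\ell\}$, with $\bar\gamma$ and $\bar\mu$ given by \cref{eq:bar_gamma,eq:bar_mu}.

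First I would verify the length claim by counting the label entries in the second component of $w'$. The original second component contained the labels $l_1,\dots,l_\ell$, giving length $\ell$. The construction above inserts exactly one new label $\bar\mu \in \CM$ (sitting between $l_{i+1}$ and $l_i$) while leaving the $l_j$'s intact. Hence $w'$ contains $\ell+1$ labels, i.e., it has length $\ell+1$, as required.

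Next I would analyze the number of downstairs Rademachers by splitting into the three cases of \cref{eq:bar_gamma}. If $\gamma' = \gamma_1$ then $\bar\gamma = (\gamma_2,\ldots,\gamma_h)$, so the number of downstairs Rademachers is $h-1$. If $\gamma' = \gamma_k$ for some $k \in \{2,\ldots,h\}$, then $\bar\gamma = (\gamma_2,\ldots,\gamma_{k-1},\gamma_{k+1},\ldots,\gamma_h)$, dropping both $\gamma_1$ and $\gamma_k$, so the count is $h-2$. Otherwise $\gamma' \notin \{\gamma_1,\ldots,\gamma_h\}$, and $\bar\gamma$ replaces $\gamma_1$ by $\gamma'$ while preserving $\gamma_2,\ldots,\gamma_h$, yielding $h$ downstairs Rademachers. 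In all three cases the count lies in $\{h,h-1,h-2\}$, completing the proof.

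There is no real obstacle here: the statement is essentially a bookkeeping check on \cref{def:child_Rademacher_averaged_TOI}. The only subtlety worth flagging is that in the third case one must confirm $\bar\gamma$ is a tuple of \emph{distinct} elements of $\Gamma$ (which it is, since $\gamma'\neq \gamma_j$ for every $j$), so that $w'$ is a well-formed element of $\CW$ and the Rademacher count is unambiguous.
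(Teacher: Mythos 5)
Your proposal is correct and follows essentially the same route as the paper's proof: both arguments read off the length increase from the single inserted label $\bar\mu$ in \cref{eq:child_of_w} and obtain the Rademacher count $h-1$, $h-2$, or $h$ from the three cases of \cref{eq:bar_gamma}. The extra remark about distinctness of the entries of $\bar\gamma$ is a fine sanity check but not needed for the counting.
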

\begin{proof}
    This follows directly from the definition of a child (see \cref{def:child_Rademacher_averaged_TOI}). In particular, each case in \cref{eq:bar_gamma} defines a new tuple $\bar{\gamma}$ that has has length, respectively, $h-1,h-2$ or $h$. Furthermore, the new sequence of sets in \cref{eq:child_of_w} has length $\ell+1$ since $\bar\mu$ was inserted. 
\end{proof} 
The picture that emerges is as follows. We may apply \Cref{lem:master_recursion_rule} to a Rademacher-averaged TOI, yielding a sum over all its children (as clarified in \Cref{lem:sum_of_children}). The number of downstairs Rademachers in any of the children will not exceed that of the parent, as shown in \Cref{lem:children_length_number_downstairs_Rademachers}. The strategy will be to recursively apply \cref{lem:sum_of_children} to the children for which there is at least one downstairs Rademacher. After performing this process many times, the original expression will be given by a sum over many TOIs of varying lengths where all Rademacher prefactors have been canceled, and a remainder term of longer Rademacher-averaged TOIs which still contain Rademachers downstairs. We will show how to sum the former, and bound the latter. Next, we introduce a diagrammatic language to organize this expansion. 

\subsection{Rademacher-averaged TOIs as diagrams: some examples}

In this subsection, we introduce the diagrammatic language by way of example. A more formal definition can be found in the next subsection.

We are interested in bounding the norm of Rademacher-averaged TOIs of length 0, with $h$ downstairs Rademachers. For example, if $h=3$, we are interested in the following superoperator, for which we introduce a diagrammatic representation:
\begin{align}\label{eq:three_Rademacher_example}
    \BE s_{\gamma_1}s_{\gamma_2}s_{\gamma_3} \bTr \int_0^t  \rd t_2 \int_{0}^{t_1} \rd t_1 \e^{\lind^\dag t} = \BE s_{\gamma_1}s_{\gamma_2}s_{\gamma_3} \bb*{\varnothing} = \qquad \raisebox{-6pt}{\text{\drawDiagram{{{$\gamma_{1}$}/0,{$\gamma_{2}$}/0,{$\gamma_{3}$}/0}}}}
\end{align}
Each node of the diagram is labeled with an element of $\Gamma$. The node with an arrow represents a downstairs Rademacher. 

When we apply \Cref{lem:master_recursion_rule} to the expression in \cref{eq:three_Rademacher_example}, we have the equality
\begin{align}
    &\qquad \qquad \BE s_{\gamma_1}s_{\gamma_2}s_{\gamma_3} \bb*{\varnothing}= \BE s_{\gamma_2}s_{\gamma_3} \bb*{\varnothing,\gamma_1,\{\gamma_1\}} + \sum_{\gamma' \neq \gamma_1 } \BE s_{\gamma'}s_{\gamma_2}s_{\gamma_3} \bb*{\varnothing,\gamma_1\gamma',\{\gamma_1\}}\\
    &=\BE s_{\gamma_2}s_{\gamma_3} \bb*{\varnothing,\gamma_1,\{\gamma_1\}} + \BE s_{\gamma_3} \bb*{\varnothing,\gamma_1 \gamma_2,\{\gamma_1\}} + \BE s_{\gamma_2} \bb*{\varnothing,\gamma_1 \gamma_3,\{\gamma_1\}} + \sum_{\gamma' \in \Gamma\setminus \{\gamma_1,\gamma_2,\gamma_3\} } \BE s_{\gamma'}s_{\gamma_2}s_{\gamma_3} \bb*{\varnothing,\gamma_1\gamma',\{\gamma_1\}} \\
    ={}& \L(\adj{-8pt}{0.7}{\text{\drawDiagramIndexed{{{$\gamma_{1}$/0, $\gamma_{1}$/0}/1,{$\gamma_{2}$/0}/0,{$\gamma_{3}$/0}/0}}}}\R)
   + \L(\adj{-14pt}{0.7}{\text{\drawDiagramIndexed{{{$\gamma_{1}$/0, $\gamma_{2}$/0}/2,{$\gamma_{2}$/0}/1,{$\gamma_{3}$/0}/0}}}}\R) + 
   \L(\adj{-20pt}{0.7}{\text{\drawDiagramIndexed{{{$\gamma_{1}$/0, $\gamma_{3}$/0}/3,{$\gamma_{2}$/0}/0,{$\gamma_{3}$/0}/1}}}} \R) + 
   \sum_{\gamma' \in \Gamma \setminus\{\gamma_1,\gamma_2,\gamma_3\}}\L(\adj{-10pt}{0.7}{\text{\drawDiagramIndexed{{{$\gamma_{1}$/0, $\gamma'$/0}/0,{$\gamma_{2}$/0}/0,{$\gamma_{3}$/0}/0}}}} \R), \label{eq:example_expansion_diagram_length1}
\end{align}
where we have associated each Rademacher-averaged TOI in the resulting expression with a unique diagram. Each of these diagrams represents a child of the original diagram. 
In each new diagram, the arrow representing the first Rademacher $\gamma_1$ was replaced with an edge to a new node, which has a label in $\gamma' \in \Gamma$. In the first term, the node label is $\gamma' = \gamma_1$, so the node is filled and that arrow eliminated, representing the cancellation of $s_{\gamma_1}$. In the next two terms, the new node is labeled with a label that agrees with one of the other Rademachers (corresponding to $\gamma' = \gamma_2$ and $\gamma' = \gamma_3$), and these Rademachers cancel together, denoted by the dashed line and the removal of the arrow on both paths. In the final term, there is no cancellation. The 0 label next to the solid edge represents the fact that we are dealing with indexed by $k=0$ in the sum in \cref{eq:one_variable_recursion_unsimplified}---in this case $k=0$ was the only option.

In general, if the new node introduced has label $\gamma'$ and the new edge introduced has label $i$, then the new diagram represents the $(\gamma',i)$th child of the original diagram. This means we brought down $s_{\gamma'}$ by applying Duhemal to the exponential with set indexed by $i$.

We may now recurse and apply \Cref{lem:master_recursion_rule} to each of the terms on the right-hand side of \cref{eq:example_expansion_diagram_length1} that still have unpaired downstairs Rademachers (in this case, all of them). As before, in the applicaton of \Cref{lem:master_recursion_rule}, we choose  $\gamma$ to be the leftmost Rademacher listed downstairs, corresponding the leftmost arrow appearing in the diagram. For example, a particular summand of the final term can be expanded as
\begin{align}
    &\BE s_{\gamma'}s_{\gamma_2}s_{\gamma_3} \bb*{\varnothing,\gamma_1\gamma',\{\gamma_1\}}= \L(\adj{-10pt}{0.7}{\text{\drawDiagramIndexed{{{$\gamma_{1}$/0, $\gamma_{'}$/0}/0,{$\gamma_{2}$/0}/0,{$\gamma_{3}$/0}/0}}}} \R) \nonumber \\
    ={}& \L(\adj{-14pt}{0.7}{\text{\drawDiagramIndexed{{{$\gamma_{1}$/0, $\gamma'$/0, $\gamma'$/0}/1,{$\gamma_{2}$/0}/0,{$\gamma_{3}$/0}/0}}}}\R)
   + \L(\adj{-18pt}{0.7}{\text{\drawDiagramIndexed{{{$\gamma_{1}$/0,$\gamma'$/0 ,$\gamma_{2}$/0}/2,{$\gamma_{2}$/0}/1,{$\gamma_{3}$/0}/0}}}}\R) 
   + \L(\adj{-24pt}{0.7}{\text{\drawDiagramIndexed{{{$\gamma_{1}$/0,$\gamma'$/0 ,$\gamma_{3}$/0}/3,{$\gamma_{2}$/0}/0,{$\gamma_{3}$/0}/1}}}} \R) + \sum_{\gamma'' \in \Gamma \setminus\{\gamma',\gamma_1,\gamma_2,\gamma_3\}}\L(\adj{-16pt}{0.7}{\text{\drawDiagramIndexed{{{$\gamma_{1}$/0, $\gamma'$/0, $\gamma''$/0}/0,{$\gamma_{2}$/0}/0,{$\gamma_{3}$/0}/0}}}} \R) \nonumber \\
   +& \L(\adj{-14pt}{0.7}{\text{\drawDiagramIndexed{{{$\gamma_{1}$/0, $\gamma'$/0, $\gamma'$/1}/1,{$\gamma_{2}$/0}/0,{$\gamma_{3}$/0}/0}}}}\R)
   + \L(\adj{-18pt}{0.7}{\text{\drawDiagramIndexed{{{$\gamma_{1}$/0,$\gamma'$/0 ,$\gamma_{2}$/1}/2,{$\gamma_{2}$/0}/1,{$\gamma_{3}$/0}/0}}}}\R) 
   + \L(\adj{-24pt}{0.7}{\text{\drawDiagramIndexed{{{$\gamma_{1}$/0,$\gamma'$/0 ,$\gamma_{3}$/1}/3,{$\gamma_{2}$/0}/0,{$\gamma_{3}$/0}/1}}}} \R) + \sum_{\gamma'' \in \Gamma \setminus\{\gamma',\gamma_2,\gamma_3\}}\L(\adj{-16pt}{0.7}{\text{\drawDiagramIndexed{{{$\gamma_{1}$/0, $\gamma'$/0, $\gamma''$/1}/0,{$\gamma_{2}$/0}/0,{$\gamma_{3}$/0}/0}}}} \R) \label{eq:diagram_example_expanded_twice}
\end{align}
A few comments are in order. First, we see that for each term on the right-hand side, the first (leftmost) path has grown to have two solid edges by introducing a new node with a new label $\gamma''$. As before, if that label is equal to $\gamma'$, forming an edge with the same label on both adjacent nodes, then the number of Rademachers downstairs decreases by 1 and the new node is filled in. The path can also pair up with one of the other two downstairs Rademachers if $\gamma'' = \gamma_2$ or $\gamma'' = \gamma_3$, indicated by a dashed line. Finally, it can extend without pairing up, preserving the arrow on all three paths. The second solid edge is labeled with a 0 or a 1, indicating the value of $k$ that the term corresponds to in the expansion of \cref{eq:one_variable_recursion_unsimplified}. Note that the $k=0$, $\gamma'' = \gamma_1$ term is missing, because in that case the condition $\gamma'' \not\in S_k$ would be violated (as $S_0 = \{\gamma_1\}$ for this particular term, see left-hand side of \cref{eq:diagram_example_expanded_twice}).

As another example, if we take instead the third term from \cref{eq:example_expansion_diagram_length1}, we expand to obtain
\begin{align}
    &\BE s_{\gamma_2}\bb*{\varnothing, \gamma_1\gamma_3, \{\gamma_1\}} = \L(\adj{-20pt}{0.7}{\text{\drawDiagramIndexed{{{$\gamma_{1}$/0, $\gamma_{3}$/0}/3,{$\gamma_{2}$/0}/0,{$\gamma_{3}$/0}/1}}}} \R)  \\
    ={}& \L(\adj{-20pt}{0.7}{\text{\drawDiagramIndexed{{{$\gamma_{1}$/0, $\gamma_{3}$/0}/3,{$\gamma_{2}$/0,$\gamma_2$/0}/2,{$\gamma_{3}$/0}/1}}}} \R)
    +\sum_{\gamma' \in \Gamma \setminus\{\gamma_1\}}\L(\adj{-20pt}{0.7}{\text{\drawDiagramIndexed{{{$\gamma_{1}$/0, $\gamma_{3}$/0}/3,{$\gamma_{2}$/0,$\gamma'$/0}/0,{$\gamma_{3}$/0}/1}}}} \R) +
    \L(\adj{-20pt}{0.7}{\text{\drawDiagramIndexed{{{$\gamma_{1}$/0, $\gamma_{3}$/0}/3,{$\gamma_{2}$/0,$\gamma_2$/1}/2,{$\gamma_{3}$/0}/1}}}} \R)
    +\sum_{\gamma' \in \Gamma }\L(\adj{-20pt}{0.7}{\text{\drawDiagramIndexed{{{$\gamma_{1}$/0, $\gamma_{3}$/0}/3,{$\gamma_{2}$/0,$\gamma'$/1}/0,{$\gamma_{3}$/0}/1}}}} \R)
\end{align}
Here, we see that the first and third terms have no remaining downstairs Rademachers (i.e., no arrows). The second term, corresponding to $k=0$, omits the $\gamma' = \gamma_1$ option because here the condition $\gamma' \not\in S_k$ would be violated. 

As we will define in the next subsection, if we a draw a diagram that violates $\gamma' \not\in S_k$, we simply associate that diagram with the 0 superoperator, so that it may be included in the sum without any effect.

\subsection{Rademacher-averaged TOIs as diagrams: formal definition}\label{sec:diagrams_def_formal}

We now formally write the rules to generate a diagram and its associated Rademacher-averaged TOI.
\begin{definition}[Generating a diagram]\label{def:algorithm}
    Let $(\beta_{1},\ldots, \beta_g)$ be any tuple of distinct elements of  $\Gamma$, with $g \geq 1$. Let $(\alpha_1,\ldots,\alpha_{\ell})$ be a sequence of elements of $\Gamma$, and let $(i_1,i_2,\ldots,i_{\ell})$ be a tuple of integers satisfying $0 \leq i_j \leq j-1$ for all $j=1,\ldots,\ell$. 
    \begin{enumerate}
        \item Let $w = ((\beta_1,\ldots,\beta_g),(\varnothing))$ be an initial element of $\CW$ with $g$ downstairs Rademachers and length $0$ (associated with the superoperator $\BE s_{\beta_1}\ldots s_{\beta_g} \bb*{\varnothing}$, see \cref{def:Rademacher-averaged_TOI})
        \item Let $d$ be an initial diagram with $g$ open nodes, labeled $\beta_1,\ldots,\beta_g$ from left to right, with an arrow on each node (see \cref{eq:three_Rademacher_example} for the $g=3$ example).
        \item For $j=1$ to $j=\ell$:
        \begin{enumerate}
            \item Let $\beta$ be the label of the leftmost node of $d$ with an arrow. If there are none, then return ``sequence too long.''
            \item Update the diagram $d$ by removing the arrow from the node that is labeled with $\beta$. 
            \item Update the diagram $d$ by introducing a new node  directly above the node labeled by $\beta$, and adding a solid edge connecting the new node with the node labeled by $\beta$. Label the new node with $\alpha_j$. Label the new solid edge with the number $i_j$. 
            \item If $\alpha_j = \beta$, then update the diagram $d$ by filling in the new node. (Do not add any new arrows.)
            \item Else if there exists another node in the diagram that has an arrow and has a label equal to $\alpha_j$, then update the diagram $d$ by removing the arrow on that node, and connecting the two nodes with label $\alpha_j$ with a dotted line. (Do not add any new arrows.)
            \item Else: update diagram $d$ by adding an arrow on the new node labeled by $\alpha_j$. 
            \item Update $w$ to be equal to the $(\alpha_{j},i_j)$th child of $w$, as in \Cref{def:child_Rademacher_averaged_TOI}.
        \end{enumerate}
        \item Output $d$ and $w$.
    \end{enumerate}
\end{definition}
Note that the first recursion step has one exponential to choose from, the second recursion step has two, etc. The index $i_j$ will be the value of $k$ chosen in the recursion equation at step $j$ of recursion (see \cref{lem:master_recursion_rule}).

\begin{lemma}[Properties of diagram]\label{lem:diagram_properties}
    Let $(\beta_1,\ldots,\beta_g)$ be $g \geq 1$ distinct elmements of $\Gamma$. Let $(\alpha_1,\ldots,\alpha_{\ell})$ be a sequence of elements of $\Gamma$ and $(i_1,\ldots,i_\ell)$ be integers satisfying $0 \leq i_j \leq j-1$ for all $j = 1,\ldots,\ell$. 
    Let $d$ and $w$ be the diagram and the element of $\CW$ output by the algorithm defined in \cref{def:algorithm} on these inputs, assuming that the output is not ``sequence too long.'' Then, either $w = 0$, or else the following are true:
    \begin{enumerate}
        \item $d$ has $\ell$ solid edges, and $w$ has length $\ell$
        \item The number of nodes of $d$ with an arrow is equal to the number of downstairs Rademachers for $w$
        \item Denoting $w = ((\gamma_1,\ldots,\gamma_h),(S_\ell,l_\ell,\ldots,l_1,S_0))$, let $\beta$ be the number of values of $j$ for which the label $l_j \in \Gamma$, implying that there are $\ell-\beta$ values of $j$ for which $l_j \in \Gamma \times \Gamma$. Then the number of filled nodes of $d$ is equal to $\beta$. 
    \end{enumerate}
\end{lemma}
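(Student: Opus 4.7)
The plan is to induct on $\ell$, the length of the sequence $(\alpha_1,\ldots,\alpha_\ell)$, following the sequential updates performed by the algorithm in \Cref{def:algorithm}. The base case $\ell = 0$ is immediate from the initialization: $d$ consists of $g$ open arrow-nodes labeled $\beta_1,\ldots,\beta_g$, while $w = ((\beta_1,\ldots,\beta_g),(\varnothing))$ has length $0$, exactly $g$ downstairs Rademachers, and no entries $l_j$, so properties (1)--(3) all hold with both sides equal to their trivial counts.

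For the inductive step I would strengthen the hypothesis with the auxiliary invariant that the labels on the arrow-nodes of $d$, read from left to right, equal the downstairs Rademacher tuple $(\gamma_1,\ldots,\gamma_h)$ of $w$. This strengthening is essential because step~3(a) of the algorithm picks out the diagram's leftmost arrow-node as $\beta$, while step~3(g) invokes \Cref{def:child_Rademacher_averaged_TOI}, which uses $\gamma_1$ as the distinguished first Rademacher; the auxiliary invariant ensures $\beta = \gamma_1$. Assuming the invariant at step $j-1$ with $w \ne 0$, if the $(\alpha_j, i_j)$-th child is $0$ then the conclusion of the lemma is vacuous; otherwise \Cref{lem:children_length_number_downstairs_Rademachers} tells me that the length of $w$ increases by exactly one, matching the single new solid edge appended in step~3(c), confirming property~(1).

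I would then split into the three subcases dictated by \cref{eq:bar_gamma,eq:bar_mu}: (i) $\alpha_j = \gamma_1$, where step~3(d) fills the new node, $\bar{\mu} = \gamma_1 \in \Gamma$, and $\bar{\gamma} = (\gamma_2,\ldots,\gamma_h)$; (ii) $\alpha_j = \gamma_k$ for some $k > 1$, where step~3(e) draws a dashed line erasing two arrows, $\bar{\mu} = \gamma_1\gamma_k \in \Gamma\times\Gamma$, and $\bar{\gamma}$ drops $\gamma_k$; and (iii) $\alpha_j \notin \{\gamma_1,\ldots,\gamma_h\}$, where step~3(f) adds a fresh arrow in the leftmost column, $\bar{\mu} = \gamma_1\alpha_j \in \Gamma\times\Gamma$, and $\bar{\gamma} = (\alpha_j,\gamma_2,\ldots,\gamma_h)$. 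In each case I would read off that the number of arrow-nodes of $d$ changes by $-1$, $-2$, and $0$ respectively, which matches the change in $|\bar{\gamma}|$ and so gives property~(2); and that a filled node is added iff $\bar{\mu} \in \Gamma$, which gives property~(3). The left-to-right label correspondence persists trivially in cases (i)--(ii), and in case (iii) it persists because step~3(c) places the new arrow-node directly above the old $\gamma_1$ node in the same (leftmost) column.

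The main obstacle I expect is the bookkeeping for case (iii), where the outgoing Rademacher $\gamma_1$ and the incoming Rademacher $\alpha_j$ trade places on the diagram while the order of the remaining arrow-nodes must be shown to stay in sync with the tuple $\bar{\gamma}$. Once the auxiliary label invariant is granted, this reduces to the geometric observation that the algorithm always grows a single column upwards in step~3(c), so the column-order of arrow-nodes is stable under updates. No other step is genuinely hard; the three subcases simply mirror the disjunction built into \Cref{lem:master_recursion_rule}, and properties~(1)--(3) then follow mechanically from the accounting above.
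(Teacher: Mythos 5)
Your proposal is correct and follows essentially the same route as the paper's proof: an induction over the $\ell$ algorithm steps, strengthened by the invariant that the left-to-right labels of arrow-nodes match the downstairs Rademacher tuple, with the three if-then-else branches of the algorithm matched against the three cases of \cref{eq:bar_gamma} and \cref{eq:bar_mu}, and \Cref{lem:children_length_number_downstairs_Rademachers} supplying property (1). No gaps; the only nit is that in case (iii) the new arrow-node sits atop the leftmost \emph{arrow-bearing} column (not necessarily the absolute leftmost column), which is what keeps it first in $\bar{\gamma}$.
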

\begin{proof}
    Let $w_0, w_1,\ldots, w_\ell$ be the sequence of Rademacher-averaged TOIs generated by the $\ell$ steps of the algorithm (with output $w=w_\ell$), and let $d_0,d_1,\ldots,d_\ell$ be the sequence of diagrams generated (with output $d = d_\ell$). Each step of the algorithm adds a single solid edge to the diagram, and the original diagram has 0 edges, so $d_j$ has $j$ edges for all $j$. Furthermore, $w_j$ is a child of $w_{j-1}$, and the length of $w_0$ is 0, so applying \cref{lem:children_length_number_downstairs_Rademachers}, the length of $w_j$ is $j$. This establishes item 1. 
    
    To verify item 2, examine the generation of $d_j$ from $d_{j-1}$ in \cref{def:algorithm}, together with the definition of $w_j$ as the child of $w_{j-1}$ in \cref{def:child_Rademacher_averaged_TOI}. 
    Assume for induction that the tuple of labels on nodes of $d_{j-1}$ with an arrow (ordered from left to right), match the tuple of downstairs Rademachers of $w_{j-1}$; this is true in the base case when $j=1$ by construction. Then, we see that the three cases in \cref{eq:bar_gamma} correspond exactly to the three if-then-else cases of the algorithm: if $\beta = \alpha_j$ in the algorithm, then the number of arrows decreases by 1, and the tuple $\bar{\gamma}$ for $w_j$ has one fewer downstairs Radmeacher than that of $w_{j-1}$; if $\beta$ matches the label of another node with an arrow, then this corresponds to the $\alpha=\gamma_k$ branch in \cref{eq:bar_gamma}, and the number of arrows in $d_j$ and downstairs Rademachers in $w_j$ both decrease by 2 compared to $d_{j-1}$ and $w_{j-1}$; otherwise, the number of arrows and downstairs Rademachers stays the same. In all cases, the set of nodes with arrows on them in $d_j$, from left to right, matches the tuple $\bar{\gamma}$, confirming the inductive assumption.
    
    To verify item 3, note that when $d_j$ is formed from $d_{j-1}$ and a filled node is added, this corresponds to the first case ($\alpha = \gamma_1$) in the addition of a label $\bar{\mu}$ to $w_{j-1}$ to form $w_{j}$ in \cref{eq:bar_mu}. Thus, the number of filled nodes is equal to the number of labels with only one element, and the rest of the labels of $w$ have two elements.
\end{proof}

    Before proceeding, we provide a bit of intuition about how this formalism will be used in the lemmas that follow.  Let $(\beta_{1},\ldots, \beta_g)$ be any tuple of distinct elements of  $\Gamma$ and let $q \geq 0$ be an integer. We start with the Rademacher-averaged TOI $\BE s_{\gamma_1}\ldots s_{\gamma_g}\bb*{\varnothing}$, apply \Cref{lem:master_recursion_rule}, and then apply it again to every Rademacher-averaged TOI in the resulting expression that has at least one Rademacher prefactor, repeating this a total of $q$ times. We can see that each Rademacher-averaged TOI in the final expression corresponds to applying the algorithm in \Cref{def:algorithm} for some value $\ell \leq q$, some sequence $(\alpha_1,\ldots,\alpha_\ell)$, and some sequence $(i_1,\ldots,i_\ell)$, corresponding to a choice of $\gamma'$ and $k$ in the expansion at each of the $\ell$ steps.

\subsection{Bounding the norm of a diagram}

Continuing the discussion on intuition above, we see that, if we apply \Cref{lem:master_recursion_rule} onto an initial Rademacher-averaged TOI $\BE s_{\gamma_1}\ldots s_{\gamma_g} \bb*{\varnothing}$, generating the children of that Rademacher-averaged TOI, and then we apply \Cref{lem:master_recursion_rule} again onto each of children that have at least one downstairs Rademacher, recursing $q$ times, then every term that results can be associated with a diagram. That is, at the $j$th level of recursion, each child can be associated with a choice $(\alpha_j, i_j)$ in the algorithm, so for any Rademacher-averaged TOI in the final expression, there is a choice of sequence $(\alpha_1,\ldots,\alpha_{\ell})$, $(i_1,\ldots,i_\ell)$ that leads the algorithm to output that Rademacher-averaged TOI, along with the diagram. Each of these diagrams will have $\ell$ solid edges, with $\ell \leq q$. If the diagram has an arrow on it, then it must have $\ell=q$ since at each step of the recursion, it will always have had at least one downstairs Rademacher, and will be expanded by length 1. The $\ell$ solid edges are labeled by integers $i_1,i_2,\ldots, i_\ell$ obeying $0 \leq i_k \leq k-1$ (with $k$ increasing in the order they were added to the diagram). We may now assert the following. 

\begin{lemma}[Norm bound on a diagram]\label{lem:diagram_bound}
    Consider a diagram $d$  generated from the initial Rademacher-averaged TOI $\BE s_{\gamma_1}\ldots s_{\gamma_g} \bb*{\varnothing}$, using the sequence $(\alpha_1,\ldots,\alpha_{\ell})$ and $(i_1,\ldots,i_\ell)$. Suppose the diagram has $\ell$ solid edges, $\beta$ filled in nodes, and $\alpha$ dashed lines. Let $(\omega_1\omega_1', \omega_2\omega_2',\ldots,\omega_{\ell-\beta}\omega'_{\ell-\beta})$ denote the pairs of labels in $\Gamma$ that are adjacent to solid edges that connect two open nodes. Let $(\xi_1,\ldots,\xi_\beta)$ denote the set of labels in $\Gamma$ for the $\beta$ solid edges connecting an unfilled node to a filled node (each with that label). Let $w$ be the Rademacher-averaged TOI that is output along with this diagram by the algorithm in \cref{def:algorithm}. Then the norm of the Rademacher-averaged TOI (as a superoperator) can be bounded as 
    \begin{equation}\label{eq:norm_d_norm_w_definition_and_bound}
        \norm{d} := \norm{w} := \sup_{\Obf} \frac{\labs{V(w,\Obf)}}{\norm{\Obf}} \leq \frac{t^{\ell+2}}{(\ell+2)!}\L(\prod_{j=1}^{\ell-\beta} \norm{\lind^\dag_{\omega_j\omega'_j}} \R)\L(\prod_{i = 1}^\beta \norm{\lind^\dag_{\xi_i}}\R).
    \end{equation}
\end{lemma}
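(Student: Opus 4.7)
My plan is to invoke the generic TOI bound from \Cref{lem:bouding_rad_avg_toi} on the Rademacher-averaged TOI $w$ output by the algorithm of \Cref{def:algorithm}, and then translate each factor $\norm{\lind^\dag_{l_j}}$ into the corresponding factor on the right-hand side of the desired inequality by tracking which solid edge of $d$ was added at step $j$.

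If $w=0$, then $\norm{w}=0$ and the bound is trivial, so I may assume $w \neq 0$. By item~1 of \Cref{lem:diagram_properties}, $w$ has length $\ell$, hence admits a representation $w = ((\gamma_1,\ldots,\gamma_h),(S_\ell,l_\ell,\ldots,l_1,S_0))$ for some labels $l_1,\ldots,l_\ell \in \CM$. Applying \Cref{lem:bouding_rad_avg_toi} yields
\begin{align*}
    |V(w,\Obf)| \leq \frac{t^{\ell+2}}{(\ell+2)!} \left(\prod_{j=1}^\ell \norm{\lind^\dag_{l_j}}\right) \norm{\Obf},
\end{align*}
so dividing by $\norm{\Obf}$ and taking the supremum over $\Obf$ produces the analogous bound on $\norm{w}$ defined in \cref{eq:norm_d_norm_w_definition_and_bound}.

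The remaining task is combinatorial: identifying the labels $l_1,\ldots,l_\ell$ with the labels on the solid edges of $d$. By construction, each iteration $j$ of the algorithm creates exactly one new solid edge in $d$ and simultaneously assigns one new label $l_j = \bar\mu$ to $w$ via \cref{eq:bar_mu}. If step~(d) of the algorithm is triggered at iteration $j$ (i.e., the new node label equals the leftmost arrow label), then the new node is filled and $l_j$ lies in $\Gamma$; otherwise (steps~(e) or~(f)) the new edge connects two open nodes and $l_j$ lies in $\Gamma \times \Gamma$. By item~3 of \Cref{lem:diagram_properties}, the number of single-element labels equals the number $\beta$ of filled nodes. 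After relabeling indices, this gives
\begin{align*}
    \prod_{j=1}^\ell \norm{\lind^\dag_{l_j}} = \left(\prod_{j=1}^{\ell-\beta}\norm{\lind^\dag_{\omega_j \omega'_j}}\right)\left(\prod_{i=1}^{\beta}\norm{\lind^\dag_{\xi_i}}\right),
\end{align*}
which produces the desired bound.

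I do not anticipate a significant obstacle, as the nontrivial combinatorial work is already encoded in the algorithm setup and the properties recorded in \Cref{lem:diagram_properties}; this lemma is essentially a bookkeeping step that stitches those pieces together. The only mild care needed is to verify that the correspondence between filled-node edges and single-element labels $\bar\mu \in \Gamma$ is exactly one-to-one, which follows directly from the case distinction in \cref{eq:bar_mu} together with the filling rule in step~(d) of the algorithm.
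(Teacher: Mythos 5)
Your proposal is correct and follows essentially the same route as the paper's proof: apply the generic bound of \Cref{lem:bouding_rad_avg_toi} to the output $w$, then identify each label $l_r$ of $w$ with a solid edge of $d$ (a two-element label $\omega_j\omega_j'$ for an edge between open nodes, a one-element label $\xi_i$ for an edge into a filled node), noting that the indices $(i_1,\ldots,i_\ell)$ only permute the labels and so do not affect the bound. No gaps.
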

\begin{proof}
    Recall that we are bounding the norm of a Rademacher-averaged-TOI via the definition of $V(w,\Obf)$ (\cref{def:Rademacher-averaged_TOI}.) 
    The diagram corresponds to a Rademacher-averaged TOI of the form 
    \begin{align}
    \BE s_{\gamma_1}\ldots s_{\gamma_h}\bb*{S_\ell, l_\ell, S_{\ell-1},\ldots, S_1,l_1,S_0}.
    \end{align}
    During the creation of the diagram, a solid edge between open nodes $\omega_{j}$ and $\omega'_j$ corresponds to the inclusion of a label $l_r=\omega_{j}\omega'_j$ for one of the values of $r$. 
    A solid edge between an open node and filled node with the same label $\xi_i$ corresponds to  the label $l_r = \xi_i$ for some value of $r$.  Applying \cref{lem:bouding_rad_avg_toi} completes the proof. Note that the choice of $(i_1,\ldots,i_{\ell})$ impacts the order in which the set of labels appear, as well as the sets $S_r$, but these do not impact the bound in \cref{lem:bouding_rad_avg_toi}.
\end{proof}

What this lemma shows is that we can upper bound the norm of a diagram by a product of values assigned to each of its components.
For example, we have

\begin{equation}
\norm{\text{\adj{-36pt}{0.8}{
\drawDiagramIndexed{{{$\gamma_{1}$/0,$\beta_1$/0,$\beta_{2}$/0,$\gamma_4$/2}/4,{$\gamma_{2}$/0,$\gamma_{2}$/1}/2,{$\gamma_{3}$/0,$\beta_{3}$/3,$\beta_{4}$/0}/0,{$\gamma_{4}$}/1,{$\gamma_{5}$}/0}}}}} \leq \frac{t^8\norm{\lind^\dag_{\gamma_1\beta_1}}\norm{\lind^\dag_{\beta_1\beta_2}}\norm{\lind^\dag_{\beta_2\gamma_4}}\norm{\lind^\dag_{\gamma_2}}\norm{\lind^\dag_{\gamma_3\beta_3}}\norm{\lind^\dag_{\beta_3\beta_4}}}{8!}.
\end{equation}
Note that the values of $i_1,\ldots,i_\ell$ that appear on the edges of the diagram do not affect the bound.

\subsection{Proof of \Cref{prop:superoperator_norm}}\label{sec:proof_superoperator_norm}

We define symbols that will serve to compute bounds:
\begin{align}
    \quad F_1(\gamma,\gamma') := \norm{\lind^\dag_{\gamma\gamma'}},  \quad \text{ and, for } p\ge 2, \quad F_p(\gamma,\gamma') := \sum_{\beta_1,\ldots,\beta_{p-1}} \norm{\lind^\dag_{\gamma\beta_1}} \norm{\lind^\dag_{\beta_1\beta_2}} \cdots \norm{\lind^\dag_{\beta_{p-2}\beta_{p-1}}} \norm{\lind^\dag_{\beta_{p-1} \gamma'}}, \label{eq:F_def}
\end{align}
\begin{align}
    G_1(\gamma) := \norm{\lind^\dag_\gamma}, \quad \text{ and, for } p\ge 2, \quad G_p(\gamma) := \sum_{\beta_1,\ldots,\beta_{p-1}} \norm{\lind^\dag_{\gamma\beta_1}} \norm{\lind^\dag_{\beta_1\beta_2}} \cdots \norm{\lind^\dag_{\beta_{p-2}\beta_{p-1}}} \norm{\lind^\dag_{\beta_{p-1}}},\label{eq:G_def}
\end{align}
\begin{align}
    \Phi_0(\gamma) := 1, \quad & \Phi_1(\gamma) := \sum_{\beta'} \norm{\lind^\dag_{\gamma\beta'}},  \quad \text{ and, for } \nonumber \\
    &p\ge 2, \quad \Phi_p(\gamma) := \sum_{\gamma'} \sum_{\beta_1,\ldots,\beta_{p-1}} \norm{\lind^\dag_{\gamma\beta_1}} \norm{\lind^\dag_{\beta_1\beta_2}} \cdots \norm{\lind^\dag_{\beta_{p-2}\beta_{p-1}}} \norm{\lind^\dag_{\beta_{p-1} \gamma'}} \label{eq:Phi_def}.
\end{align}
In what follows, $F_p, G_p, \Phi_p$ will each correspond to a bound on the norm of a sum over diagrams with a single connected path. The quantity $F_p(\gamma,\gamma')$ corresponds to all paths with $p$ solid edges, without any filled nodes or arrows, beginning at $\gamma$ and ending at $\gamma'$.  The quantity $G_p(\gamma)$ corresponds to all paths of length $p$ beginning at $\gamma$ and ending with a filled node. The quantity $\Phi_p(\gamma)$ corresponds to all paths of length $p$ beginning at $\gamma$ and ending with an arrow. 

The following lemma shows how to bound a term of the form $\BE s_{\gamma_{1}}s_{\gamma_{2}}\ldots s_{\gamma_{g}} \bb*{\varnothing}$.
\begin{lemma}[Sum over contributions of all paths]\label{lem:V_hat_bound_FGPhi}
    Let $g,q$ be positive integers. Let  $\gamma_{1},\ldots,\gamma_{g}$ be distinct elements of $\Gamma$. Then the following bound holds:
    \begin{align}\label{eq:V_hat_paths}
        &\norm{\BE s_{\gamma_{1}}s_{\gamma_{2}}\ldots s_{\gamma_{g}} \bb*{\varnothing}} \nonumber \\
        &\leq \sum_{\ell=\alpha+\beta}^q \frac{t^{\ell+2}}{(\ell+1)(\ell+2)} \sum_{\pi \in \symg(g)} \sum_{\substack{\alpha,\beta\in \mathbb{Z}_{\ge 0} \\ 2\alpha + \beta = g}} \sum_{\substack{\ell_1+\ldots+\ell_\alpha+r_1+\ldots+r_\beta=\ell \\ \ell_j, r_k \in \mathbb{Z}_{>0}}} \Bigg[\nonumber \\
        &\quad F_{\ell_1}(\gamma_{\pi(1)},\gamma_{\pi(2)}) \cdots F_{\ell_\alpha}(\gamma_{\pi(2\alpha-1)},\gamma_{\pi(2\alpha)}) G_{r_1}(\gamma_{\pi(2\alpha+1)}) \cdots G_{r_\beta}(\gamma_{\pi(2\alpha+\beta)}) \Bigg]\nonumber \\
        &+ t^{q+2} \sum_{\pi \in \symg(g)} \sum_{\substack{\alpha,\beta \in \mathbb{Z}_{\ge 0} \\ 2\alpha + \beta < g}} \sum_{\substack{\ell_1+\ldots+\ell_\alpha+r_1+\ldots+r_\beta+v=q \\ \ell_j, r_k \in \mathbb{Z}_{>0}, v\in \mathbb{Z}_{\geq 0}}} \Bigg[ \nonumber \\
        & F_{\ell_1}(\gamma_{\pi(1)},\gamma_{\pi(2)}) \cdots F_{\ell_\alpha}(\gamma_{\pi(2\alpha-1)},\gamma_{\pi(2\alpha)}) G_{r_1}(\gamma_{\pi(2\alpha+1)}) \cdots G_{r_\beta}(\gamma_{\pi(2\alpha+\beta)}) \Phi_{v}(\gamma_{\pi(2\alpha+\beta+1)}) \Phi_0(\gamma_{\pi(2\alpha+\beta+2)}) \cdots \Phi_0(\gamma_{\pi(g)})\Bigg].
    \end{align}
    Note that the index $v$ ranges from $0$ to $\ell$ whereas $r_1,\ldots,r_\beta,\ell_1,\ldots,\ell_\alpha$ range from $1$ to $\ell$.
\end{lemma}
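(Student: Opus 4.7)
The plan is to iterate the master recursion rule (Lemma \ref{lem:master_recursion_rule}) $q$ times on the initial Rademacher-averaged TOI $\BE s_{\gamma_1} \cdots s_{\gamma_g} \bb*{\varnothing}$. At each level, every term in the current expansion that still has at least one downstairs Rademacher is replaced by the sum over its children (Lemma \ref{lem:sum_of_children}), while the terms with no remaining downstairs Rademacher are set aside into a pool of ``completed'' contributions. By Definition \ref{def:algorithm} together with Lemma \ref{lem:diagram_properties}, every term produced in this process is in bijection with a diagram $d$ generated by some sequence $(\alpha_1,\ldots,\alpha_\ell)$ and $(i_1,\ldots,i_\ell)$ with $\ell \leq q$, where $i_j \in \{0,1,\ldots,j-1\}$. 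Applying the triangle inequality to the entire sum, the problem reduces to enumerating the resulting diagrams and bounding each via Lemma \ref{lem:diagram_bound}.

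Next I would classify the diagrams by topology. Because the algorithm always extends the leftmost node with an arrow, the $g$ starting nodes are processed in order: the first starting node's path is grown until it terminates (either by self-pairing into a filled node or by merging with another starting node via a dashed line), then the second untouched starting node's path is grown, and so on. Consequently the surviving diagrams fall into two disjoint classes: (i) \emph{completed} diagrams of some length $\ell \leq q$ in which all $g$ arrows have been eliminated, decomposing as $\alpha$ merged pairs (contributing factors $F_{\ell_j}$) and $\beta$ self-terminated paths (contributing factors $G_{r_k}$), with $2\alpha + \beta = g$; and (ii) \emph{remainder} diagrams of length exactly $q$ in which at least one starting node still carries an arrow, of which exactly one path is partially built (contributing a factor $\Phi_v$) and the remaining $g - 2\alpha - \beta - 1$ starting nodes are untouched bare arrows (each contributing $\Phi_0(\gamma) = 1$).

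The enumeration of diagrams within each class is organized by: a permutation $\pi \in \symg(g)$ assigning which $\gamma_{\pi(j)}$ plays which structural role, together with the component lengths $\ell_1,\ldots,\ell_\alpha, r_1,\ldots,r_\beta$ (and $v$ in the remainder case). The internal labels within each component are exactly the summands defining $F_p$, $G_p$, and $\Phi_p$ in Equations \eqref{eq:F_def}, \eqref{eq:G_def}, and \eqref{eq:Phi_def}. By Lemma \ref{lem:diagram_bound}, each diagram of length $\ell$ contributes at most $\frac{t^{\ell+2}}{(\ell+2)!}$ times the corresponding product of norms of $\lind^\dag$ terms, and these bounds are independent of the edge-label tuple $(i_1,\ldots,i_\ell)$. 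Summing over the $\prod_{j=1}^\ell j = \ell!$ admissible such tuples produces the $\frac{t^{\ell+2}}{(\ell+1)(\ell+2)}$ prefactor in the first sum of \eqref{eq:V_hat_paths}; for the remainder class we use the crude overestimate $\frac{1}{(q+1)(q+2)} \leq 1$ to obtain the $t^{q+2}$ prefactor, which is adequate because later applications (Lemma \ref{lem:V_hat_bound_FGPhi}'s users) will send $q \to \infty$ so that the remainder vanishes.

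The main obstacle is the combinatorial bookkeeping: one must verify that summing over all $\pi \in \symg(g)$ together with all compositions of the total length genuinely produces a valid upper bound, and that each nonzero child of the algorithm is in fact captured despite the constraints $\gamma_1 \notin S_k$ and $\gamma' \notin S_k$ appearing in Definition \ref{def:child_Rademacher_averaged_TOI}. The symmetry $\lind^\dag_{\gamma\gamma'} = \lind^\dag_{\gamma'\gamma}$ arising from the symmetrization in Equation \eqref{eq:lind_0/gamma/gammagamma'_def} implies that $F_p$ is symmetric in its two arguments, so that the sum over $\pi$ (rather than over unordered matchings) may overcount each merged-pair and self-termination configuration by an absolute combinatorial constant, which only tightens the interpretation of the stated bound as an upper bound. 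The constraint $\gamma' \notin S_k$ in the recursion only ever suppresses contributions (children with $\gamma' \in S_k$ are set to $0$), so dropping it in the combinatorial sum only weakens the inequality, which is acceptable.
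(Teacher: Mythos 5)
Your proposal is correct and follows essentially the same route as the paper's proof: iterating the master recursion rule to split the expansion into arrow-free diagrams of length $\ell\le q$ plus a length-$q$ remainder, classifying diagrams by $(\pi,\alpha,\beta)$ and component lengths so that the internal label sums reproduce $F_p$, $G_p$, $\Phi_p$, absorbing the $\ell!$ choices of edge labels into the $t^{\ell+2}/((\ell+1)(\ell+2))$ prefactor, and using the leftmost-arrow rule to argue that at most one remainder path has positive length. The points you flag as potential obstacles (overcounting from $\pi$, dropped $\gamma'\notin S_k$ constraints, vanishing children) are handled the same way in the paper and only loosen the bound in the safe direction.
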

\begin{proof}
    By the definition of the Rademacher-averaged TOI (\cref{def:Rademacher-averaged_TOI}), the superoperator $\BE \gamma_{1}\gamma_{2}\ldots \gamma_{g} \bb*{\varnothing}$, acting on $\Obf$, is equal to $V(w_0,\Obf)$, where $w_0:= ((\gamma_1,\ldots,\gamma_g),(\varnothing))$, which has length 0 and $g \geq 1$ downstairs Rademachers. The object $w_0$ is associated with the diagram $d_0$ with $g$ isolated nodes with an arrow, as in \cref{eq:three_Rademacher_example}. 
    Let $\CC_1 \subset \CW$ denote the children of $w_0$ that are not equal to 0, that is, $\CC_1 = \CE(w_0) \setminus \{0\}$, as defined in \Cref{def:child_Rademacher_averaged_TOI}, and let $\CC_1^0$ be the subset of $\CC_1$ that have 0 downstairs Rademachers.
    Then we can recursively extend this definition by letting
    \begin{align}
        \CC_j = \bigcup_{w \in \CC_{j-1} \setminus \CC_{j-1}^0} \CE(w) \setminus \{0\}\,,
    \end{align}
    and then define $\CC_j^0$ to be the subset of $\CC_j$ that has 0 downstairs Rademachers.
    
    Then, with repeated applications of  \cref{lem:sum_of_children}, we can write
    \begin{align}
        V(w_0,\Obf) = \sum_{w \in \CC_1} V(w,\Obf) &= \sum_{w \in \CC_1^0} V(w,\Obf) + \sum_{w \in \CC_1 \setminus \CC_1^0} V(w,\Obf) \\
        &=\sum_{w \in \CC_1^0} V(w,\Obf) + \sum_{w \in \CC_2^0} V(w,\Obf) + \sum_{w \in \CC_2 \setminus \CC_2^0} V(w,\Obf) \\
        &= \ldots = \L(\sum_{\ell = 1}^q \sum_{w \in \CC_\ell^0} V(w,\Obf)\R) + \sum_{w \in \CC_q \setminus \CC_q^0} V(w,\Obf).
    \end{align}

    We now use the triangle inequality and assert (noting from the definition of $\norm{w}$ in \cref{eq:norm_d_norm_w_definition_and_bound} that $|V(w,\Obf)| \leq \norm{w}\norm{\Obf}$)
    \begin{align}
        \labs{V(w_0, \Obf)} &\leq \sum_{\ell = 1}^q \sum_{w \in \CC_\ell^0} \labs{V(w,\Obf)} + \sum_{w \in \CC_q \setminus \CC_q^0} \labs{V(w,\Obf)}  \\
        &\leq \norm{\Obf}\sum_{\ell = 1}^q \sum_{w \in \CC_\ell^0} \norm{w} + \norm{\Obf}\sum_{w \in \CC_q \setminus \CC_q^0} \norm{w}\,.
    \end{align}
    As $\norm{w_0} := \sup_{\Obf} \frac{\labs{V(w_0,\Obf)}}{\norm{\Obf}}$, we can drop the $\norm{\Obf}$ and write
    \begin{equation}
        \norm{w_0} \leq \sum_{\ell = 1}^q \sum_{w \in \CC_\ell^0} \norm{w} + \sum_{w \in \CC_q \setminus \CC_q^0} \norm{w}\,.
    \end{equation}
    For each element $w \in \CC_{\ell}^0$, there are sequences $(\alpha_1,\ldots,\alpha_\ell)$ and $(i_1,\ldots,i_{\ell})$ such that, on these inputs, running the algorithm in \Cref{def:algorithm} outputs $w$ and also the diagram $d$. By item 2 of \cref{lem:diagram_properties}, if $w \in \CC_\ell^0$, then $d$ has no arrows. Let $\CD_\ell^0$ denote the set of all diagrams that can be generated by some sequence pair, with length $\ell$ and no arrows, and let $\CD_q$ be the set of all diagrams with $q$ solid edges (and any number of arrows). Some of these diagrams may be generated together with the output $w=0$. Nevertheless, since $\norm{d} = \norm{w} = 0$ when $w=0$, we may still write
    \begin{equation}
        \norm{w_0} \leq \sum_{\ell = 1}^q \sum_{d \in \CD_\ell^0} \norm{d} + \sum_{d \in \CD_q \setminus \CD_q^0} \norm{d}
    \end{equation}
    including those diagrams in the sum. 
    Examining the algorithm in \cref{def:algorithm}, we observe that each of the $g$ initial paths within the diagram will have an arrow unless (i) at some step the path terminates with a filled node, removing 1 arrow, or (ii) at some step the path matches with a dotted line to another path, removing 2 arrows. Let $\beta$ denote the number of filled nodes in the diagram, and $\alpha$ the number of dotted lines. If $d$ has no arrows, it must therefore hold that $2\alpha+\beta = g$. There exists a permutation $\pi \in \symg(g)$, for which the path of index (counting from left to right) $\pi(2j-1)$ matches with the path of index $\pi(2j)$ and has a length $\ell_j$, for $j=1,\ldots,\alpha$, and paths of index $\pi(2\alpha+j)$ terminate in a filled node and have length $r_j$ for $j=1,\ldots,\beta$.  (In fact, there will be multiple such $\pi$.) It must hold that $\ell_1+\ldots+\ell_\alpha + r_1 + \ldots r_\beta = \ell$. Since $\ell_j > 0$ and $r_j > 0$ must hold, we see that $\ell \geq \alpha+\beta$ for this diagram to be possible. 

    We now impose the upper bound in \cref{lem:diagram_bound}. Note that it is independent of the choice of $i_{1},\ldots,i_{\ell}$, which ultimately contributes a factor of $\ell!$ to the bound. Also, it has a factor $t^{\ell+2}/(\ell+2)!$ regardless of the details of the diagram. Finally, the diagram-specific multiplicative factors  can each be associated with an edge of the diagram, and are determined by the labels of the nodes adjacent to that edge. Thus, we can compute independently the contributions of each of the $\alpha + \beta$ paths (of lengths $\ell_1,\ldots,\ell_\alpha, r_1,\ldots, r_\beta)$ that compose the diagram, and multiply them together. All diagrams will be included at least once in the sum if we sum over all choices of $\pi$, $\alpha$, $\beta$, $\ell_1,\ldots,\ell_\alpha$, $r_1,\ldots, r_\beta$,  $i_1,\ldots, i_{\ell}$, and the values of the label of all the open nodes that are adjacent to two edges (the filled nodes must agree with the node that preceded them and thus are not summed over). Consider the path of index $\pi(2j-1)$, with $\ell_j$ such edges, for $j \in \{1,2,\ldots,\alpha\}$. Summing over the $\ell_j$ choices of labels for open nodes of the path gives a contribution to the bound of \cref{lem:diagram_bound} equal to $F_{\ell_j}(\gamma_{\pi(2j-1)}\gamma_{\pi(2j)})$, defined in \cref{eq:G_def}. Similarly, the path of index $\pi(2\alpha+j)$, with $r_j$ such edges, for $j \in \{1,2,\ldots,\beta\}$ contributes to the bound of \cref{lem:diagram_bound} as $G_{r_j}(\gamma_{\pi(1)}\gamma_{\pi(2)})$, defined in \cref{eq:F_def}. Overall, we can say that
    \begin{align}
        &\sum_{d \in \CD_\ell^0} \norm{d} \leq \sum_{\substack{\alpha,\beta \in \mathbb{Z}_{\geq 0}\\ 2\alpha+\beta = g}}\frac{t^{\ell+2}}{(\ell+1)(\ell+2)} \sum_{\pi \in \symg(g)} \sum_{\substack{\alpha,\beta\in \mathbb{Z}_{\ge 0} \\ 2\alpha + \beta = g}} \sum_{\substack{\ell_1+\cdots+\ell_\alpha+r_1+\cdots+r_\beta=\ell \\ \ell_j, r_k \in \mathbb{Z}_{>0}}} \Bigg[\nonumber \\
        &\quad F_{\ell_1}(\gamma_{\pi(1)},\gamma_{\pi(2)}) \cdots F_{\ell_\alpha}(\gamma_{\pi(2\alpha-1)},\gamma_{\pi(2\alpha)})) G_{r_1}(\gamma_{\pi(2\alpha+1)}) \cdots G_{r_\beta}(\gamma_{\pi(2\alpha+\beta)})\Bigg].
    \end{align}

    Similarly, we may examine elements $d \in \CD_{q} \setminus \CD_{q}^0$, which have at least one path that ends in an arrow. Let the number of paths with an arrow be $\theta$, and $\alpha$ and $\beta$ defined as before, so that $2\alpha + \beta + \theta = g$.  The rules of the algorithm in \cref{def:algorithm} ensure that the leftmost path will continue growing until its arrow is removed. Thus, there can be at most one path with an arrow of length greater than 0. Denote the length of this path by $v \geq 0$, and note that there exists a $\pi$ for which its index is $\pi(2\alpha + \beta + 1)$ (while maintaining the index relationships above as well).  We thus have $\ell_1+ \ldots + \ell_\alpha + r_1 + \ldots + r_\beta + v = \ell$, and the contributions of the paths without arrows as above. The contribution of the path of length $v$ with an arrow is given by $\Phi(\gamma_{\pi(2\alpha + \beta + 1)})$. The other $\ell-2\alpha-\beta-1$ paths contribute only a factor of $\Phi_0(\cdot) = 1$. This allows us to say that
\begin{align}
     &\sum_{d \in \CD_{q} \setminus \CD_{q}^0} \norm{d} \leq t^{q+2} \sum_{\pi \in \symg(g)} \sum_{\substack{\alpha,\beta \in \mathbb{Z}_{\ge 0} \\ 2\alpha + \beta  < g}} \sum_{\substack{\ell_1+\ldots+\ell_\alpha+r_1+\ldots+r_\beta+v=q \\ \ell_j, r_k \in \mathbb{Z}_{>0}, v\in \mathbb{Z}_{\geq 0}}} \Bigg[ \nonumber \\
        & F_{\ell_1}(\gamma_{\pi(1)},\gamma_{\pi(2)}) \cdots F_{\ell_\alpha}(\gamma_{\pi(2\alpha-1)},\gamma_{\pi(2\alpha)}) G_{r_1}(\gamma_{\pi(2\alpha+1)}) \cdots G_{r_\beta}(\gamma_{\pi(2\alpha+\beta)}) \Phi_{v}(\gamma_{\pi(2\alpha+\beta+1)}) \Bigg].
\end{align}
Combining these contributions yields the lemma.

\end{proof}

We can now use the bounds on $F_l,G_l,\Phi_l$ from \cref{sec:lindbladin_bounds} to arrive at the following lemma:
\begin{customthm}{\Cref{prop:superoperator_norm}}[restated]\label{prop:superoperator_norm_appendix}
Fix a subset $U \subset \Gamma$ with $|U| \leq 5$. Suppose that $y$ and $t$ satisfy the relations     
$t\ge0$, $\aloc k t < 1$, and $y^2 \HlocPower{2} \aloc k < 1/8 $. Then 
\begin{align}
    \norm{ \BE\left[\left(\prod_{\alpha \in U} s_\alpha\right) \bTr \int_{t > t_2> t_1> 0} \rd t_1 \rd t_2 \e^{\lind^\dag t_1}(\cdot)\right]} \leq \CO\L(t^2\R) \cdot |y|^{|U|} \cdot \prod_{\alpha \in U} h_\alpha.
\end{align}
\end{customthm}
\begin{proof}
    Let $U = \{\gamma_1,\ldots,\gamma_g\}$ with $|U| = g$. Then, the left-hand side is equivalent to the quantity $\norm{\BE \gamma_{1}\gamma_{2}\ldots \gamma_{g} \bb*{\varnothing}}$. Applying the bounds offered by \cref{lem:bounds_on_paths} to \cref{lem:V_hat_bound_FGPhi}, we get
    \begin{align}
        \norm{\BE \gamma_{1}\gamma_{2}\ldots \gamma_{g} \bb*{\varnothing}} &\leq \left(\prod_{j=1}^g h_{\gamma_{j}} \right) \sum_{\pi \in \symg(g)} \sum_{\substack{\alpha,\beta,\theta \in \mathbb{Z}_{\ge 0} \\ 2\alpha + \beta = g}} \sum_{\ell=\alpha+\beta}^q \frac{t^{\ell+2}}{(\ell+1)(\ell+2)} \nonumber \\
        &\quad\quad\times \sum_{\substack{\ell_1+\ldots+\ell_\alpha+r_1+\ldots+r_\beta=\ell \\ \ell_j, r_k \in \mathbb{Z}_{>0}}} \left( 8|y|^2 k\aloc \right)^{\alpha} \left( 8|y| k\aloc \right)^{\beta} \left( 8|y|^2 k\aloc \Hloc^2 \right)^{\ell - \alpha - \beta} \nonumber \\
        &+ \frac{t^{q+2}}{(q+1)(q+2)} \sum_{\pi \in \symg(g)}  \sum_{\substack{\alpha,\beta\in \mathbb{Z}_{\ge 0} \\ 2\alpha + \beta < g}} \left(\prod_{j=1}^{g-\theta} h_{\gamma_{\pi(j)}} \right) \sum_{\substack{\ell_1+\ldots+\ell_\alpha+r_1+\ldots+r_\beta+v=q \\ \ell_j, r_k \in \mathbb{Z}_{>0}, v\in \mathbb{Z}_{\geq 0}}} \Bigg[  \nonumber\\
        &\qquad \qquad \left(\sum_\gamma h_\gamma \right) \left( 8|y|^2 k\aloc \right)^{\alpha+1} \left( 8|y| k\aloc \right)^{\beta} \left( 8|y|^2 k\aloc \Hloc^2 \right)^{q - \alpha - \beta - 1}\Bigg]. \label{eq:terminated_diagrams_and_remainder}
    \end{align}
    We see now that each term in the sum over $\pi$ is identical, so this can be replaced with a $g! = \CO(1)$, since $g \leq 5$. Furthermore, each term in the sum over $\ell_1,\ldots,\ell_\alpha, r_1,\ldots,r_\beta$ is identicall. The number of ways to choose integers $\ell_1,\ldots,\ell_\alpha,r_1,\ldots,r_\beta$, each at least 1, such that they sum to $\ell$, is upper bounded by $\binom{\ell-1}{\alpha+\beta-1} \leq \ell^{\alpha+\beta}$. This allows the first term above to be rewritten as
    \begin{align}
        \CO(t^2) \cdot \left(\prod_{j=1}^g h_{\gamma_{j}} \right)  \sum_{\substack{\alpha,\beta,\theta \in \mathbb{Z}_{\ge 0} \\ 2\alpha + \beta = g}} |y|^{2\alpha+\beta} k^{\alpha+\beta} \alocPower{\alpha+\beta} t^{\alpha+\beta} \sum_{\ell=\alpha+\beta}^q \frac{\ell^{\alpha+\beta}\L(8|y|^2k\aloc t \HlocPower{2}\R)^{\ell-\alpha-\beta}}{(\ell+1)(\ell+2)} 
    \end{align}
    Under the assumptions on $y$, $t$, we have $|8|y|^2k\aloc t \HlocPower{2}|<1$, and thus, for any fixed $\alpha$ and $\beta$, the sum over $\ell$ converges and can be upper bounded by a $\CO(1)$ value (note that $\alpha,\beta \leq 5 = \CO(1)$).  Since $2\alpha+\beta = g$, we can pull out a $y^g$ factor. Furthermore, the sum over $\alpha,\beta$ of the quantity $(k \aloc t)^{\alpha+\beta}$ has a finite number of terms, and each term is upper bounded by 1, owing to the fact that we have assumed $\aloc k t < 1$. Thus the overall contribution can be absorbed into the $\CO(t^2)$.  Finally, we recover
    \begin{equation}
        \CO(t^2) \cdot \left(\prod_{j=1}^g h_{\gamma_{j}} \right) \cdot |y|^g.
    \end{equation}

    Now, we consider the second term in \cref{eq:terminated_diagrams_and_remainder}. We employ similar reasoning as above. Here we have $\ell_1+\ldots + \ell_\alpha + r_1 + \ldots+ r_\beta + v = q$, with $v\geq 0$. The number of ways to choose these integers is $\binom{q}{\alpha+\beta} \leq q^{\alpha+\beta}$. Thus, we can rewrite the second term as 
    \begin{align}
        &\CO(t^2) \cdot \left(\prod_{j=1}^g h_{\gamma_{j}} \right)  \L(\sum_\gamma h_\gamma \R)\frac{1}{\HlocPower{2}} \sum_{\substack{\alpha,\beta \in \mathbb{Z}_{\ge 0} \\ 2\alpha + \beta < g}} |y|^{2\alpha+\beta} k^{\alpha+\beta} \alocPower{\alpha+\beta} t^{\alpha+\beta} \frac{q^{\alpha+\beta}\L(8|y|^2k\aloc t \HlocPower{2}\R)^{q-\alpha-\beta}}{(q+1)(q+2)} \\
        &=\CO(t^2) \cdot \left(\prod_{j=1}^g h_{\gamma_{j}} \right)  \L(\sum_\gamma h_\gamma \R)\frac{1}{\HlocPower{2}} |y|^{g} \frac{q^{\alpha+\beta}\L(8|y|^2k\aloc t \HlocPower{2}\R)^{q-\alpha-\beta}}{(q+1)(q+2)}\,.
    \end{align}
    We observe that as we increase $q$ to $\infty$, the magnitude of the second term approaches 0. As the statement is true for all $q$, this implies that the norm of the Rademacher-averaged TOI is at most the first term, proving the theorem. 
\end{proof}

\section{Proof of \Cref{prop:sum_over_operators}}\label{sec:proof_sum_over_operators}

In this section, we prove~\Cref{prop:sum_over_operators}.
In the previous sections, the subscripts $\gamma$ in symbols like $\Hbf_\gamma$ have been drawn from $\Gamma$. Here, we allow the subscript to also be 0: let $\Hbf_0 := \Ibf$, $h_0=1$, $S(0) = \varnothing$, and $s_0 := 1$ (not random, unlike the other $s_\gamma$), and let $\bar{\Gamma} = \Gamma \cup \{0\}$. For $\gamma \in \bar{\Gamma}$, let
\begin{align}\label{eq:bar_b_agamma_def}
    \bar{b}_{a\gamma} := \begin{cases}
        1 & \text{if } \gamma = 0 \\
        2y & \text{if }\gamma\neq 0 \text{ and } [\Abf^a, \Hbf_\gamma] \neq 0 \\
        0 & \text{if } \gamma\neq 0 \text{ and } [\Abf^a, \Hbf_\gamma] = 0
    \end{cases}.
\end{align}
Since if $[\Abf^a, \Hbf_\gamma] \neq 0$ then $[\Abf^a, \Hbf_\gamma] = 2\Abf^a \Hbf_\gamma$, we can write the definition of $\Kbf^a$ in \cref{eq:Kagamma} as
\begin{equation}
    \Kbf^a = \sum_{\gamma \in \bar{\Gamma}} s_{\gamma} \bar{b}_{a\gamma} \Abf^{a} \Hbf_{\gamma},
\end{equation}
implying that (see \cref{eq:lindbladian_def})
\begin{align}
    \lind^{a \dag}(\Obf) ={}& \Kbf^{a \dag} \Obf \Kbf^a - \frac{1}{2} \Kbf^{a \dag} \Kbf^a \Obf - \frac{1}{2} \Obf \Kbf^{a \dag} \Kbf^a \\
    ={}&\sum_{\gamma_1,\gamma_2 \in \bar{\Gamma}} s_{\gamma_1} s_{\gamma_2} \bar{b}_{a\gamma_1}\bar{b}_{a\gamma_2} \L( \Hbf_{\gamma_1} \Abf^{a \dag}\Obf  \Abf^a \Hbf_{\gamma_2} - \frac{1}{2}\Hbf_{\gamma_1} \Hbf_{\gamma_2}\Obf  -\frac{1}{2} \Obf\Hbf_{\gamma_1} \Hbf_{\gamma_2} \R) \\
    ={}& \sum_{\gamma_1,\gamma_2 \in \bar{\Gamma}}  s_{\gamma_1}s_{\gamma_2}\CR_{\gamma_1\gamma_2}^{a \dag}(\Obf) \label{eq:lind_a_with_calR}
\end{align}
where we used the fact that $\Abf^{a \dag} \Abf^a = \Ibf$, from \cref{cond:commute}, and defined
\begin{align}
    & \qquad\CR_{\gamma_1\gamma_2}^{a \dag}(\Obf) := \nonumber \\
    &
     \frac{1}{2}\bar{b}_{a\gamma_1}\bar{b}_{a \gamma_2} \L(  \Hbf_{\gamma_1} \Abf^{a \dag} \Obf \Abf^a \Hbf_{\gamma_2}  + \Hbf_{\gamma_2} \Abf^{a \dag}\Obf  \Abf^a \Hbf_{\gamma_1} - \frac{1}{2}\Hbf_{\gamma_1} \Hbf_{\gamma_2}\Obf - \frac{1}{2}\Hbf_{\gamma_2} \Hbf_{\gamma_1}\Obf  -\frac{1}{2} \Obf\Hbf_{\gamma_1} \Hbf_{\gamma_2}- \frac{1}{2} \Obf\Hbf_{\gamma_2} \Hbf_{\gamma_1} \R), \label{eq:def_R_lind}
\end{align}
noting the explicit symmetrization of $\CR_{\gamma_1 \gamma_2}^{a\dag}$ with respect to its subscripts. The object $\CR_{\gamma_1\gamma_2}^{a \dag}$ is similar, but not identical, to the previously defined object $\lind_{\gamma_1\gamma_2}^{a \dag}$ (see \cref{eq:lind_gammagamma'^a}), but it always has two subscripts, and allows those subscripts to be drawn from $\bar{\Gamma}$ rather than $\Gamma$, making for simpler bookkeeping in this section.

For any integer $g\geq 0$, define
\begin{equation}\label{eq:def_h(gammas)}
    h(\gamma_0, \gamma_1,\gamma_2,\ldots,\gamma_{g-1}) = \prod_{\alpha \in \Gamma} (|y| h_\alpha)^{|\{j: \gamma_j = \alpha\}| \bmod 2}\,,
\end{equation}
where $h_{\alpha}$ is the strength of the associated term of $\Hbf$, as defined in \cref{cond:random_Hamiltonians}. 
That is, for each $\alpha \in \Gamma$, a factor of $(|y| h_{\alpha})$ is included if the tuple $(\gamma_0,\ldots,\gamma_{g-1})$ has an odd number of appearances of $\alpha$. 

\begin{customthm}{\cref{prop:sum_over_operators}}[restated]
Let $\Obf_U$ be defined as in the expansion of \cref{eq:expansion_T2_rademachers} for each subset $U\subset \Gamma$, with $|U|\leq 5$. Suppose that $y$ satisfies the relation $y^2 \HlocPower{2} \aloc k < 1/8 $. Then we have
\begin{align}
&\sum_{\substack{U \subset \Gamma \\ |U| \leq 5}}\left(\prod_{\alpha \in U} h_\alpha\right) |y|^{|U|} \norm{\Obf_U} \nonumber 
\leq{} \CO(1) \cdot |y|\alocPower{2} k^2 \HgloPower{2}
\end{align}
\end{customthm}
\begin{proof}
Observe that $\Obf_U$, as defined in \cref{eq:expansion_T2_rademachers}, is the operator obtained by expanding $\lind^\dag(\lind^\dag(\Hbf))$ as a polynomial in the Rademachers $s_{\gamma}$, and then (after cancelling any double appearances $s_{\gamma}^2=1$) setting $s_\gamma = 1$ when $\gamma \in U$ and $s_\gamma = 0$ when $\gamma \not\in U$.  To do this explicitly, we may write (see \cref{eq:lind_a_with_calR})
\begin{align}
    \lind^\dag(\Obf) = \sum_{a \in A}\sum_{\gamma_1,\gamma_2 \in \bar{\Gamma}} s_{\gamma_1}s_{\gamma_2}\CR_{\gamma_2\gamma_1}^{a \dag}(\Obf).
\end{align}
Recalling that $\Hbf = \sum_{\gamma_{0} \in \Gamma} s_{\gamma_0} \Hbf_{\gamma_0}$, we have
\begin{align}
    \lind^\dag(\lind^\dag(\Hbf)) = \sum_{a,a' \in A}\sum_{\gamma_0 \in \Gamma} \sum_{\gamma_1,\gamma_2,\gamma_3,\gamma_4 \in \bar{\Gamma}} s_{\gamma_0}s_{\gamma_1}s_{\gamma_2}s_{\gamma_3}s_{\gamma_4}\CR_{\gamma_4 \gamma_3}^{a' \dag}\CR_{\gamma_2 \gamma_1}^{a \dag}(\Hbf_{\gamma_0}) = \sum_{\substack{U \subset \Gamma \\ |U| \leq 5}} \L(\prod_{\alpha \in U} s_\alpha \R) \Obf_U
\end{align}
where $\Obf_U$ can be expressed as a sum over some set of terms of the form $\CR_{\gamma_4 \gamma_3}^{a' \dag}\CR_{\gamma_2 \gamma_1}^{a \dag}(\Hbf_{\gamma_0})$, as follows,
\begin{align}
    \Obf_U = \sum_{(\gamma_0,\gamma_1,a,\gamma_2,\gamma_3,a',\gamma_4)\in \CC_U} \CR_{\gamma_4 \gamma_3}^{a' \dag}\CR_{\gamma_2 \gamma_1}^{a \dag}(\Hbf_{\gamma_0})\,.
\end{align}
Here, the subset $\CC_U$ is defined implicitly to contain all choices of $(\gamma_0,\gamma_1,a,\gamma_2,\gamma_3,a',\gamma_4)$ that lead there to be an odd number of appearances of $s_\alpha$ for each $\alpha \in U$ and an even number of appearances of $s_{\alpha}$ for each $\alpha \not\in U$. The sets $\CC_U$ partition without overlap the set of all choices in $\Gamma \times \bar{\Gamma} \times A \times \bar{\Gamma}\times \bar{\Gamma}\times A \times \bar{\Gamma}$. By the triangle inequality, we may state that
\begin{align}
    \norm{\Obf_U} \leq \sum_{(\gamma_0,\gamma_1,a,\gamma_2,\gamma_3,a',\gamma_4)\in \CC_U} \norm{\CR_{\gamma_4 \gamma_3}^{a' \dag}\CR_{\gamma_2 \gamma_1}^{a \dag}(\Hbf_{\gamma_0})}\,.
\end{align}
We now observe that for a fixed $U$ and a choice $(\gamma_0,\gamma_1,a,\gamma_2,\gamma_3,a',\gamma_4) \in \CC_U$, it holds that $\prod_{\alpha \in U} |y|h_{\alpha} = h(\gamma_0,\gamma_1,\gamma_2,\gamma_3,\gamma_4)$, by construction. Hence, we may express the left-hand side of the quantity in the proposition as follows. 
\begin{align}\label{eq:sum_over_operators_as_sum_over_R}
    \sum_{\substack{U \subset \Gamma \\ |U| \leq 5}} \L(\prod_{\alpha \in U} |y|h_{\alpha}\R) \norm{\Obf_U} &\leq  \sum_{\substack{U \subset \Gamma \\ |U| \leq 5}} \L(\prod_{\alpha \in U} |y|h_{\alpha}\R) \sum_{(\gamma_0,\gamma_1,a,\gamma_2,\gamma_3,a',\gamma_4)\in \CC_U} \norm{\CR_{\gamma_4 \gamma_3}^{a' \dag}\CR_{\gamma_2 \gamma_1}^{a \dag}(\Hbf_{\gamma_0})}
    \\
    &=  \sum_{\substack{U \subset \Gamma \\ |U| \leq 5}}  \sum_{(\gamma_0,\gamma_1,a,\gamma_2,\gamma_3,a',\gamma_4)\in \CC_U}h(\gamma_0,\gamma_1,\gamma_2,\gamma_3,\gamma_4) \norm{\CR_{\gamma_4 \gamma_3}^{a' \dag}\CR_{\gamma_2 \gamma_1}^{a \dag}(\Hbf_{\gamma_0})} \\
    &= \sum_{a,a' \in A}\sum_{\gamma_0 \in \Gamma} \sum_{\gamma_1,\gamma_2,\gamma_3,\gamma_4 \in \bar{\Gamma}} h(\gamma_0, \gamma_1,\gamma_2,\gamma_3,\gamma_4) \norm{\CR_{\gamma_4 \gamma_3}^{a' \dag}\CR_{\gamma_2 \gamma_1}^{a \dag}(\Hbf_{\gamma_0})}. \label{eq:sum_at_beginning}
\end{align}
  
  The sum  above is the starting point for our analysis. Let $\hat{\gamma} = (\gamma_0,\gamma_1,a,\gamma_2,\gamma_3,a',\gamma_4)$ denote a choice for the tuple of parameters in the sum, drawn from the set $\Gamma \times \bar{\Gamma} \times A \times \bar{\Gamma} \times \bar{\Gamma} \times A \times \bar{\Gamma}$. We will identify a subset $\CC$ such that if $\hat{\gamma} \not\in \CC$, the associated term in the sum in \cref{eq:sum_over_operators_as_sum_over_R} is guaranteed to vanish, modulo a permutation of the coordinates. Then, we will evaluate the sum that results. We will utilize the following lemmas. The first lemma establishes some facts about the action of $\mathcal{R}^{a\dag}_{\gamma_2\gamma_1}$ onto an operator $\Obf$; namely, a few conditions under which it vanishes, and a restriction on the growth of the support. 

\begin{lemma}[Locality estimate]\label{lem:R_applied_to_O}
    Suppose an operator $\Obf$ can be decomposed as a linear combination of products of $\Hbf_\gamma$ for various $\gamma$, and let $Z \subset [n]$ be the union of $S(\gamma)$ for all $\Hbf_\gamma$ appearing in this decomposition. (This implies that $\Obf$ is supported on $Z$ and that one can write $\Obf = \Obf_Z \otimes \Ibf_{[n]\setminus Z}$.) Assume that \cref{cond:random_Hamiltonians}, \cref{cond:commute}, and \cref{cond:aloc} hold. Then, the following statements hold:
    \begin{enumerate}[(i)]
    \item $\CR_{\gamma_2\gamma_1}^{a\dag}(\Obf)$ is supported on $Z \cup S(\gamma_1) \cup S(\gamma_2)$.
    \item $
        \norm{\CR_{\gamma_2\gamma_1}^{a\dag}(\Obf)} \leq 2\labs{\bar{b}_{a \gamma_1}\bar{b}_{a \gamma_2}} h_{\gamma_1} h_{\gamma_2} \norm{\Obf}$.
    \item Suppose that $\gamma_1 = 0$ or that $\gamma_2 = 0$. If $\Abf^a$ is not supported on $Z$, then $\CR_{\gamma_2\gamma_1}^{a\dag}(\Obf) = 0$.
    \item Suppose that $\gamma_1,\gamma_2 \in \bar{\Gamma}$ and $a \in A$, and that either $\gamma_1 \neq 0$ or $\gamma_2 \neq 0$ (or both). If $Z \cap (S(\gamma_1) \cup S(\gamma_2)) =  \varnothing$, then $\CR_{\gamma_2\gamma_1}^{a\dag}(\Obf) = 0$. 
    \end{enumerate}
\end{lemma}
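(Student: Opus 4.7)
The plan is to work directly from the six-term expansion of $\CR^{a\dag}_{\gamma_1\gamma_2}$ in~\cref{eq:def_R_lind} and dispatch the four claims by case analysis on whether $\gamma_1$ or $\gamma_2$ equals $0$. The three recurring ingredients I will use are: (a) $\Hbf_\gamma^2 = h_\gamma^2 \Ibf$ from \Cref{cond:random_Hamiltonians}, so $\norm{\Hbf_\gamma} = h_\gamma$ (with $h_0 = 1$); (b) each $\Abf^a$ is a self-adjoint single-site involution by \Cref{cond:commute,cond:aloc}, hence unitary with $\norm{\Abf^a} = 1$; and (c) a nonzero $\bar b_{a\gamma}$ with $\gamma \neq 0$ forces $[\Abf^a,\Hbf_\gamma]\neq 0$, which by \Cref{cond:commute} forces the single site of $\Abf^a$ to lie in $S(\gamma)$.

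I would prove (ii) first, since it is immediate by the triangle inequality and submultiplicativity applied to~\cref{eq:def_R_lind}: the six summands contribute weights $1,1,\tfrac12,\tfrac12,\tfrac12,\tfrac12$ multiplying $h_{\gamma_1}h_{\gamma_2}\norm{\Obf}$, and combining with the overall prefactor $\tfrac12|\bar b_{a\gamma_1}\bar b_{a\gamma_2}|$ yields the claimed bound. Next, for (iii), set $\gamma_1=0$ (WLOG); substituting $\Hbf_0 = \Ibf$ collapses~\cref{eq:def_R_lind} to
\[
\tfrac12\,\bar b_{a0}\bar b_{a\gamma_2}\bigl(\Abf^{a\dag}\Obf\Abf^a\Hbf_{\gamma_2} + \Hbf_{\gamma_2}\Abf^{a\dag}\Obf\Abf^a - \Hbf_{\gamma_2}\Obf - \Obf\Hbf_{\gamma_2}\bigr).
\]
If $\Abf^a$'s site lies outside $Z$, then $\Abf^a$ commutes with $\Obf$, so $\Abf^{a\dag}\Obf\Abf^a = \Obf$ by unitarity, and the four terms cancel pairwise. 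The sub-case $\gamma_1=\gamma_2=0$ reduces to $\bar b_{a0}^2(\Abf^{a\dag}\Obf\Abf^a - \Obf)$ and vanishes by the same argument.

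For (iv), assume both $\gamma_i\neq 0$ (the claim is trivial unless both $\bar b_{a\gamma_i}\neq 0$); ingredient (c) then places $\{a\}\subseteq S(\gamma_1)\cap S(\gamma_2)$, and combined with $Z\cap(S(\gamma_1)\cup S(\gamma_2)) = \varnothing$, the supports of $\Abf^a$, $\Hbf_{\gamma_1}$, and $\Hbf_{\gamma_2}$ are all disjoint from that of $\Obf$, so each commutes with $\Obf$. Sliding $\Obf$ to the right in every term converts them all into monomials of the form $\Obf\Hbf_{\gamma_1}\Hbf_{\gamma_2}$ or $\Obf\Hbf_{\gamma_2}\Hbf_{\gamma_1}$ with the six coefficients $+1,+1,-\tfrac12,-\tfrac12,-\tfrac12,-\tfrac12$, which sum to zero. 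Finally, (i) follows by inspection of~\cref{eq:def_R_lind}: every summand is manifestly supported on $Z\cup\{a\}\cup S(\gamma_1)\cup S(\gamma_2)$, and whenever the prefactor is nonzero with at least one $\gamma_i\neq 0$, ingredient (c) already absorbs $\{a\}$ into $S(\gamma_i)$; the remaining case $\gamma_1=\gamma_2=0$ is covered by (iii), which forces either $\{a\}\subseteq Z$ or the operator itself to vanish.

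The only subtlety I anticipate is in step (iv): one must resist the temptation to commute $\Hbf_{\gamma_1}$ past $\Hbf_{\gamma_2}$ (they may anticommute under \Cref{cond:commute}), and instead keep both orderings as distinct monomials so that the six coefficients cancel cleanly without any unaccounted signs. The remainder is routine bookkeeping.
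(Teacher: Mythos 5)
Your proposal follows essentially the same route as the paper: case analysis on the six-term expansion in \cref{eq:def_R_lind}, using $\norm{\Abf^a}=\norm{\Hbf_\gamma/h_\gamma}=1$ for (ii), the observation that $\Abf^a$ off $Z$ implies $\Abf^{a\dag}\Obf\Abf^a=\Obf$ for (iii), and the "slide $\Obf$ through while keeping the two orderings $\Hbf_{\gamma_1}\Hbf_{\gamma_2}$ and $\Hbf_{\gamma_2}\Hbf_{\gamma_1}$ separate" cancellation for (iv). Your flagged subtlety about not commuting $\Hbf_{\gamma_1}$ past $\Hbf_{\gamma_2}$ is exactly the right one, and items (i)--(iii) are complete.

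The one genuine (though easily repaired) gap is in (iv): you assume both $\gamma_1\neq 0$ and $\gamma_2\neq 0$, justified by the parenthetical that the claim is trivial unless both $\bar b_{a\gamma_i}\neq 0$. But the hypothesis of (iv) only requires that \emph{at least one} $\gamma_i$ be nonzero, and when, say, $\gamma_2=0$ we have $\bar b_{a\gamma_2}=\bar b_{a0}=1\neq 0$ automatically, so that case is not dispatched by your parenthetical and your ingredient (c) placing $\{a\}\subseteq S(\gamma_1)\cap S(\gamma_2)$ does not apply (as $S(0)=\varnothing$). This subcase is actually used downstream (in \cref{lem:CC_set} the criterion on $\gamma_1$ permits $\gamma_1=0$), so it must be covered. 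The fix is one line: if $\gamma_2=0$ and $\bar b_{a\gamma_1}\neq0$, then $\Abf^a$ sits on a site of $S(\gamma_1)$, which is disjoint from $Z$, so $\Abf^{a\dag}\Obf\Abf^a=\Obf$; with $\Hbf_{\gamma_2}=\Ibf$ the six terms collapse to $\bigl(1+1-\tfrac12-\tfrac12-\tfrac12-\tfrac12\bigr)$ times symmetrized products of $\Hbf_{\gamma_1}$ with $\Obf$, which cancel exactly as in your case (iii). With that sentence added, the proof is complete and matches the paper's argument.
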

\begin{proof}
    We begin with a general observation that will be relevant in multiple cases. We have assumed $\Obf$ is given by a linear combination of products of $\Hbf_{\gamma}$ factors whose support is contained in $Z$. If $\Abf^a$ is not supported on $Z$, then \cref{cond:commute} implies that $[\Abf^a, \Hbf_{\gamma}] = 0$ for all $\gamma$ appearing in the linear combination, and hence that $[\Abf^a, \Obf] = 0$.
    
    First we consider the case that $\gamma_1 = \gamma_2 = 0$. In this case, referring to the definition in \cref{eq:def_R_lind}, we have $\Hbf_{\gamma_1} = \Hbf_{\gamma_2} = \Ibf$ and $\bar{b}_{a\gamma_1} = \bar{b}_{a\gamma_2}=1$, implying that $\CR_{\gamma_2\gamma_1}^{a\dag}(\Obf) = \Abf^{a \dag} \Obf \Abf^{a \dag} - \Obf$. If $\Abf^a$ is not supported on $Z$, then $[\Abf^a, \Obf] = 0$ and hence $\CR_{\gamma_2\gamma_1}^{a\dag}(\Obf)$ vanishes, verifying item (iii). If $\Abf^a$ is supported on $Z$, then $\Abf^{a \dag} \Obf \Abf^{a \dag} - \Obf$ is supported on $Z$. Either way, $\CR_{\gamma_2\gamma_1}^{a\dag}(\Obf)$ is supported on $Z = Z \cup S(\gamma_1) \cup S(\gamma_2)$, verifying item (i). Furthermore, since $\norm{\Abf^{a}}=1$ (\cref{cond:commute}), we have $\norm{\Abf^{a \dag} \Obf \Abf^{a \dag} - \Obf} \leq 2 \norm{\Obf} = 2\labs{\bar{b}_{a0}\bar{b}_{a0}} h_0 h_0 \norm{\Obf}$ by the triangle inequality, and the fact that the norm of a product is less than the product of norms, verifying item (ii). Item (iv) is not applicable in this case. 

    In all other cases, at least one of $\gamma_1,\gamma_2$ is not 0. Without loss of generality, assume $\gamma_1 \neq 0$. 
    Inspecting the definition of $\CR_{\gamma_1\gamma_2}^{a \dag}$ in \cref{eq:def_R_lind}, we see that $\CR_{\gamma_1\gamma_2}^{a \dag}(\Obf) = 0$ (and all four items hold), unless $\bar{b}_{a\gamma_1} \neq 0$ and $\bar{b}_{a\gamma_2} \neq 0$. Thus, we assume $\bar{b}_{a\gamma_1} \neq 0$, which implies that $\Abf^a$ anticommutes with $\Hbf_{\gamma_1}$ and thus by \cref{cond:commute}, $\Abf^a$ must be supported on $S(\gamma_1)$. Now, in general, the product of operators is supported on the union of the supports of the factors in the product. Since each term of \cref{eq:def_R_lind} is a product of $\Abf^{a}$, $\Hbf_{\gamma_1}$, $\Hbf_{\gamma_2}$ and $\Obf$ in some order, it follows that $\CR_{\gamma_1\gamma_2}^{a \dag}(\Obf)$ is supported on $Z \cup S(\gamma_1) \cup S(\gamma_2)$, proving item (i). To show item (ii), one can again apply the triangle inequality to each of the 6 terms of \cref{eq:def_R_lind} while noting that the norm of a product of operators is upper bounded by the product of the norms. We recall that $\norm{\Hbf_{\gamma}}\leq h_\gamma$ (\cref{cond:random_Hamiltonians}), and $\norm{\Abf^a} = 1$ (\cref{cond:commute}). Summing these term-wise bounds yields the quoted $2 \labs{\bar{b}_{a\gamma_1}\bar{b}_{a\gamma_2}}h_{\gamma_1}h_{\gamma_2}\norm{\Obf}$. 

    Next, suppose that $\gamma_2 = 0$, and that $\Abf^a$ is not supported on $Z$. This implies that $\Hbf_{\gamma_2} = \Ibf$, and that $\Abf^{a \dag} \Obf \Abf = \Obf$, so that 
            \begin{align}
        \CR_{\gamma_1\gamma_2}^{a \dag}(\Obf) =
     \frac{1}{2}\bar{b}_{a\gamma_1}\bar{b}_{a \gamma_2} \L(  \Hbf_{\gamma_1}  \Obf   + \Obf  \Hbf_{\gamma_1} - \frac{1}{2}\Hbf_{\gamma_1} \Obf - \frac{1}{2}\Hbf_{\gamma_1}\Obf  -\frac{1}{2} \Obf\Hbf_{\gamma_1}- \frac{1}{2} \Obf \Hbf_{\gamma_1} \R)
    \end{align}
    which vanishes, verifying item (iii). 

    Finally, suppose that $Z \cap (S(\gamma_1) \cup S(\gamma_2)) = \varnothing$.  If $\Abf^a$ is not supported on $S(\gamma_1)$, then $\bar{b}_{a\gamma_1}= 0$, and hence $\CR_{\gamma_2\gamma_1}^{a\dag}(\Obf) = 0$. If $\Abf^a$ (a single-site operator, by \cref{cond:aloc}) is supported on $S(\gamma_1)$, then it is not supported on $Z$ (since $Z \cap S(\gamma_1) = \varnothing$), and thus again $\Abf^{a\dag}\Obf \Abf^a = \Obf$. In this case, we have
    \begin{align}
        \CR_{\gamma_1\gamma_2}^{a \dag}(\Obf) =
     \frac{1}{2}\bar{b}_{a\gamma_1}\bar{b}_{a \gamma_2} \L(  \Hbf_{\gamma_1}  \Obf  \Hbf_{\gamma_2}  + \Hbf_{\gamma_2} \Obf  \Hbf_{\gamma_1} - \frac{1}{2}\Hbf_{\gamma_1} \Hbf_{\gamma_2}\Obf - \frac{1}{2}\Hbf_{\gamma_2} \Hbf_{\gamma_1}\Obf  -\frac{1}{2} \Obf\Hbf_{\gamma_1} \Hbf_{\gamma_2}- \frac{1}{2} \Obf\Hbf_{\gamma_2} \Hbf_{\gamma_1} \R)
    \end{align}
    If $\gamma_2 = 0$, then $\Hbf_{\gamma_2} = \Ibf$, and the expression vanishes and item (iv) is true. If $\gamma_2 \neq 0$, then the statement $Z \cap (S(\gamma_1) \cup S(\gamma_2)) = \varnothing$ means that the support of $\Obf$ is nonoverlapping with both $S(\gamma_1)$ and $S(\gamma_2)$. By the fact that $\Obf$ is a linear combination of products of $\Hbf_{\gamma}$ terms and the assumption of commutativity in \cref{cond:commute}, it follows that $\Obf$ commutes with each of $\Hbf_{\gamma_1}$ and $\Hbf_{\gamma_2}$. Employing this fact, we move the $\Obf$ to the left of every term, rewriting
    \begin{align}
        \CR_{\gamma_1\gamma_2}^{a \dag}(\Obf) =
     \frac{1}{2}\bar{b}_{a\gamma_1}\bar{b}_{a \gamma_2} \L(  \Obf\Hbf_{\gamma_1}   \Hbf_{\gamma_2}  + \Obf\Hbf_{\gamma_2}   \Hbf_{\gamma_1} - \frac{1}{2}\Obf \Hbf_{\gamma_1} \Hbf_{\gamma_2}- \frac{1}{2}\Obf\Hbf_{\gamma_2} \Hbf_{\gamma_1}  -\frac{1}{2} \Obf\Hbf_{\gamma_1} \Hbf_{\gamma_2}- \frac{1}{2} \Obf\Hbf_{\gamma_2} \Hbf_{\gamma_1} \R)
    \end{align}
    which vanishes, verifying item (iv). 
\end{proof}
Next, we use these facts to establish formal conditions under which the quantity $\CR_{\gamma_4 \gamma_3}^{a' \dag}\CR_{\gamma_2 \gamma_1}^{a \dag}(\Hbf_{\gamma_0})$ vanishes. Here we recall the definition from \cref{cond:aloc} that $A_Z$ is the subset of $A$ containing values of $a$ for which the jump $\Abf^a$ is supported on the subset $Z \subset [n]$. 
  \begin{lemma}\label{lem:CC_set}
        Assume that \cref{cond:random_Hamiltonians}, \cref{cond:commute}, and \cref{cond:aloc} hold. Define the subset $\CC \subset \Gamma \times \bar{\Gamma} \times A \times \bar{\Gamma} \times \bar{\Gamma} \times A \times \bar{\Gamma}$ as follows. A coordinate $\hat{\gamma} = (\gamma_0,\gamma_1,a,\gamma_2,\gamma_3,a',\gamma_4) \in \CC$ if all the following criteria are satisfied:
        \begin{align}
        \gamma_1 &\in \{\gamma_1 \in \Gamma: S(\gamma_1) \cap S(\gamma_0) \neq \varnothing \} \cup \{0\} \label{eq:gamma_1_criterion}\\
        a &\in A_{S(\gamma_0) \cup S(\gamma_1)} \text{ and } \bar{b}_{a\gamma_1}\neq 0 \label{eq:a_criterion}\\
        \gamma_2 &\in \{\gamma_2 \in \Gamma: b_{a\gamma_2} \neq 0\} \cup \{0\} \label{eq:gamma_2_criterion}\\
        \gamma_3 &\in \{\gamma_3 \in \Gamma: S(\gamma_3) \cap Z \neq \varnothing\} \cup \{0\}  \label{eq:gamma_3_criterion}  \\
        a' &\in A_{Z \cup S(\gamma_3)} \label{eq:a'_criterion}\\
    \gamma_4 &\in \{\gamma_4 \in \Gamma: b_{a'\gamma_4} \neq 0\} \cup \{0\},
    \label{eq:gamma_4_criterion}
\end{align}
where $Z \subset [n]$ denotes the support of the operator $\CR_{\gamma_2 \gamma_1}^{a \dag}(\Hbf_{\gamma_0})$ which is guaranteed to satisfy  $|Z| \leq 3k$. Let $\CC'$, $\CC''$, $\CC'''$ be the sets formed by taking points in $\CC$ and swapping coordinates $(\gamma_3,\gamma_4)$, swapping coordinates $(\gamma_1,\gamma_2)$, and swapping both pairs of coordinates, respectively.   If $\hat{\gamma} \not\in \CC \cup \CC' \cup \CC'' \cup \CC'''$, then $\CR_{\gamma_4 \gamma_3}^{a' \dag}\CR_{\gamma_2 \gamma_1}^{a \dag}(\Hbf_{\gamma_0}) = 0$ must hold. 
  \end{lemma}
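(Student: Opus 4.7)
The plan is to invoke Lemma~\ref{lem:R_applied_to_O} twice in succession, first to the inner factor $\CR_{\gamma_2\gamma_1}^{a\dag}(\Hbf_{\gamma_0})$ and then to the outer factor $\CR_{\gamma_4\gamma_3}^{a'\dag}$ applied to the result. The four sets $\CC, \CC', \CC'', \CC'''$ reflect the fact that, by the explicit symmetrization in \cref{eq:def_R_lind}, both $\CR_{\gamma_2\gamma_1}^{a\dag}$ and $\CR_{\gamma_4\gamma_3}^{a'\dag}$ are invariant under swapping their two subscripts. I will exploit this to assume, without loss of generality, that the ``active'' subscript which must carry the structural conditions is the first one ($\gamma_1$ or $\gamma_3$); a tuple failing the conditions even after swapping will be shown to make the product vanish.

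For the inner factor, $\Hbf_{\gamma_0}$ is supported on $Z=S(\gamma_0)$, so Lemma~\ref{lem:R_applied_to_O} applies with this $Z$. If $\gamma_1,\gamma_2\in\Gamma$ are both nonzero and neither $S(\gamma_1)$ nor $S(\gamma_2)$ intersects $S(\gamma_0)$, item~(iv) forces vanishing; otherwise, after swapping $\gamma_1\leftrightarrow\gamma_2$ if necessary, I may take $\gamma_1$ in the set specified by \cref{eq:gamma_1_criterion}. The prefactor $\bar{b}_{a\gamma_1}\bar{b}_{a\gamma_2}$ in \cref{eq:def_R_lind} forces $\bar{b}_{a\gamma_1}\neq 0$ and $\bar{b}_{a\gamma_2}\neq 0$ (with $\bar{b}_{a0}=1$ handling the zero cases trivially), and when $\gamma_1\neq 0$ this further forces $a\in A_{S(\gamma_1)}\subseteq A_{S(\gamma_0)\cup S(\gamma_1)}$ by \cref{cond:commute,cond:aloc}. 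When $\gamma_1=0$, item~(iii) instead enforces $a\in A_{S(\gamma_0)}$. By item~(i), the support $Z'$ of the inner operator satisfies $Z'\subseteq S(\gamma_0)\cup S(\gamma_1)\cup S(\gamma_2)$, each of these three sets having size at most $k$, so $|Z'|\leq 3k$.

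Next, I apply Lemma~\ref{lem:R_applied_to_O} again to the outer factor, with input operator $\CR_{\gamma_2\gamma_1}^{a\dag}(\Hbf_{\gamma_0})$. Although this operator formally contains $\Abf^a$ factors in some terms of \cref{eq:def_R_lind}, each such factor can be eliminated using $(\Abf^a)^2=\id$ together with the (anti)commutation relations in \cref{cond:commute}, expressing the result as a linear combination of products of Hamiltonian terms $\Hbf_\gamma$. The hypothesis of Lemma~\ref{lem:R_applied_to_O} is therefore met with the support $Z'$ identified above. The same case analysis as before, now using the swap $\gamma_3\leftrightarrow\gamma_4$ to land in $\CC'$ or $\CC'''$ when needed, yields \cref{eq:gamma_3_criterion,eq:a'_criterion,eq:gamma_4_criterion}: $\gamma_3$ is either $0$ or meets $Z'$, $a'\in A_{Z'\cup S(\gamma_3)}$, and $\gamma_4$ is either $0$ or satisfies $\bar{b}_{a'\gamma_4}\neq 0$.

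The main obstacle is bookkeeping rather than substance: for each combination of ``$\gamma_i=0$ vs.\ $\gamma_i\neq 0$'' at $i=1,2,3,4$, one must identify which of the four symmetrized sets $\CC,\CC',\CC'',\CC'''$ accommodates the tuple and verify that no extra condition is missed. In particular, one should confirm that when, say, $\gamma_1=0$ and $\gamma_2\neq 0$, the single-site nature of $\Abf^a$ together with $\bar{b}_{a\gamma_2}\neq 0$ and $a\in A_{S(\gamma_0)}$ automatically forces the overlap $S(\gamma_0)\cap S(\gamma_2)\neq\varnothing$, so no separate overlap condition on $\gamma_2$ needs to be imposed in the definition of $\CC$.
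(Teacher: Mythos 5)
Your proposal is correct and follows essentially the same route as the paper: two successive applications of \cref{lem:R_applied_to_O} (items (i), (iii), (iv)) with $Z=S(\gamma_0)$ for the inner factor and $Z$ the support of $\CR^{a\dag}_{\gamma_2\gamma_1}(\Hbf_{\gamma_0})$ for the outer one, the subscript symmetry of $\CR$ justifying the reduction to $\CC\cup\CC'\cup\CC''\cup\CC'''$, and the observation that the $\Abf^a$ factors cancel via $(\Abf^a)^2=\Ibf$ so the inner output is again a linear combination of products of Hamiltonian terms. Your closing remark about $\gamma_1=0$, $\gamma_2\neq 0$ is also consistent with the paper's definition of $\CC$, which only demands $b_{a\gamma_2}\neq 0$ rather than an explicit overlap condition on $\gamma_2$.
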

  \begin{proof}
      First, note that by inpection of \cref{eq:def_R_lind}, and using the fact that $\Abf^a$ commutes or anticommutes with $\Hbf_{\gamma}$ for all $\gamma$ (\cref{cond:commute}), it follows that $\CR_{\gamma_2 \gamma_1}^{a \dag}(\Hbf_{\gamma_0})$ is a linear combination of products of Hamiltonian terms (in each term of \cref{eq:def_R_lind}, we can commute $\Abf^{a \dag}$ through to cancel as $\Abf^{a \dag} \Abf^a = \Ibf$). Thus, the criteria of \cref{lem:R_applied_to_O} are satisfied both for $\Obf = \Hbf_{\gamma_0}$ and for $\Obf=\CR_{\gamma_2 \gamma_1}^{a \dag}(\Hbf_{\gamma_0})$. The fact that $|Z| \leq 3k$ follows from the item (i) of \cref{lem:R_applied_to_O}, since $\Hbf_{\gamma_0}$, $\Hbf_{\gamma_1}$, and $\Hbf_{\gamma_2}$ are each supported on at most $k$ sites. 
      
      We iterate through the conditions, aiming to show that if the condition fails, the quantity vanishes.  For $\Obf = \Hbf_{\gamma_0}$ the support of $\Obf$ is $S(\gamma_0)$ and by \cref{lem:R_applied_to_O}, item (iv), we can assert the following. For at least one choice of $i \in \{1,2\}$ it must hold either that $\gamma_i = 0$, or that $\gamma_i \neq 0$ and $S(\gamma_0) \cap S(\gamma_i) \neq \varnothing$, in order for the quantity to be nonvanishing. Without loss of generality (since we symmetrize by also including $\CC'$, $\CC''$, and $\CC'''$), assume it holds for $i=1$, which implies the first condition, \cref{eq:gamma_1_criterion}. 

      Next, if $\bar{b}_{a\gamma_1} = 0$, then clearly the term vanishes, by \cref{eq:def_R_lind}. Moreover, if $\Abf^a$ is not supported on $S(\gamma_1)$ and $\gamma_1 \neq 0$, then this implies that $\bar{b}_{a\gamma_1}=0$ and thus by \cref{eq:def_R_lind}, the quantity vanishes. If $\Abf^a$ is not supported on $S(\gamma_1)$ and $\gamma_1 = 0$, then by item (iii) of \cref{lem:R_applied_to_O}, $\Abf^a$ must be supported on $S(\gamma_0)$, else the quantity vanishes. We conclude that for the quantity to be nonvanishing, $\Abf^a$ must either be supported on $S(\gamma_0)$ or on $S(\gamma_1)$. This verifies the second condition, \cref{eq:a_criterion}. 

      The third condition, \cref{eq:gamma_2_criterion}, is true since the quantity vanishes whenever $\bar{b}_{a\gamma_2} = 0$. Thus, for it to be nonvanishing, either $\gamma_2 = 0$ or $\gamma_2 \neq 0$ and $b_{a\gamma_2} \neq 0$. 

      The fourth, fifth, and sixth conditions are true for the same reason the first, second, and third conditions are true, respectively, instead using $\Obf=\CR_{\gamma_2 \gamma_1}^{a \dag}(\Hbf_{\gamma_0})$, and with $a'$ in place of $a$, and $\gamma_3,\gamma_4$ in place of $\gamma_1$, $\gamma_2$.   
  \end{proof}
  \begin{lemma}\label{lem:bound_on_RRH}
  Assume \cref{cond:random_Hamiltonians}, \cref{cond:commute}, and \cref{cond:aloc} hold. Then,
      \begin{align}
          \norm{\CR_{\gamma_4 \gamma_3}^{a' \dag}\CR_{\gamma_2 \gamma_1}^{a \dag}(\Hbf_{\gamma_0})} \leq 4\labs{\bar{b}_{a\gamma_1}\bar{b}_{a\gamma_2}\bar{b}_{a'\gamma_3}\bar{b}_{a'\gamma_4}}h_{\gamma_0}h_{\gamma_1}h_{\gamma_2}h_{\gamma_3}h_{\gamma_4}\,.
      \end{align}
  \end{lemma}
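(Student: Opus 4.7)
The plan is to obtain the bound by applying item (ii) of \Cref{lem:R_applied_to_O} twice in succession, since that item already gives exactly a factor $2\labs{\bar{b}_{a\gamma_1}\bar{b}_{a\gamma_2}}h_{\gamma_1}h_{\gamma_2}$ per application of $\CR$, and two applications yield the quoted factor of $4$ together with all four $\bar{b}$'s and all five $h$'s (after accounting for $h_{\gamma_0}$ coming from the seed operator $\Hbf_{\gamma_0}$).

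First, I would apply item (ii) of \Cref{lem:R_applied_to_O} to $\Obf = \Hbf_{\gamma_0}$. The hypothesis of item (ii) is satisfied because $\Hbf_{\gamma_0}$ is, trivially, a product of one Hamiltonian term with support $S(\gamma_0)$. Combining the bound with $\norm{\Hbf_{\gamma_0}} = h_{\gamma_0}$, which follows from $\Hbf_{\gamma_0}^2 = h_{\gamma_0}^2 \Ibf$ in \cref{eq:H_square}, yields
\begin{equation}
\norm{\CR_{\gamma_2\gamma_1}^{a\dag}(\Hbf_{\gamma_0})} \leq 2\labs{\bar{b}_{a\gamma_1}\bar{b}_{a\gamma_2}}\,h_{\gamma_0}h_{\gamma_1}h_{\gamma_2}.
\end{equation}

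Next, I would verify that the hypothesis of \Cref{lem:R_applied_to_O} applies to the intermediate operator $\CR_{\gamma_2\gamma_1}^{a\dag}(\Hbf_{\gamma_0})$ itself, so that item (ii) can be used again. Inspecting the definition of $\CR$ in \cref{eq:def_R_lind}, each of the six terms has the form $\Hbf_{\gamma_i}\Abf^{a\dag}\Hbf_{\gamma_0}\Abf^a\Hbf_{\gamma_j}$ or $\Hbf_{\gamma_i}\Hbf_{\gamma_j}\Hbf_{\gamma_0}$ (or cyclic variant). By \cref{cond:commute}, $\Abf^a$ commutes or anticommutes with $\Hbf_{\gamma_0}$, and $(\Abf^a)^\dag \Abf^a = \Ibf$, so $\Abf^{a\dag}\Hbf_{\gamma_0}\Abf^a = \pm \Hbf_{\gamma_0}$. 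Consequently $\CR_{\gamma_2\gamma_1}^{a\dag}(\Hbf_{\gamma_0})$ is a linear combination of products of Hamiltonian terms $\Hbf_{\gamma_i}$, as required to invoke \Cref{lem:R_applied_to_O} again.

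Applying item (ii) a second time with $\Obf$ replaced by $\CR_{\gamma_2\gamma_1}^{a\dag}(\Hbf_{\gamma_0})$ then gives
\begin{align}
\norm{\CR_{\gamma_4\gamma_3}^{a'\dag}\CR_{\gamma_2\gamma_1}^{a\dag}(\Hbf_{\gamma_0})}
&\leq 2\labs{\bar{b}_{a'\gamma_3}\bar{b}_{a'\gamma_4}}\,h_{\gamma_3}h_{\gamma_4}\,\norm{\CR_{\gamma_2\gamma_1}^{a\dag}(\Hbf_{\gamma_0})} \\
&\leq 4\labs{\bar{b}_{a\gamma_1}\bar{b}_{a\gamma_2}\bar{b}_{a'\gamma_3}\bar{b}_{a'\gamma_4}}\,h_{\gamma_0}h_{\gamma_1}h_{\gamma_2}h_{\gamma_3}h_{\gamma_4},
\end{align}
which is the desired bound. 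There is no real obstacle here; the only non-automatic step is checking that the intermediate operator satisfies the Hamiltonian-decomposition hypothesis of \Cref{lem:R_applied_to_O}, and the verification is one line via $\Abf^{a\dag}\Hbf_{\gamma_0}\Abf^a = \pm\Hbf_{\gamma_0}$.
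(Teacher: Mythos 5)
Your proof is correct and matches the paper's: the paper's proof of this lemma is exactly two applications of item (ii) of \cref{lem:R_applied_to_O} together with $\norm{\Hbf_{\gamma_0}} = h_{\gamma_0}$, and the verification you give that the intermediate operator $\CR_{\gamma_2\gamma_1}^{a\dag}(\Hbf_{\gamma_0})$ is a linear combination of products of Hamiltonian terms is the same one the paper supplies (in the proof of \cref{lem:CC_set}) via commuting $\Abf^{a\dag}$ through and cancelling $\Abf^{a\dag}\Abf^a = \Ibf$.
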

  \begin{proof}
      This follows from \cref{lem:R_applied_to_O}, item (ii), and the fact that $\norm{\Hbf_{\gamma}} = h_{\gamma}$ (\cref{cond:random_Hamiltonians}). 
  \end{proof}
 
In what follows, the quantity $Z \subset [n]$ denotes the support of $\CR_{\gamma_2 \gamma_1}^{a \dag}(\Hbf_{\gamma_0})$. By \cref{lem:R_applied_to_O}, item (i) and the fact that $|S(\gamma)| = k$ for all $\gamma$ (\cref{cond:aloc}), we have $|Z|\leq 3k$. 

We build toward the full sum in \cref{eq:sum_at_beginning} by working right to left within the tuple $(\gamma_0,\gamma_1,a,\gamma_2,\gamma_3,a',\gamma_4)$. That is, we consider fixed $\gamma_0,\gamma_1,a,\gamma_2,\gamma_3,a'$ and upper bound the sum over $\gamma_4$. Then, we unfix $a'$ and upper bound the sum over $a'$ and $\gamma_4$. Then we unfix $\gamma_3$ and upper bound the sum over $\gamma_3$, $a'$, and $\gamma_4$, etc. 

\paragraph{Summing over $\gamma_4$.} 
First, for fixed $\gamma_0,\gamma_1,a,\gamma_2,\gamma_3,a'$, from \cref{lem:bound_on_RRH} we have that
\begin{align}
    &\sum_{\gamma_4: \hat{\gamma} \in \CC}  h(\gamma_0, \gamma_1,\gamma_2,\gamma_3,\gamma_4) \norm{\CR_{\gamma_4 \gamma_3}^{a' \dag}\CR_{\gamma_2 \gamma_1}^{a \dag}(\Hbf_{\gamma_0})} \\
    \leq{}& 4\labs{\bar{b}_{a\gamma_1}\bar{b}_{a\gamma_2}\bar{b}_{a'\gamma_3}}h_{\gamma_0}h_{\gamma_1}h_{\gamma_2}h_{\gamma_3}\sum_{\gamma_4: \hat{\gamma} \in \CC} |\bar{b}_{a'\gamma_4}| h_{\gamma_4}h(\gamma_0,\gamma_1,\gamma_2,\gamma_3,\gamma_4). 
\end{align}
Recall that the value of $h(\gamma_0,\gamma_1,\gamma_2,\gamma_3,\gamma_4)$, defined in \cref{eq:def_h(gammas)}, depends on whether the settings of $\gamma_i$ collide or not. Specifically, we can update $h(\gamma_0,\gamma_1,\gamma_2,\gamma_3,\gamma_4)$ based on the setting of $\gamma_4$, and whether it collides with one of the other $\gamma_i$; for example, if $\gamma_4 = \gamma_0$, then $h(\gamma_0,\gamma_1,\gamma_2,\gamma_3,\gamma_4) = h(\gamma_1,\gamma_2,\gamma_3)$, and if there is no collision, then $h(\gamma_0,\gamma_1,\gamma_2,\gamma_3,\gamma_4) = |y|h_{\gamma_4}h(\gamma_0,\gamma_1,\gamma_2,\gamma_3)$.
Following these assertions, we may observe that 
\begin{align}\label{eq:gamma4_h_equalities}
    |\bar{b}_{a'\gamma_4}|h_{\gamma_4} h(\gamma_0,\gamma_1,\gamma_2,\gamma_3,\gamma_4) = 
    \begin{cases}
          h(\gamma_0, \gamma_1,\gamma_2,\gamma_3) & \text{if } \gamma_4 = 0 \\
          h(\gamma_1,\gamma_2,\gamma_3)|\bar{b}_{a'\gamma_0}|h_{\gamma_0} & \text{if } \gamma_4 = \gamma_0 \neq 0 \\
         h(\gamma_0,\gamma_2,\gamma_3)|\bar{b}_{a'\gamma_1}|h_{\gamma_1}  & \text{if } \gamma_4 = \gamma_1 \neq 0 \\
         h(\gamma_0,\gamma_1,\gamma_3)|\bar{b}_{a'\gamma_2}|h_{\gamma_2}  & \text{if } \gamma_4 = \gamma_2 \neq 0 \\
         h(\gamma_0,\gamma_1,\gamma_2)|\bar{b}_{a'\gamma_3}|h_{\gamma_3}  & \text{if } \gamma_4 = \gamma_3 \neq 0 \\
         h(\gamma_0,\gamma_1,\gamma_2, \gamma_3)|\bar{b}_{a'\gamma_4}||y|h_{\gamma_4}^2  & \text{otherwise}
    \end{cases}
\end{align}

Note that in the cases where $\gamma_4 \neq 0$, we can assert that $|\bar{b}_{a'\gamma_i}| \leq 2|y|$ (since the first case in \cref{eq:bar_b_agamma_def} is ruled out). In the final case, in order for $\bar{b}_{a'\gamma_4} \neq 0$, it must hold that $S(\gamma_4)$ intersects with the support of $\Abf^{a'}$ (\cref{cond:commute}), which is a single site (\cref{cond:aloc}), denoted by $\{j'\}$. We recall from the definition of $\Hloc$ in \cref{eq:Hloc_Hglo_def}, that 
\begin{equation}
    \sum_{\gamma_4: j' \in S(\gamma_4)}h_{\gamma_4}^2 \leq \HlocPower{2}.
\end{equation}
Thus, putting it all together, we have the following, where we have labeled which terms arise from each of the cases above.
\begin{align}
    &\sum_{\gamma_4: \hat{\gamma} \in \CC}  h(\gamma_0, \gamma_1,\gamma_2,\gamma_3,\gamma_4) \norm{\CR_{\gamma_4 \gamma_3}^{a' \dag}\CR_{\gamma_2 \gamma_1}^{a \dag}(\Hbf_{\gamma_0})} \\
    \leq{}& \CO(1) \cdot \labs{\bar{b}_{a\gamma_1}\bar{b}_{a\gamma_2}\bar{b}_{a'\gamma_3}}h_{\gamma_0}h_{\gamma_1}h_{\gamma_2}h_{\gamma_3}\Bigg(h(\gamma_0, \gamma_1,\gamma_2,\gamma_3)(\underbrace{1}_{{\sss \gamma_4=0}}+\underbrace{|y|^2 \HlocPower{2}}_{\text{\tiny otherwise}}) \nonumber \\
    &\quad
    + \underbrace{h(\gamma_1,\gamma_2,\gamma_3) |y|h_{\gamma_0}\bar{\delta}_{0\gamma_0}}_{\sss \gamma_4=\gamma_0}
    + \underbrace{h(\gamma_0,\gamma_2,\gamma_3) |y|h_{\gamma_1}\bar{\delta}_{0\gamma_1}}_{\sss \gamma_4=\gamma_1} 
    + \underbrace{h(\gamma_0,\gamma_1,\gamma_3) |y|h_{\gamma_2}\bar{\delta}_{0\gamma_2}}_{\sss \gamma_4=\gamma_2} 
    + \underbrace{h(\gamma_0,\gamma_1,\gamma_2) |y|h_{\gamma_3}\bar{\delta}_{0\gamma_3}}_{\sss \gamma_4=\gamma_3} \Bigg)\label{eq:sum_after_gamma4}
\end{align}
where $\bar{\delta}_{\gamma0} := 1-\delta_{\gamma0}$ is 1 if $\gamma \neq 0$ and 0 otherwise. We have absorbed the factor of 2 on some of the terms into the $\CO(1)$ in front, to keep the expressions as simple as possible (recall that all terms on the right-hand side are non-negative).

\paragraph{Summing over $a'$.} Next, for fixed $\gamma_0,\gamma_1,a, \gamma_2,\gamma_3$, we consider summing the expression in \cref{eq:sum_after_gamma4} over the choice of $a'$, consistent with $\hat{\gamma} \in \CC$. From \cref{lem:CC_set}, \cref{eq:a'_criterion}, it must hold that $a' \in A_{Z \cup S(\gamma_3)}$, which is true for at most $4\aloc k$ values of $a'$ (\cref{cond:aloc}). Thus, we pick up a factor of $\aloc k$ in the expression, and we can write
\begin{align}
    &\sum_{a',\gamma_4: \hat{\gamma} \in \CC}  h(\gamma_0, \gamma_1,\gamma_2,\gamma_3,\gamma_4) \norm{\CR_{\gamma_4 \gamma_3}^{a' \dag}\CR_{\gamma_2 \gamma_1}^{a \dag}(\Hbf_{\gamma_0})} \\
    \leq{}& \CO(1) \cdot (\aloc k ) \labs{\bar{b}_{a\gamma_1}\bar{b}_{a\gamma_2}}(2|y|)^{\bar{\delta}_{0\gamma_3}}h_{\gamma_0}h_{\gamma_1}h_{\gamma_2}h_{\gamma_3}\Bigg(h(\gamma_0, \gamma_1,\gamma_2,\gamma_3)(1+ |y|^2 \HlocPower{2})  \nonumber \\  
    &\quad + h(\gamma_1,\gamma_2,\gamma_3) |y|h_{\gamma_0} \bar{\delta}_{0\gamma_0} 
    + h(\gamma_0,\gamma_2,\gamma_3) |y|h_{\gamma_1}\bar{\delta}_{0\gamma_1}
    + h(\gamma_0,\gamma_1,\gamma_3) |y|h_{\gamma_2}\bar{\delta}_{0\gamma_2}
    + h(\gamma_0,\gamma_1,\gamma_2) |y|h_{\gamma_3}\bar{\delta}_{0\gamma_3} \Bigg)\label{eq:sum_after_a'}
\end{align}
where we have also used the upper bound $|\bar{b}_{a'\gamma_3}| \leq (2|y|)^{\bar{\delta}_{0\gamma_3}}$.

\paragraph{Summing over $\gamma_3$. } Next, for fixed $\gamma_0, \gamma_1, a, \gamma_2$, we consider summing the expression in \cref{eq:sum_after_a'} over $\gamma_3$, consistent with $\hat{\gamma} \in \CC$. Following a similar logic as previously, we observe that
\begin{align}\label{eq:gamma3_h_equalities_1}
    &(2|y|)^{\bar{\delta}_{0\gamma_3}}  h(\gamma_0,\gamma_1,\gamma_2,\gamma_3)h_{\gamma_3}  =
    \begin{cases}
        h(\gamma_0,\gamma_1,\gamma_2) & \text{if } \gamma_3 = 0 \\
        h(\gamma_1,\gamma_2)2|y| h_{\gamma_0} & \text{if } \gamma_3 = \gamma_0 \neq 0 \\
        h(\gamma_0,\gamma_2)2|y| h_{\gamma_1} & \text{if } \gamma_3 = \gamma_1 \neq 0 \\
        h(\gamma_0,\gamma_1)2|y| h_{\gamma_2} & \text{if } \gamma_3 = \gamma_1 \neq 0 \\
        h(\gamma_0,\gamma_1,\gamma_2)2|y|^2 h_{\gamma_3}^2 & \text{otherwise}
    \end{cases}
\end{align}
and that for $i \neq j \in \{0,1,2\}$
\begin{align}\label{eq:gamma3_h_equalities_2}
    (2|y|)^{\bar{\delta}_{0\gamma_3}} |y| h_{\gamma_3} h(\gamma_i,\gamma_j,\gamma_3) = 
    \begin{cases}
        h(\gamma_i,\gamma_j) |y| & \text{if } \gamma_3 = 0 \\
        h(\gamma_j)2|y|^2 h_{\gamma_i} & \text{if } \gamma_3 = \gamma_i \neq 0 \\
        h(\gamma_i)2|y|^2 h_{\gamma_j} & \text{if } \gamma_3 = \gamma_j \neq 0 \\
        h(\gamma_i,\gamma_j)2|y|^3 h_{\gamma_3}^2 & \text{otherwise}
    \end{cases}
\end{align}
Note that $h(\alpha) = 1$ if $\alpha = 0$, and $h(\alpha) = |y|h_{\alpha}$  if $\alpha \neq 0$.  
In the ``otherwise'' cases above, we also note from \cref{lem:CC_set}, \cref{eq:gamma_3_criterion}, that $S(\gamma_3)$ must have overlap with $Z$, a set of size at most $3k$, in order for $\hat{\gamma} \in \CC$ to hold. Thus, 
\begin{align}
    \sum_{\gamma_3: S(\gamma_3) \cap Z \neq \varnothing} h_{\gamma_3}^2 \leq \sum_{j \in Z} \sum_{\gamma_3: j \in S(\gamma_3)} h_{\gamma_3}^2 \leq  3k \HlocPower{2},
\end{align}
where we have again invoked the definition of $\Hloc$ in \cref{eq:Hloc_Hglo_def}. 
This inequality is also useful for summing over the $(2|y|)^{\bar{\delta}_{0\gamma_3}}|y|h_{\gamma_3}^2h(\gamma_0,\gamma_1,\gamma_2)\bar{\delta}_{0\gamma_3}$ term (denoted by the label ``$\gamma_4=\gamma_3$'' in the underbrace below). 
Putting these together, we arrive at the following, where we label each term to help understand which values of $\gamma_3$ and $\gamma_4$ it corresponds to:
\begin{align}
    &\sum_{\gamma_3, a',\gamma_4: \hat{\gamma} \in \CC}  h(\gamma_0, \gamma_1,\gamma_2,\gamma_3,\gamma_4) \norm{\CR_{\gamma_4 \gamma_3}^{a' \dag}\CR_{\gamma_2 \gamma_1}^{a \dag}(\Hbf_{\gamma_0})} \nonumber\\
    \leq{}& \CO(1) \cdot (\aloc k) \labs{\bar{b}_{a\gamma_1}\bar{b}_{a\gamma_2}}h_{\gamma_0}h_{\gamma_1}h_{\gamma_2}
    \Bigg(h(\gamma_0,\gamma_1,\gamma_2)\bigg(\big(\underbrace{1}_{{\sss \gamma_4=0}}+\underbrace{|y|^2 \HlocPower{2}}_{\text{\tiny \eqref{eq:gamma4_h_equalities}, otherwise}}\big)\big(\underbrace{1}_{ \substack{\sss \eqref{eq:gamma3_h_equalities_1} \\ \sss \gamma_3 = 0}}+  \underbrace{k|y|^2 \HlocPower{2}}_{\text{\tiny  \eqref{eq:gamma3_h_equalities_1}, otherwise}}\big) +  \underbrace{k|y|^2\HlocPower{2}}_{\sss \gamma_4 = \gamma_3} \bigg)  \nonumber\\
    &\quad +\bigg(\underbrace{1}_{{\sss \gamma_4=0}}+\underbrace{|y|^2 \HlocPower{2}}_{\text{\tiny \eqref{eq:gamma4_h_equalities}, otherwise}}\bigg) \bigg(\underbrace{h(\gamma_1,\gamma_2)|y|h_{\gamma_0}\bar{\delta}_{0\gamma_0}}_{\sss \eqref{eq:gamma3_h_equalities_1}, \gamma_3=\gamma_0}+ \underbrace{h(\gamma_0,\gamma_2)|y|h_{\gamma_1}\bar{\delta}_{0\gamma_1}}_{\sss \eqref{eq:gamma3_h_equalities_1},  \gamma_3=\gamma_1} + \underbrace{h(\gamma_0,\gamma_1)|y|h_{\gamma_2} \bar{\delta}_{0\gamma_2}}_{\sss \eqref{eq:gamma3_h_equalities_1}, \gamma_3=\gamma_2}\bigg) \nonumber
    \\
    &\quad +\bigg(\underbrace{1}_{\substack{\sss \eqref{eq:gamma3_h_equalities_2} \\ \sss \gamma_3 = 0 }}+ \underbrace{k|y|^2\HlocPower{2}}_{ \text{\tiny \eqref{eq:gamma3_h_equalities_2}, otherwise}}\bigg) \bigg( \underbrace{h(\gamma_1,\gamma_2)|y|h_{\gamma_0}\bar{\delta}_{0\gamma_0}}_{\sss \gamma_4 = \gamma_0}+ \underbrace{h(\gamma_0,\gamma_2)|y|h_{\gamma_1}\bar{\delta}_{0\gamma_1}}_{\sss \gamma_4 = \gamma_1} + \underbrace{h(\gamma_0,\gamma_1)|y|h_{\gamma_2} \bar{\delta}_{0\gamma_2}}_{\sss \gamma_4 = \gamma_2}\bigg) \nonumber
    \\
    &\quad +  \underbrace{h(\gamma_2)|y|^2 h_{\gamma_0}h_{\gamma_1} \bar{\delta}_{0\gamma_0}\bar{\delta}_{0\gamma_1}}_{\sss \eqref{eq:gamma3_h_equalities_2} \substack{ \sss \gamma_4 = \gamma_0 \text{ and } \gamma_3 = \gamma_1 \\ \sss \gamma_4 = \gamma_1 \text{ and } \gamma_3 = \gamma_0}}
    + \underbrace{h(\gamma_1) |y|^2 h_{\gamma_0}h_{\gamma_2} \bar{\delta}_{0\gamma_0}\bar{\delta}_{0\gamma_2}}_{\sss \eqref{eq:gamma3_h_equalities_2} \substack{ \sss \gamma_4 = \gamma_0 \text{ and } \gamma_3 = \gamma_2 \\ \sss \gamma_4 = \gamma_2 \text{ and } \gamma_3 = \gamma_0}} 
    + \underbrace{h(\gamma_0)|y|^2 h_{\gamma_1}h_{\gamma_2}\bar{\delta}_{0\gamma_1}\bar{\delta}_{0\gamma_2}}_{\sss \eqref{eq:gamma3_h_equalities_2} \substack{ \sss \gamma_4 = \gamma_1 \text{ and } \gamma_3 = \gamma_2 \\ \sss \gamma_4 = \gamma_2 \text{ and } \gamma_3 = \gamma_1}}\Bigg).\label{eq:sum_after_gamma3}
    \end{align}
    Grouping like terms, the right-hand side can be rewritten as
    \begin{align}
    & \CO(1) \cdot (\aloc k) \labs{\bar{b}_{a\gamma_1}\bar{b}_{a\gamma_2}}h_{\gamma_0}h_{\gamma_1}h_{\gamma_2}
    \Bigg(h(\gamma_0,\gamma_1,\gamma_2)\bigg(1+  k|y|^2 \HlocPower{2} + k |y|^4 \HlocPower{4}\bigg)  \nonumber\\
    &\quad +\bigg(h(\gamma_1,\gamma_2)|y|h_{\gamma_0}\bar{\delta}_{0\gamma_0}+ h(\gamma_0,\gamma_2)|y|h_{\gamma_1}\bar{\delta}_{0\gamma_1} + h(\gamma_0,\gamma_1)|y|h_{\gamma_2} \bar{\delta}_{0\gamma_2}\bigg)\bigg(1+k|y|^2\HlocPower{2}\bigg)  \nonumber
    \\
    &\quad +  h(\gamma_2)|y|^2 h_{\gamma_0}h_{\gamma_1} \bar{\delta}_{0\gamma_0}\bar{\delta}_{0\gamma_1}
    + h(\gamma_1) |y|^2 h_{\gamma_0}h_{\gamma_2} \bar{\delta}_{0\gamma_0}\bar{\delta}_{0\gamma_2} 
    + h(\gamma_0)|y|^2 h_{\gamma_1}h_{\gamma_2}\bar{\delta}_{0\gamma_1}\bar{\delta}_{0\gamma_2}\Bigg).\label{eq:sum_after_gamma3_grouped}
\end{align}

\paragraph{Summing over $\gamma_2$. } Next, for fixed $\gamma_0, \gamma_1, a$, we consider summing the expression in \cref{eq:sum_after_gamma3_grouped} over $\gamma_2$, consistent with $\hat{\gamma} \in \CC$. The logic here is similar to that of summing over $\gamma_4$, above. First, we observe that
\begin{align}\label{eq:gamma2_h_equalities1}
    |\bar{b}_{a\gamma_2}|h_{\gamma_2} h(\gamma_0,\gamma_1,\gamma_2) = 
    \begin{cases}
          h(\gamma_0, \gamma_1) & \text{if } \gamma_2 = 0 \\
          h(\gamma_1)|\bar{b}_{a\gamma_0}|h_{\gamma_0} & \text{if } \gamma_2 = \gamma_0 \neq 0 \\
         h(\gamma_0)|\bar{b}_{a\gamma_1}|h_{\gamma_1}  & \text{if } \gamma_2 = \gamma_1 \neq 0 \\
         h(\gamma_0,\gamma_1)|\bar{b}_{a\gamma_2}y|h_{\gamma_2}^2  & \text{otherwise}
    \end{cases}
\end{align}
that, for $i \in \{1,2\}$,
\begin{align}\label{eq:gamma2_h_equalities2}
    |\bar{b}_{a\gamma_2}y|h_{\gamma_2} h(\gamma_i,\gamma_2) = 
    \begin{cases}
          h(\gamma_i)|y| & \text{if } \gamma_2 = 0 \\
          |\bar{b}_{a\gamma_i}y|h_{\gamma_i} & \text{if } \gamma_2 = \gamma_i \neq 0 \\
         h(\gamma_i)|\bar{b}_{a\gamma_2}y^2|h_{\gamma_2}^2  & \text{otherwise}
    \end{cases}
\end{align}
and that
\begin{align}\label{eq:gamma2_h_equalities3}
    \bar{b}_{a\gamma_2} |y|^2 h_{\gamma_2}h(\gamma_2)    =
    \begin{cases}
          |y|^2 & \text{if } \gamma_2 = 0 \\
         |\bar{b}_{a\gamma_2}y^3| h_{\gamma_2}^2  & \text{otherwise.}
    \end{cases}
\end{align}
Note that in the cases where $\gamma_2 =\gamma_i \neq 0$, we can assert that $|\bar{b}_{a\gamma_i}| \leq 2|y|$.  
In the ``otherwise'' cases, in order for $\bar{b}_{a\gamma_2} \neq 0$, then $S(\gamma_2)$ must intersect with the support of $\Abf^{a}$ (\cref{cond:commute}), which is a single site (\cref{cond:aloc}), denoted by $\{j\}$. We recall that 
\begin{equation}
    \sum_{\gamma_2: j \in S(\gamma_2)}h_{\gamma_2}^2 \leq \HlocPower{2}.
\end{equation}
This inequality is also useful for summing the $|\bar{b}_{a\gamma_2}y^2|h(\gamma_i)h_{\gamma_j}h_{\gamma_2}^2\bar{\delta}_{0\gamma_j}\bar{\delta}_{0\gamma_2}$ terms (labeled by $*$ below) and the $|\bar{b}_{a\gamma_2}| h(\gamma_0,\gamma_1)|y| h_{\gamma_2}^2\bar{\delta}_{0\gamma_2}$ term (labeled by $**$ below) from \cref{eq:sum_after_gamma3_grouped}.
Thus, we have
\begin{align}
    &\sum_{\gamma_2,\gamma_3, a',\gamma_4: \hat{\gamma} \in \CC}  h(\gamma_0, \gamma_1,\gamma_2,\gamma_3,\gamma_4) \norm{\CR_{\gamma_4 \gamma_3}^{a' \dag}\CR_{\gamma_2 \gamma_1}^{a \dag}(\Hbf_{\gamma_0})} \nonumber\\
    \leq{}& \CO(1) \cdot (\aloc k) |\bar{b}_{a\gamma_1}|h_{\gamma_0}h_{\gamma_1}
    \Bigg(h(\gamma_0,\gamma_1)\bigg(1+  k|y|^2 \HlocPower{2} + k|y|^4 \HlocPower{4}\bigg)\bigg(\underbrace{1}_{\substack{\sss \eqref{eq:gamma2_h_equalities1} \\ \sss \gamma_2 = 0}}+\underbrace{|y|^2 \HlocPower{2}}_{\text{\tiny \eqref{eq:gamma2_h_equalities1}, otherwise}}\bigg)  \nonumber\\
    &\quad
    +  \underbrace{h_{\gamma_0} h_{\gamma_1}|y|^2 \bar{\delta}_{0\gamma_0}\bar{\delta}_{0\gamma_1}}_{\sss \eqref{eq:gamma2_h_equalities2}, \gamma_2 = \gamma_i}\bigg(1 + k |y|^2 \HlocPower{2} \bigg) + |y|\bigg(\underbrace{h(\gamma_1)h_{\gamma_0}\bar{\delta}_{0\gamma_0}}_{\sss \eqref{eq:gamma2_h_equalities1}, \gamma_2=\gamma_0}+\underbrace{h(\gamma_0)h_{\gamma_1}\bar{\delta}_{0\gamma_1}}_{\sss \eqref{eq:gamma2_h_equalities1}, \gamma_2=\gamma_1}\bigg)\bigg(1+k|y|^2 \HlocPower{2} + k|y|^4\HlocPower{4} \bigg) \nonumber \\
    & \quad +  \bigg(\underbrace{h(\gamma_0)|y| h_{\gamma_1}\bar{\delta}_{0\gamma_1} + h(\gamma_1)|y|h_{\gamma_0} \bar{\delta}_{0\gamma_0}}_{\sss \eqref{eq:gamma2_h_equalities2}, \gamma_2 = 0} + \underbrace{h(\gamma_0)|y|^3 h_{\gamma_1}\bar{\delta}_{0\gamma_1} \HlocPower{2} + h(\gamma_1)|y|^3 h_{\gamma_0} \bar{\delta}_{0\gamma_0} \HlocPower{2}}_{\text{\tiny \eqref{eq:gamma2_h_equalities2}, otherwise}} \bigg)\bigg(1 + k |y|^2 \HlocPower{2} \bigg)  \nonumber \\
    & \quad + \underbrace{|y|^2 h_{\gamma_0} h_{\gamma_1} \bar{\delta}_{0\gamma_0} \bar{\delta}_{0\gamma_1}}_{\sss \eqref{eq:gamma2_h_equalities3}, \gamma_2 = 0} +\underbrace{|y|^4 h_{\gamma_0} h_{\gamma_1} \bar{\delta}_{0\gamma_0} \bar{\delta}_{0\gamma_1} \HlocPower{2}}_{\text{\tiny \eqref{eq:gamma2_h_equalities3}, otherwise}} + \underbrace{h(\gamma_1)|y|^3 h_{\gamma_0}\bar{\delta}_{0\gamma_0} \HlocPower{2} + h(\gamma_0)|y|^3 h_{\gamma_1}\bar{\delta}_{0\gamma_1} \HlocPower{2}}_{*} \nonumber \\
    & \quad +\underbrace{h(\gamma_0,\gamma_1) |y|^2 \HlocPower{2} \bigg(1 + k |y|^2 \HlocPower{2}}_{**}\bigg)
    \Bigg).
    \end{align}
    Grouping like terms, the right-hand side can be rewritten as
    \begin{align}
    &\CO(1) \cdot (\aloc k) |\bar{b}_{a\gamma_1}|h_{\gamma_0}h_{\gamma_1}
    \Bigg(h(\gamma_0,\gamma_1)\bigg(1+  k|y|^2 \HlocPower{2} + k|y|^4 \HlocPower{4} + k|y|^6 \HlocPower{6}\bigg)  \nonumber\\
    &\quad
    +  h_{\gamma_0} h_{\gamma_1}\bar{\delta}_{0\gamma_0}\bar{\delta}_{0\gamma_1}\bigg(|y|^2 + k |y|^4 \HlocPower{2} \bigg) + \bigg(h(\gamma_1)h_{\gamma_0}\bar{\delta}_{0\gamma_0}+h(\gamma_0)h_{\gamma_1}\bar{\delta}_{0\gamma_1}\bigg)\bigg(|y|+k|y|^3 \HlocPower{2} + k|y|^5\HlocPower{4} \bigg) \Bigg).\label{eq:sum_after_gamma2}
\end{align}

\paragraph{Summing over $a$.} Next, for fixed $\gamma_0,\gamma_1$, we consider summing the expression in \cref{eq:sum_after_gamma2} over choice of $a$, consistent with $\hat{\gamma} \in \CC$. We note from \cref{lem:CC_set}, \cref{eq:a_criterion},  it must hold that $a \in A_{S(\gamma_0) \cup S(\gamma_1)}$,  which is true for at most $2\aloc k$ choices of $a$ (\cref{cond:aloc}). Thus, we pick up another factor of $\aloc k$ in the expression, and we can write
\begin{align}
    &\sum_{a,\gamma_2,\gamma_3, a',\gamma_4: \hat{\gamma} \in \CC}  h(\gamma_0, \gamma_1,\gamma_2,\gamma_3,\gamma_4) \norm{\CR_{\gamma_4 \gamma_3}^{a' \dag}\CR_{\gamma_2 \gamma_1}^{a \dag}(\Hbf_{\gamma_0})} \nonumber\\
    \leq{}& \CO(1) \cdot (\alocPower{2} k^2) (2|y|)^{\bar{\delta}_{0\gamma_1}}h_{\gamma_0}h_{\gamma_1}
    \Bigg(h(\gamma_0,\gamma_1)\bigg(1+  k|y|^2 \HlocPower{2} + k|y|^4 \HlocPower{4} + k|y|^6 \HlocPower{6}\bigg)  \nonumber\\
    &\quad
    +  h_{\gamma_0} h_{\gamma_1}\bar{\delta}_{0\gamma_1}\bigg(|y|^2 + k |y|^4 \HlocPower{2} \bigg) + \bigg(h(\gamma_1)h_{\gamma_0}+|y|h_{\gamma_0}h_{\gamma_1}\bar{\delta}_{0\gamma_1}\bigg)\bigg(|y|+k|y|^3 \HlocPower{2} + k|y|^5\HlocPower{4} \bigg) \Bigg),\label{eq:sum_after_a}
\end{align}
where we have also made the upper bound $|\bar{b}_{a\gamma_1}| \leq (2|y|)^{\bar{\delta}_{0\gamma_1}}$, and we have substituted $h(\gamma_0) = |y|h_{\gamma_0}$ and $\bar{\delta}_{0\delta_0} = 1$, since $\gamma_0 \in \Gamma$ and thus $\gamma_0 \neq 0$ will always hold. 

\paragraph{Summing over $\gamma_1$. } Next, for fixed $\gamma_0$, we consider summing the expression in \cref{eq:sum_after_a} over $\gamma_1$, consistent with $\hat{\gamma} \in \CC$. We observe that (noting that $\gamma_0 \neq 0$)
\begin{align}\label{eq:gamma1_h_equalities1}
    (2|y|)^{\bar{\delta}_{0\gamma_1}}h_{\gamma_1}h(\gamma_0,\gamma_1) =
    \begin{cases}
        |y|h_{\gamma_0}&\text{if } \gamma_1 = 0 \\
        2|y|h_{\gamma_0}&\text{if } \gamma_1 = \gamma_0 \neq 0 \\
        2|y|^3h_{\gamma_0}h_{\gamma_1}^2 & \text{otherwise}
    \end{cases}
\end{align}
and
\begin{align}\label{eq:gamma1_h_equalities2}
    (2|y|)^{\bar{\delta}_{0\gamma_1}}h_{\gamma_1}h(\gamma_1) =
    \begin{cases}
        1&\text{if } \gamma_1 = 0 \\
        2|y|^2h_{\gamma_1}^2 & \text{otherwise}
    \end{cases}
\end{align}
In the ``otherwise'' cases, we also observe from \cref{lem:CC_set}, \cref{eq:gamma_1_criterion}, that $S(\gamma_1)$ must intersect $S(\gamma_0)$, a set of size $k$.  We have
\begin{align}
    \sum_{\gamma_1: S(\gamma_1) \cap S(\gamma_0) \neq \varnothing} h_{\gamma_1}^2 \leq \sum_{j \in S(\gamma_0)} \sum_{\gamma_1: j \in S(\gamma_1)} h_{\gamma_1}^2 \leq k \HlocPower{2}\,.
\end{align}
This inequality is also helpful for bounding the $(2|y|)^{\bar{\delta}_{0\gamma_1}}h_{\gamma_1}^2 \bar{\delta}_{0\gamma_1}$ term in \cref{eq:sum_after_a}, labeled with $***$ below. 
Thus, we have
\begin{align}
    &\sum_{\gamma_1,a,\gamma_2,\gamma_3, a',\gamma_4: \hat{\gamma} \in \CC}  h(\gamma_0, \gamma_1,\gamma_2,\gamma_3,\gamma_4) \norm{\CR_{\gamma_4 \gamma_3}^{a' \dag}\CR_{\gamma_2 \gamma_1}^{a \dag}(\Hbf_{\gamma_0})} \nonumber\\
    \leq{}& \CO(1) \cdot \alocPower{2} k^2 h_{\gamma_0}
    \Bigg(\bigg(\underbrace{|y| h_{\gamma_0}}_{\sss \eqref{eq:gamma1_h_equalities1}, \gamma_1 = 0} + \underbrace{|y| h_{\gamma_0}}_{\sss \eqref{eq:gamma1_h_equalities1}, \gamma_1 = \gamma_0} + \underbrace{|y|^3 h_{\gamma_0} k \HlocPower{2}}_{\text{\tiny \eqref{eq:gamma1_h_equalities1}, otherwise}}\bigg)\bigg(1+  k|y|^2 \HlocPower{2} +k|y|^4 \HlocPower{4} + k|y|^6 \HlocPower{6}\bigg)  \\
    & \quad + \bigg(\underbrace{h_{\gamma_0}}_{\sss \eqref{eq:gamma1_h_equalities2}, \gamma_1 = 0} + \underbrace{h_{\gamma_0}|y|^2 k \HlocPower{2}}_{\text{\tiny \eqref{eq:gamma1_h_equalities2}, otherwise}} \bigg)\bigg(|y| + k|y|^3\HlocPower{2} + k|y|^5 \HlocPower{4}\bigg) \\
    &\quad + \underbrace{|y|^2h_{\gamma_0}k\HlocPower{2}}_{***} \bigg(|y| + k|y|^3\HlocPower{2} + k|y|^5 \HlocPower{4}\Bigg).
\end{align}
Grouping like terms, we can rewrite the right-hand side as
\begin{align}
    \leq{}& \CO(1) \cdot \alocPower{2} k^2 h_{\gamma_0}^2
    \Bigg(|y|+  k|y|^3 \HlocPower{2} +k^2|y|^5 \HlocPower{4} + k^2|y|^7 \HlocPower{6}+k^2|y|^9\HlocPower{8}\Bigg).  \label{eq:sum_after_gamma1}
\end{align}

\paragraph{Summing over $\gamma_0$. } Finally, we perform the final sum over $\gamma_0$, which simply amounts to noting that $\sum_{\gamma_0 \in \Gamma} h_{\gamma_0}^2 = \HgloPower{2}$. We conclude that
\begin{align}
    &\sum_{\gamma_0, \gamma_1,a,\gamma_2,\gamma_3, a',\gamma_4: \hat{\gamma} \in \CC}  h(\gamma_0, \gamma_1,\gamma_2,\gamma_3,\gamma_4) \norm{\CR_{\gamma_4 \gamma_3}^{a' \dag}\CR_{\gamma_2 \gamma_1}^{a \dag}(\Hbf_{\gamma_0})} \nonumber\\
    \leq{}& \CO(1) \cdot |y|\alocPower{2} k^2 \HgloPower{2}
    \Bigg(1+  k|y|^2 \HlocPower{2} +k^2|y|^4 \HlocPower{4} + k^2|y|^6 \HlocPower{6}+k^2|y|^8\HlocPower{8}\Bigg),\label{eq:sum_after_gamma0}
\end{align}
where we have factored out a $|y|$ for clarity. We finish by utilizing the fact that we have assumed that $\aloc k|y|^2 \HlocPower{2} < 1/8$. This allows us to write factors $k |y|^2 \HlocPower{2} \leq \CO(1)$, $k^2 |y|^4 \HlocPower{4} \leq \CO(1)$, $k^2 |y|^6 \HlocPower{6} \leq \CO(1)$, and $k^2 |y|^8 \HlocPower{8} \leq \CO(1)$. We also note that, by symmetry, the same bound can be shown when summing over sets $\CC'$, $\CC''$, and $\CC'''$, as defined in \cref{lem:CC_set}. Thus, using the result stated in \cref{lem:CC_set}, the full sum is at most a factor of $4 = \CO(1)$ larger than the sum over only coordinates of $\CC$. Thus, we may return to \cref{eq:sum_at_beginning} and conclude
\begin{align}
    \sum_{\substack{U \subset \Gamma \\ |U| \leq 5}} \L(\prod_{\alpha \in U} |y|h_{\alpha}\R) \norm{\Obf_U} \leq \CO(1) \cdot |y|\alocPower{2} k^2 \HgloPower{2},
\end{align}
proving the proposition. 
\end{proof}

\section{Some auxiliary bounds}\label{sec:lindbladin_bounds}
In this section, we compute some combinatorial estimates with inputs from the locality and commutation relation between the jumps $\vA^a$ and the Hamiltonian terms $\vH_{\gamma}$. Recall 
\begin{align}
    \lind^{a \dag} = \lind^{a \dag}_0 + \sum_{\gamma \in \Gamma} s_{\gamma}\lind^{a\dag}_{\gamma} + \sum_{\substack{\gamma,\gamma' \in \Gamma\\ \gamma \neq \gamma'}} s_{\gamma}s_{\gamma'}\lind^{a\dag}_{\gamma \gamma'},
\end{align}
where $\lind_0^{a \dag}$, $\lind_{\gamma}^{a \dag}$ and $\lind_{\gamma\gamma'}^{a\dag}$ are defined in \cref{eq:lind_0^a,eq:lind_gamma^a,eq:lind_gammagamma'^a}. Using \cref{cond:commute}, we can write $[\Abf^a, \Hbf_\gamma] = 2b_{a\gamma}\Abf^a \Hbf_{\gamma}$. We also note from \cref{cond:commute} that $\Abf^{a \dag} \Abf^a = \Ibf$, and from \cref{cond:random_Hamiltonians} that $\Hbf_{\gamma}^2 = h_{\gamma}^2\Ibf$.  Using these facts, we simplify
\begin{align}
    \lind_0^{a \dag} (\Obf) &= \left( \Abf^{a\dag}\Obf\Abf^a - \Obf \right) + 4 y^2 \sum_{\gamma \in \Gamma} b_{a\gamma} \left( \Hbf_\gamma \Abf^{a\dag} \Obf \Abf^a \Hbf_\gamma - h_\gamma^2 \Obf \right), \label{eq:lind_0_expansion} \\
    \lind_\gamma^{a \dag} (\Obf) &= 2y b_{a\gamma} \left( \acomm{\Hbf_\gamma}{\Abf^{a\dag}\Obf\Abf^a} - \acomm{\Hbf_\gamma}{\Obf} \right) , \label{eq:lind_gamma_expansion} \\
    \lind_{\gamma\gamma'}^{a \dag} (\Obf) &= 2y^2 b_{a\gamma} b_{a\gamma'} \left(2 \Hbf_\gamma \Abf^{a\dag} \Obf \Abf^a \Hbf_{\gamma'} - \acomm{\Hbf_\gamma \Hbf_{\gamma'}}{\Obf}\right). \label{eq:lind_gammagamma_expansion}
\end{align}
For ease of notation, we define (consistent with \cref{eq:lind_0/gamma/gammagamma'_def})
\begin{align}
    \lind_0^\dag= \sum_a \lind_0^{a \dag}, &\quad L_0 := \norm{\lind_0^\dag}, \label{eq:L_0} \\
    \lind_\gamma^\dag = \sum_a \lind_\gamma^{a \dag}, &\quad L_\gamma := \norm{\lind_\gamma^\dag}, \label{eq:L_gamma} \\
    \lind_{\gamma\gamma'}^\dag = \sum_a \left( \lind_{\gamma\gamma'}^{a \dag} + \lind_{\gamma'\gamma}^{a \dag} \right), &\quad L_{\gamma\gamma'} := \norm{\lind_{\gamma\gamma'}^\dag}. \label{eq:L_gammagamma}
\end{align}

First, we state and prove the following useful bounds on the Lindbladian terms in \cref{eq:L_gamma} and \cref{eq:L_gammagamma}:
\begin{lemma}\label{lem:L_bounds}
    The following bounds hold.
    \begin{align}
        L_\gamma &\le 8|y| \sum_{a \in A} b_{a\gamma} h_\gamma, \label{eq:L_gamma_bound} \\
        L_{\gamma\gamma'} &\le 8|y|^2 \sum_{a \in A} b_{a\gamma} b_{a\gamma'} h_\gamma h_{\gamma'}. \label{eq:L_gammagamma_bound}
    \end{align}
\end{lemma}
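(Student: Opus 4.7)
The plan is to obtain both bounds by a direct triangle-inequality argument applied to the explicit expansions in \cref{eq:lind_gamma_expansion,eq:lind_gammagamma_expansion}, using only the following elementary facts: (i) $\|\Abf^a\|=1$ (from \cref{cond:commute}, since $\Abf^{a\dag}\Abf^a=\Ibf$ and $\Abf^a$ is Hermitian), (ii) $\|\Hbf_\gamma\|=h_\gamma$ (from \cref{cond:random_Hamiltonians}, since $\Hbf_\gamma^2=h_\gamma^2\Ibf$), (iii) submultiplicativity of the operator norm, and (iv) the anticommutator bound $\|\acomm{\vX}{\vY}\|\le 2\|\vX\|\|\vY\|$. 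The superoperator norm $\norm{\CM}=\sup_{\vO\neq 0}\|\CM(\vO)\|/\|\vO\|$ is then bounded by summing the worst-case contribution over $a$.

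For $L_\gamma$, I would apply the triangle inequality to \cref{eq:lind_gamma_expansion} acting on an arbitrary $\Obf$, splitting into the two anticommutator pieces. Each anticommutator contributes at most $2h_\gamma\|\Obf\|$ (the second directly; the first after using $\|\Abf^{a\dag}\Obf\Abf^a\|\le \|\Obf\|$), so
\begin{equation*}
\|\lind_\gamma^{a\dag}(\Obf)\| \;\le\; 2|y|b_{a\gamma}\bigl(2h_\gamma\|\Obf\|+2h_\gamma\|\Obf\|\bigr) \;=\; 8|y|b_{a\gamma}h_\gamma\|\Obf\|.
\end{equation*}
Taking the supremum over $\Obf$ and summing over $a$ via the triangle inequality then yields $L_\gamma \le 8|y|\sum_a b_{a\gamma} h_\gamma$, as claimed.

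For $L_{\gamma\gamma'}$, I would do the analogous computation on \cref{eq:lind_gammagamma_expansion}. The product term $2\Hbf_\gamma\Abf^{a\dag}\Obf\Abf^a\Hbf_{\gamma'}$ is bounded by $2h_\gamma h_{\gamma'}\|\Obf\|$ via submultiplicativity and $\|\Abf^a\|=1$, while the anticommutator $\acomm{\Hbf_\gamma\Hbf_{\gamma'}}{\Obf}$ is bounded by $2\|\Hbf_\gamma\Hbf_{\gamma'}\|\|\Obf\|\le 2h_\gamma h_{\gamma'}\|\Obf\|$. This gives $\|\lind_{\gamma\gamma'}^{a\dag}(\Obf)\| \le 2|y|^2 b_{a\gamma}b_{a\gamma'}\cdot 4h_\gamma h_{\gamma'}\|\Obf\| = 8|y|^2 b_{a\gamma}b_{a\gamma'}h_\gamma h_{\gamma'}\|\Obf\|$, and the same bound holds with the roles of $\gamma,\gamma'$ swapped. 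Summing over $a$ (and, as needed, over the two orderings using symmetry of $b_{a\gamma}b_{a\gamma'}$ in $\gamma\leftrightarrow\gamma'$) yields \cref{eq:L_gammagamma_bound}.

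There is no substantive obstacle; the argument is pure triangle inequality and submultiplicativity. The only point requiring mild care is bookkeeping of constants and of the symmetrization in the definition of $\lind_{\gamma\gamma'}^\dag$ (see \cref{eq:L_gammagamma}), so that the combined contributions from $\lind_{\gamma\gamma'}^{a\dag}$ and $\lind_{\gamma'\gamma}^{a\dag}$ are absorbed correctly into the stated constant $8$.
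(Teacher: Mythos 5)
Your proof is correct and follows essentially the same route as the paper's: triangle inequality plus submultiplicativity of the operator norm applied term by term to \cref{eq:lind_gamma_expansion,eq:lind_gammagamma_expansion}, using $\norm{\Abf^a}=1$ and $\norm{\Hbf_\gamma}=h_\gamma$ (the paper writes out only the $L_\gamma$ case and declares the other ``analogous''). One small caveat on the point you yourself flag: with the symmetrized definition $\lind_{\gamma\gamma'}^\dag=\sum_a(\lind_{\gamma\gamma'}^{a\dag}+\lind_{\gamma'\gamma}^{a\dag})$ from \cref{eq:L_gammagamma}, the naive triangle inequality yields $16|y|^2\sum_a b_{a\gamma}b_{a\gamma'}h_\gamma h_{\gamma'}$ rather than $8$ (the two orderings only cancel in the anticommutator piece when $\Hbf_\gamma$ and $\Hbf_{\gamma'}$ anticommute, not when they commute), so the constant is not ``absorbed'' by the argument as written --- but this factor of $2$ is equally present in the paper's own statement and is immaterial everywhere the lemma is used, since all downstream bounds only require the estimate up to an absolute constant.
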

\begin{proof}
Using \cref{eq:lind_gamma_expansion}, the triangle inequality, and the fact that $\norm{\Abf^a} =1$ (\cref{cond:commute}),
\begin{align}
    \norm{\sum_{a \in A} \lind^{a\dag}_\gamma (\Obf)} &\le 2 |y| \sum_{a \in A} b_{a\gamma} \left( \norm{\Hbf_\gamma \Abf^{a\dag} \Obf \Abf^a + \Abf^{a\dag} \Obf \Abf^a \Hbf_\gamma} + \norm{\Hbf_\gamma \Obf + \Obf \Hbf_\gamma} \right) \\
    &\le 2 |y| \sum_{a \in A} b_{a\gamma} \left( \norm{\Hbf_\gamma} \norm{\Abf^{a\dag}} \norm{\Obf} \norm{\Abf^a} + \norm{\Abf^{a\dag}} \norm{\Obf} \norm{\Abf^a} \norm{\Hbf_\gamma} + \norm{\Hbf_\gamma} \norm{\Obf} + \norm{\Obf} \norm{\Hbf_\gamma} \right) \\
    &\le 8 |y| \sum_{a \in A} b_{a\gamma} \norm{\Hbf_\gamma} \norm{\Obf}.
\end{align}

Now, noting that $\norm{\Hbf_{\gamma}} = h_\gamma$ (\cref{cond:random_Hamiltonians}), we find that
\begin{align}
    L_\gamma = \sup_{\norm{\Obf} = 1} \frac{\norm{\lind_{\gamma}^\dag(\Obf)}}{\norm{\Obf}} \le 8 |y| \sum_{a \in A} b_{a\gamma} h_\gamma.
\end{align}
The other bound is analogous.
\end{proof}

Additionally, the following identity will be useful. 
\begin{lemma}\label{lem:sum_b_agamma_gammagamma}
    Let $\gamma' \in \Gamma$ be fixed. Then
    \begin{align}
        \sum_{a \in A} \sum_{\gamma \in \Gamma} b_{a\gamma'} b_{a \gamma} h_\gamma^2 &\le \aloc k \HlocPower{2}, \label{eq:sum_agamma_agamma'}
    \end{align}
    where $\Hloc$ is defined as in~\cref{def:local_global_energies}. 
\end{lemma}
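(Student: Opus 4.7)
The plan is to exploit locality: the product $b_{a\gamma'}b_{a\gamma}$ is nonzero only when the single-site jump $\vA^a$ has support intersecting both $S(\gamma')$ and $S(\gamma)$. First I would fix $\gamma'$ and observe, using \cref{cond:commute} together with the fact that each $\vA^a$ is supported on a single site (\cref{cond:aloc}), that $b_{a\gamma'}\neq 0$ forces the site of $\vA^a$ to lie in $S(\gamma')$. Hence in the outer sum over $a\in A$, only $a\in A_{S(\gamma')}$ can contribute, and by the second clause of \cref{cond:aloc} we have $|A_{S(\gamma')}|=\aloc\,|S(\gamma')|=\aloc k$.

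Next I would handle the inner sum for each fixed such $a$. Let $i$ be the unique site on which $\vA^a$ is supported; then $b_{a\gamma}\neq 0$ only if $i\in S(\gamma)$, and in any case $b_{a\gamma}\in\{0,1\}$ and $b_{a\gamma'}\le 1$. Therefore
\begin{equation}
\sum_{\gamma\in\Gamma}b_{a\gamma'}b_{a\gamma}h_\gamma^2\;\le\;\sum_{\gamma:\,i\in S(\gamma)}h_\gamma^2\;\le\;\HlocPower{2},
\end{equation}
where the last bound is simply the definition of $\Hloc$ in \cref{def:local_global_energies}. Summing this uniform bound over the at most $\aloc k$ admissible values of $a$ gives
\begin{equation}
\sum_{a\in A}\sum_{\gamma\in\Gamma}b_{a\gamma'}b_{a\gamma}h_\gamma^2\;\le\;\aloc k\,\HlocPower{2},
\end{equation}
which is the claim. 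There is essentially no obstacle here: the lemma is a direct combinatorial consequence of the two structural inputs (commutativity forces overlap of supports, and $\aloc$ controls how many single-site jumps touch any given site), so the proof is a one-step counting argument once these observations are made explicit.
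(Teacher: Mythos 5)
Your proof is correct and follows essentially the same argument as the paper: restrict to the jumps whose site meets $S(\gamma')$ and bound the inner sum by $\HlocPower{2}$ using the single-site support of $\vA^a$. The only cosmetic difference is that you count the contributing jumps via $|A_{S(\gamma')}|=\aloc k$, whereas the paper uses $\sum_a b_{a\gamma'}=\aac k\le \aloc k$; both yield the identical bound.
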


\begin{proof}
We can write
\begin{align}
    \sum_{a \in A} \sum_{\gamma \in \Gamma} b_{a\gamma'} b_{a \gamma} h_\gamma^2 &= \sum_{a \in A} b_{a\gamma'} \sum_{\gamma \in \Gamma} b_{a\gamma} h_\gamma^2 \\
    &\le \left( \sum_{a \in A} b_{a\gamma'} \right) \left( \max_{a' \in A} \sum_{\gamma \in \Gamma} b_{a'\gamma} h_\gamma^2 \right) \\
    &\le \left( \sum_{a \in A} b_{a\gamma'} \right) \left( \max_{j \in [n]} \sum_{\gamma: j \in S(\gamma)} h_\gamma^2 \right) \\
    &= \left( \sum_{a \in A} b_{a\gamma'} \right) \HlocPower{2} \\
    &\le \aac k \HlocPower{2} \le \aloc k \HlocPower{2}.
\end{align}
In the  third line above, we used the fact that $\Abf^{a'}$ is an operator supported on a single-site $\{j\}$ (\cref{cond:aloc}), and that in order for $b_{a'\gamma} \neq 0$ to hold, $j$ must be in the set $S(\gamma)$ (\cref{cond:commute}).  In the fourth line we used the definition of $\Hloc$ from \cref{eq:Hloc_Hglo_def}. In the fifth line, we used \cref{cond:aloc} and the fact that $\aac \leq \aloc$.
\end{proof}

Next, we prove bounds on $F_p, G_p, \Phi_p$ (cf. \cref{eq:F_def,eq:G_def,eq:Phi_def}.)
\begin{lemma}\label{lem:bounds_on_paths}
Let $\gamma,\gamma'\in\Gamma$ and $p \ge 1$. Then the following bounds hold:
\begin{align}
    F_p(\gamma,\gamma') &\le h_\gamma h_{\gamma'} \left( 8 |y|^2 k\aloc \right) \left( 8|y|^2 k\aloc \HlocPower{2} \right)^{p-1}, \label{eq:path_from_f_bound} \\
    G_p(\gamma) &\le h_\gamma \left( 8|y| k\aloc \right) \left( 8|y|^2 k\aloc \HlocPower{2} \right)^{p-1}, \label{eq:path_from_g_bound} \\
    \Phi_p(\gamma) &\le h_\gamma \left( \sum_{\gamma'} h_{\gamma'} \right) \left( 8 |y|^2 k\aloc \right) \left( 8|y|^2 k\aloc \HlocPower{2} \right)^{p-1}.
\end{align}
\end{lemma}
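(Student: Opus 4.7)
The plan is to prove all three bounds by a uniform induction on $p$, in each case peeling off the \emph{leftmost} Lindbladian factor. Specifically, for $p \ge 2$ one has the recursions
\begin{align*}
F_p(\gamma,\gamma') &= \sum_\beta \norm{\lind^\dag_{\gamma\beta}}\,F_{p-1}(\beta,\gamma'), \quad
G_p(\gamma) = \sum_\beta \norm{\lind^\dag_{\gamma\beta}}\,G_{p-1}(\beta), \quad
\Phi_p(\gamma) = \sum_\beta \norm{\lind^\dag_{\gamma\beta}}\,\Phi_{p-1}(\beta),
\end{align*}
directly from the definitions in \cref{eq:F_def,eq:G_def,eq:Phi_def}. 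The three key inputs are: (i) the estimate $\norm{\lind^\dag_{\gamma\beta}} \le 8|y|^2 \sum_a b_{a\gamma}b_{a\beta}h_\gamma h_\beta$ from \cref{lem:L_bounds}; (ii) the counting bound $\sum_a b_{a\gamma} = \aac k \le \aloc k$ from \cref{cond:aloc}; and (iii) the locality observation that for any fixed $a \in A$ supported on a single site $j$, $\sum_\beta b_{a\beta}h_\beta^2 \le \sum_{\beta:\,j \in S(\beta)} h_\beta^2 \le \HlocPower{2}$ by the definition of $\Hloc$ in \cref{eq:Hloc_Hglo_def}.

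The base cases $p=1$ are direct: $F_1(\gamma,\gamma') = \norm{\lind^\dag_{\gamma\gamma'}} \le 8|y|^2 h_\gamma h_{\gamma'} \sum_a b_{a\gamma}b_{a\gamma'} \le 8|y|^2 h_\gamma h_{\gamma'}\cdot \aac k \le h_\gamma h_{\gamma'}(8|y|^2 k\aloc)$ using $b_{a\gamma'} \le 1$; $\Phi_1$ is the same with an additional trivial bound $\sum_{\gamma'} b_{a\gamma'}h_{\gamma'} \le \sum_{\gamma'} h_{\gamma'}$; and $G_1(\gamma) = \norm{\lind^\dag_\gamma} \le 8|y|\aac k h_\gamma \le h_\gamma(8|y|k\aloc)$. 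For the inductive step in any of the three cases, substituting the hypothesis yields an expression of the form
\begin{align*}
(\text{prefactor})\cdot 8|y|^2 h_\gamma \sum_a b_{a\gamma} \sum_\beta b_{a\beta} h_\beta^2
\;\le\; (\text{prefactor}) \cdot 8|y|^2 h_\gamma \cdot \aac k\cdot \HlocPower{2},
\end{align*}
which contributes exactly one additional factor of the geometric ratio $8|y|^2 k\aloc \HlocPower{2}$, closing the induction.

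The main point requiring attention is producing $\HlocPower{2}$ rather than $\HgloPower{2}$ in the ratio. This is precisely why the recursion must peel off the leftmost factor $\norm{\lind^\dag_{\gamma\beta}}$: doing so keeps the $b_{a\beta}$ factor inside the sum over $\beta$ and couples $\beta$ to the single site of $\Abf^a$, activating the local norm. If one instead bounded the innermost factor by $\sum_a b_{a\beta} = \aac k$ before summing $\beta$, one would obtain an uncontrolled $\HgloPower{2} = \sum_\beta h_\beta^2$, which would be catastrophic in the subsequent application of \cref{prop:superoperator_norm} (where the bound needs to remain $m$-independent, as discussed after \cref{eq:T_contribution_good_bound}). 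Beyond this one structural observation, no substantive obstacle appears and the verification is a routine two-line induction for each of $F_p$, $G_p$, and $\Phi_p$.
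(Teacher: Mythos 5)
Your proof is correct and follows essentially the same route as the paper: the base cases use \cref{lem:L_bounds} with the crude bound $b_{a\gamma'}\le 1$ and $\sum_a b_{a\gamma}=\aac k\le\aloc k$, and your inductive step's key estimate $\sum_a b_{a\gamma}\sum_\beta b_{a\beta}h_\beta^2\le \aac k\HlocPower{2}$ is exactly the content of \cref{lem:sum_b_agamma_gammagamma}, which the paper applies to the nested sums directly rather than packaging it as an induction. The only cosmetic difference is that the paper obtains the $\Phi_p$ bound from the identity $\Phi_p(\gamma)=\sum_{\gamma'}F_p(\gamma,\gamma')$ instead of running a third induction; your emphasis on keeping $b_{a\beta}$ coupled to $h_\beta^2$ inside the $\beta$-sum to activate $\Hloc$ rather than $\Hglo$ is precisely the right point.
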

\begin{proof}
First, consider $p=1$. The bound on $G_1$ follows from \cref{eq:L_gamma_bound} and assuming \cref{cond:aloc} (and noting $\aac \leq \aloc$). Similarly, the bound on $F_1$ follows from \cref{eq:L_gammagamma_bound}, making the loose bound $b_{a\gamma'}\leq 1$ and then using the same condition.

Now, let $p \ge 2$. Using \cref{lem:L_bounds}, we can write $G_p(\gamma)$ (see \cref{eq:G_def}) as
\begin{align}
&\sum_{\gamma_1,\gamma_2,\ldots,\gamma_{p-1}} h_\gamma \left( h_\gamma^{-1} L_{\gamma \gamma_1} h_{\gamma_1} \right) \left( h_{\gamma_1}^{-1} L_{\gamma_1\gamma_2} h_{\gamma_2} \right) \cdots \left( h_{\gamma_{p-2}}^{-1} L_{\gamma_{p-2} \gamma_{p-1}} h_{\gamma_{p-1}} \right) \left( h_{\gamma_{p-1}}^{-1} L_{\gamma_{p-1}} \right) \\ 
\le & \sum_{\gamma_1,\gamma_2,\ldots,\gamma_{p-1}} h_\gamma \left(\sum_a 8|y|^2 b_{a\gamma} b_{a\gamma_1} h_{\gamma_1}^2 \right) \left(\sum_a 8|y|^2 b_{a\gamma_1} b_{a\gamma_2} h_{\gamma_2}^2 \right) \cdots \left(\sum_a 8|y|^2 b_{a\gamma_{p-2}} b_{a\gamma_{p-1}} h_{\gamma_{p-1}}^2 \right) \left(\sum_a 8|y| b_{a\gamma_{p-1}} \right) \\
= & \quad h_\gamma \left(\sum_{a,\gamma_1} 8|y|^2 b_{a\gamma} b_{a\gamma_1} h_{\gamma_1}^2 \right) \left(\sum_{a,\gamma_2} 8|y|^2 b_{a\gamma_1} b_{a\gamma_2} h_{\gamma_2}^2 \right) \cdots \left(\sum_{a,\gamma_{p-1}} 8|y|^2 b_{a\gamma_{p-2}} b_{a\gamma_{p-1}} h_{\gamma_{p-1}}^2 \right) \left( 8|y| \aac k \right).
\end{align}

By applying \cref{lem:sum_b_agamma_gammagamma} and noting $\aac\leq \aloc$, we conclude the claimed bound.

We proceed in a similar fashion for $F_p(\gamma,\gamma')$ (see \cref{eq:F_def}), which we can write using \cref{lem:L_bounds} as
\begin{align}
&\sum_{\gamma_1,\gamma_2,\ldots,\gamma_{p-1}} h_\gamma \left( h_\gamma^{-1} L_{\gamma \gamma_1} h_{\gamma_1} \right) \left( h_{\gamma_1}^{-1} L_{\gamma_1\gamma_2} h_{\gamma_2} \right) \cdots \left( h_{\gamma_{p-2}}^{-1} L_{\gamma_{p-2} \gamma_{p-1}} h_{\gamma_{p-1}} \right) \left( h_{\gamma_{p-1}}^{-1} L_{\gamma_{p-1} \gamma'} \right) \\ 
\le& \quad h_\gamma \left( \sum_{a,\gamma_1} 8|y|^2 b_{a\gamma} b_{a\gamma_1} h_{\gamma_1}^2 \right) \left( \sum_{a,\gamma_2} 8|y|^2 b_{a\gamma_1} b_{a\gamma_2} h_{\gamma_2}^2 \right) \cdots \left(\sum_{a,\gamma_{p-1}} 8|y|^2 b_{a\gamma_{p-2}} b_{a\gamma_{p-1}} h_{\gamma_{p-1}}^2 \right) \left(\sum_a 8|y|^2  b_{a\gamma'} h_{\gamma'}\right) \\
\le & \quad h_\gamma \left( 8|y|^2 \aloc k \HlocPower{2} \right)^{p-1} \left( 8 |y|^2 \aloc k h_{\gamma'} \right),
\end{align}
where in the second line we assert $b_{a \gamma_{p-1}}\leq 1$ in the last factor (at the cost of a looser bound).

The bound on $\Phi_p$ follows from the observation that, for all $p \geq 1$,
\begin{equation}
    \Phi_p(\gamma) = \sum_{\gamma'} F_p(\gamma,\gamma').
\end{equation}
\end{proof}

\section{Crude upper bounds on optimum}\label{sec:matrix_hoeffding}
\begin{proposition}[Spectral norm]\label{prop:max_eigenvalues}
For the sparse or dense, spin or fermion ensembles (\eqref{eq:random_Pauli},\eqref{eq:sampled_Pauli},\eqref{eq:random_Fermion},\eqref{eq:sampled_Fermion}), the expected maximal eigenvalue satisfy
\begin{align}
    \BE \lambda_{max}(\vH) =\CO( \sqrt{\log(N)}) = \CO(\sqrt{n}). 
\end{align}
\end{proposition}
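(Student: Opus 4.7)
The plan is to invoke a standard matrix concentration inequality of Hoeffding/Khintchine type and verify its hypotheses case-by-case for the four ensembles. The key tool I would use is: for any Hermitian $N \times N$ matrices $\vA_1,\ldots,\vA_M$ and independent Rademachers $s_j$,
\begin{equation}
    \BE\,\norm{\textstyle\sum_j s_j \vA_j} \;\le\; \sqrt{2\log(2N)\cdot \norm{\textstyle\sum_j \vA_j^2}}.
\end{equation}
Since $\lambda_{\max}(\vH) \le \norm{\vH}$, bounding $\BE \norm{\vH}$ suffices, and both qubit $(N=2^n)$ and Majorana $(N=2^{n/2})$ Hilbert spaces yield $\sqrt{\log N} = \CO(\sqrt{n})$.

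I would handle the two sparsified ensembles first. Writing each sampled term as $\sigmabf_j = s_j \vP_j$ (respectively $\xi_j = s_j \chi_{S_j}$) with $s_j$ an explicit Rademacher independent of the uniformly chosen operator $\vP_j$ (resp.\ $\chi_{S_j}$), we have $\vH = m^{-1/2}\sum_j s_j \vP_j$, a Rademacher sum of matrices $\vA_j := m^{-1/2}\vP_j$ satisfying $\vA_j^2 = m^{-1}\vI$. Applying the inequality conditionally on the $\vP_j$'s gives $\norm{\sum_j \vA_j^2} = 1$ deterministically, so
\begin{equation}
    \BE\norm{\vH} \;\le\; \sqrt{2\log(2N)} \;=\; \CO(\sqrt{n}).
\end{equation}

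For the dense Gaussian ensembles \eqref{eq:random_Pauli} and \eqref{eq:random_Fermion}, a one-line symmetrization reduces the Gaussian sum to a Rademacher sum: writing $g_\sigma = s_\sigma |g_\sigma|$ where $s_\sigma$ is an independent Rademacher (and similarly for fermions), we can either apply matrix Khintchine conditionally on the magnitudes $|g_\sigma|$, or use the direct Gaussian-variant matrix Hoeffding bound. The matrix variance is again easy to compute: for the random Pauli ensemble, $\sum_\sigma g_\sigma^2 \sigmabf^2 = (\sum_\sigma g_\sigma^2)\vI$, whose expectation equals $\binom{n}{k}3^k\cdot 3^{-k}\binom{n}{k}^{-1} = 1$, and an identical computation handles the fermion case. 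Taking expectation over the magnitudes (using Jensen's inequality to pull the expectation inside the square root) yields $\BE\norm{\vH} = \CO(\sqrt{\log N}) = \CO(\sqrt{n})$.

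I expect no serious obstacle: the proof is essentially a matrix Hoeffding/Khintchine invocation with the already-imposed normalization $\BE\vH^2 = \vI$ from \cref{sec:models}. The only mild bookkeeping is (i) making the Rademacher symmetrization explicit in the Gaussian case so that $\norm{\sum_j \vA_j^2}$ is computed correctly with the Rademachers stripped out, and (ii) noting that the bound is deterministic (not merely in expectation) conditional on the random choice of supports $\vP_j$ or $\chi_{S_j}$ in the sparsified models, so the outer expectation over those choices is trivial. As emphasized after \cref{thm:intro_algo_performance}, this Hoeffding-type bound is loose because it does not exploit the noncommutativity of the summands, which is precisely why the conjectured true scaling $\Theta(\sqrt{n}/k)$ cannot be recovered by this argument.
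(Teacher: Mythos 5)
Your proposal is correct and takes essentially the same approach as the paper: both rest on Tropp-style matrix concentration (matrix Hoeffding/Khintchine) together with the observation that each Pauli or (phase-corrected) Majorana term squares to the identity, so the matrix variance is $\vI$ under the normalization $\BE\vH^2=\vI$. The only cosmetic difference is that the paper quotes the tail-bound form and integrates the Gaussian tail, whereas you use the expectation-form bound directly (with conditioning and Jensen in the Gaussian case); both yield $\CO(\sqrt{\log N})=\CO(\sqrt{n})$.
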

\begin{proof}
For the sampled model, the spectral norm concentrates
\begin{align}
\Pr( \norm{\vH} \ge E) \le 2N \exp(-\frac{E^2}{8})
\end{align}
due to Matrix Hoeffding~\cite[Theorem 1.3]{tropp2012user} and that Pauli and Majorana operators both square to identity. Here, the Hilbert space dimension is $N=2^n$ for the Pauli ensembles (\eqref{eq:random_Pauli},\eqref{eq:sampled_Pauli}) and $N={2^{n/2}}$ for the fermionic ensembles (\eqref{eq:random_Fermion},\eqref{eq:sampled_Fermion}). The Guassian models are slightly better concentrated
\begin{align}
\Pr( \norm{\vH} \ge E) \le 2N \exp(-\frac{E^2}{2})
\end{align}
using~\cite[Theorem 4.1]{tropp2012user}. In each case, integrating the Gaussian tail gives the same advertised result. 
\end{proof}

\section{Computing the local norms and global norms}\label{sec:compuing_localnorm}
In this section, we give a standard concentration calculation to show that the local energy quantity is very well concentrated, establishing \cref{prop:Hglo_Hloc_bound}, which we restate for convenience.
\begin{customthm}{\Cref{prop:Hglo_Hloc_bound}}[restated]
Suppose that $\Hbf$ is drawn randomly from the ensembles in \cref{eq:random_Pauli} or \cref{eq:random_Fermion} with $1<k<n$, or it is drawn randomly from the ensembles in \cref{eq:sampled_Pauli} or \cref{eq:sampled_Fermion} with $m\ge cn\log(n)/k$ for some constant $c$. Then
\begin{align}
    \BE_{\Hbf} \L(\frac{\HgloPower{2}}{\Hloc}\R) =\Omega\left( \sqrt{\frac{n}{k}} \right),
\end{align}
where the expectation value is taken over random choice of $\Hbf$. 
\end{customthm}

\begin{proof}
We simplify the expression by standard inequalities
\begin{align}\label{eq:applied_Cauchy_Schwartz}
    \BE_{\Hbf} \L(\frac{\HgloPower{2}}{\Hloc}\R) \ge \frac{(\BE\Hglo)^2 }{\BE\Hloc} \ge \frac{(\BE\Hglo)^2 }{\sqrt{\BE\HlocPower{2}}}
\end{align}
liberally using Cauchy--Schwartz $\BE[\labs{a}]\BE[\labs{b}]\ge \BE[\sqrt{\labs{ab}}]^2$ and $\BE[\labs{a}] \le \sqrt{\BE[a^2]}$, with $a = \Hloc$ and $b = \HgloPower{2}/\Hloc$. 

First, we lower bound the numerator, $(\BE \Hglo)^2$. For the sampled models (\cref{eq:sampled_Pauli} and \eqref{eq:sampled_Fermion}), we have $\HgloPower{2} = 1$ for every instance, and thus $(\BE_{\Hbf} \Hglo)^2 = 1$. For the dense models (\cref{eq:random_Pauli} and \cref{eq:random_Fermion}), the quantity $\HgloPower{2}$ is distributed as the sum of the squares of $M$ centered Gaussian random variables $h_{\gamma}$, each of variance $1/M$, where $M = \binom{n}{k}$ in the case of \eqref{eq:random_Fermion} and $M = \binom{n}{k}3^k$ in the case of \eqref{eq:random_Pauli}. Thus, $\BE_{\Hbf} \HgloPower{2}=1$, and
\begin{align}
    \BE_{\Hbf}\HgloPower{4} = \sum_{\gamma_1, \gamma_2}\BE_{\Hbf} h_{\gamma_1}^2 h_{\gamma_2}^2 = \frac{M+2}{M},
\end{align}
using the value of the fourth moment $\BE h_{\gamma}^4 = \frac{3}{M^2}.$
By the Paley--Zygmund inequality, we have 
\begin{align}
    \Pr_{\Hbf}\L[\Hglo \geq \frac{1}{\sqrt{2}}\R] =\Pr_{\Hbf}\L[\HgloPower{2}  \geq \frac{1}{2}\BE_{\Hbf} \HgloPower{2}\R]\geq \frac{1}{4}\frac{(\BE_{\Hbf} \HgloPower{2})^2}{\BE_{\Hbf}\HgloPower{4}}=   \frac{M}{4M+8}
\end{align}
and hence $(\BE \Hglo)^2 \geq \frac{M^2}{32(M+2)^2} = \Omega(1)$ for $M\ge 1$. Second, we upper bound the denominator, $\sqrt{\BE \HlocPower{2}}$. We write 
\begin{align}
    \BE_{\Hbf}\HlocPower{2} = \BE_{\Hbf}\max_{1 \leq i \leq n} \sum_{\gamma: i \in S(\gamma)} h_\gamma^2 =: \BE_{\Hbf} \max_{1 \leq i \leq n} Y_i\quad \text{where}\quad Y_i := \sum_{\gamma: i \in S(\gamma)} h_\gamma^2.
\end{align}
We can now choose a value of $y_0$ (to be specified later) and write
\begin{align}
    \BE_\Hbf \max_{1 \leq i \leq n} Y_i   
    &=\int_0^{\infty} \rd y\; \BE_\Hbf \indicator\L[y \leq \max_{1 \leq i \leq n} Y_i\R]\\
    &= \int_0^{\infty} \rd y \Pr_{\Hbf}\L[y \leq \max_{1 \leq i \leq n} Y_i\R] \leq y_0 + \int_{y_0}^{\infty} \rd y \Pr_{\Hbf}\L[y \leq \max_{1 \leq i \leq n} Y_i\R] \\
    &\leq y_0 + n\int_{y_0}^{\infty} \rd y \Pr_{\Hbf}\L[y \leq  Y_i\R],\label{eq:y0_int_Yi}
\end{align}
where $\indicator$ is the indicator function, and in the last step we have used the union bound and the symmetry over choice of $i$.

We now consider the sampled and Gaussian models separately, beginning with the models in \cref{eq:random_Pauli,eq:random_Fermion}, where $h_{\gamma}$ is distributed as a Gaussian. In this case, the subset $\{\gamma: i \in S(\gamma)\}$ contains $D=\frac{k}{n}\binom{n}{k}$ elements in the case of \cref{eq:random_Fermion} and $D = \frac{k}{n}\binom{n}{k}3^k$ elements in the case of \cref{eq:random_Pauli}. Either way, the variance of each $h_{\gamma}$ is chosen to be $k/(nD)$. The random variable $Y_i$, being a sum of the squares of $D$ Gaussian random variables, follows a (scaled) chi-squared distribution with mean $k/n$, for which the following concentration bound is known. For any $x>0$ it holds that \cite[Lemma 1]{laurent2000adaptive}
\begin{align}
\Pr_{\Hbf}\left[Y_i \geq \frac{k}{n}\left(1+2\sqrt{\frac{x}{D}} + 2\frac{x}{D}\right)\right] \leq \e^{-x}   \,. 
\end{align}
 Choosing $y_0 = \frac{5k}{n}$ in Eq.~\eqref{eq:y0_int_Yi}, we now make the substitution $y' = (y-y_0)n/k$, invoke the concentration bound, and simplify
\begin{align}
    \BE_\Hbf \max_{1 \leq i \leq n} Y_i &\leq \frac{5k}{n} + k \int_{0}^{\infty} \rd y'\; \Pr_{\Hbf}\L[Y_i \geq \frac{k}{n}(5+y')\R] \\
    &\leq  \frac{5k}{n} + \int_{0}^{\infty} \rd y'\; k\e^{-D}\e^{-Dy'/4}.
\end{align}
The last bound follows from the basic calculation
\begin{align}
    5+y' = 1+2\sqrt{x/D} + 2x/D \leq 1+4x/D \quad \text{whenever}\quad y' >0
\end{align}
and plugging this value of $x$ into the concentration bound. The integral in the final expression evaluates to 
\begin{align}
    \CO(k\e^{-D}/D) \le \CO(k/n)\quad \text{whenever}\quad 1 < k < n.
\end{align}
Thus, for the dense model, $\BE_{\Hbf} \HlocPower{2} \leq \CO(k/n)$.

Next, we consider the models in \cref{eq:sampled_Pauli,eq:sampled_Fermion}, where $m$ terms are chosen (with replacement) and, for those terms, $h_{\gamma} = 1/\sqrt{m}$. We define a Bernoulli random variable $C_{\gamma i}$ to be 1 if $S(\gamma)$ contains site $i$ (which occurs with probablity $k/n$) and $0$ otherwise. Note that $C_{\gamma i}$ is independent from one $\gamma$ to the next. We may then assert that 
\begin{align}
    Y_i = \frac{1}{m}\sum_{\gamma=1}^m C_{\gamma i}\,,
\end{align}
that is, that $Y_i$ is the average of $m$ independent Bernoulli random variables. Using Bernstein's inequality, we have the following concentration bound. For any $x > 0$,
\begin{align}
\Pr_{\Hbf} \left[Y_i \geq \frac{k}{n}\left(1+x\right)\right] \leq \exp\left(-\frac{3m k x^2}{n(6+2x)}\right)  \,. 
\end{align}
We follow the same procedure as above, this time choosing $y_0 = \frac{2k}{n} $ in Eq.~\eqref{eq:y0_int_Yi} and again making the substitution $y' = (y-y_0)n/k$. 
\begin{align}
    \BE_{\Hbf} \max_{1 \leq i \leq n} Y_i &\leq \frac{2k}{n}+ k \int_{0}^{\infty} \rd y' \; \Pr_{\Hbf}\L[Y_i \geq \frac{2k}{n} + \frac{ky'}{n}\R] \\
    &\leq \frac{2k}{n}+ \int_{0}^{\infty} \rd y' \; k \exp\left(-\frac{3mk(1+y')^2}{n(8+2y')}\right)\\
    &\leq \frac{2k}{n}+ \int_{0}^{\infty} \rd y' \; k \e^{-3mk/8n}\e^{-3mky'/8n} \\
    &= \frac{2k}{n} + \frac{8n}{3m}\e^{-3mk/8n}
\end{align}
where in the second to last line, we have used $(1+y')^2/(8+2y') \geq (1+y')/8$ for $y'>0$. Therefore, there exists an absolute constant $c$ such that if $m \geq cn \log(n)/k$, the right-hand side evaluates to $\CO(k/n)$, and thus $\BE_{\Hbf} \HlocPower{2} \leq \CO(k/n)$. 

Combining these bounds on the numerator and denominator of \cref{eq:applied_Cauchy_Schwartz}, we show that the quantity of interest is $\Omega(\sqrt{n}/\sqrt{k})$. 
\end{proof}

\end{document}